\newcommand{\oplustilde}{\mathbin{\tilde{\oplus}}}
    \let\stdchapter\section
    \renewcommand*\section{%
    \@ifstar{\starchapter}{\@dblarg\nostarchapter}}
    \newcommand*\starchapter[1]{%
        \stdchapter*{#1}
        \thispagestyle{fancy}
        \markboth{\MakeUppercase{#1}}{}
    }
    \def\nostarchapter[#1]#2{%
        \stdchapter[{#1}]{#2}
        \thispagestyle{fancy}
    }
\newtheorem{theorem}{Theorem}[section]
\newtheorem*{theorem*}{Theorem}
\newtheorem{lemma}[theorem]{Lemma}
\newtheorem{proposition}[theorem]{Proposition}
\theoremstyle{definition}
\newtheorem{definition}[theorem]{Definition}
\theoremstyle{corollary}
\newtheorem{corollary}[theorem]{Corollary}
\theoremstyle{remark}
\newtheorem{remark}[theorem]{Remark}
\theoremstyle{conclusion}
\title{\bf Algebraic structures and Hamiltonians from the equivalence classes of 2D conformal algebras}
\author{\large Ian Marquette \footnote{I.Marquette@latrobe.edu.au}, 
Junze Zhang \footnote{junze.zhang@uqconnect.edu.au, Corresponding author} and Yao-Zhong Zhang \footnote{yzz@maths.uq.edu.au} }
\affil{School of Mathematics and Physics, The University of Queensland \\ Brisbane, QLD 4072, Australia}
\affil{Department of Mathematical and Physical Sciences, La Trobe University \\Bendigo, VIC 3552, Australia}
\begin{document}

\maketitle

\begin{abstract}
\noindent The construction of superintegrable systems based on Lie algebras and their universal enveloping algebras has been widely studied over the past decades. However, most constructions rely on explicit differential operator realisations and Marsden-Weinstein reductions. In this paper, we develop an algebraic approach based on the subalgebras of the 2D conformal algebra $\mathfrak{c}(2)$. This allows us to classify the centralisers of the enveloping algebra of the conformal algebra and construct the corresponding Hamiltonians with integrals in algebraic form. It is found that the symmetry algebras underlying these algebraic Hamiltonians are six-dimensional quadratic algebras. The Berezin brackets and commutation relations of the quadratic algebraic structures are closed without relying on explicit realisations or representations. We also give the Casimir invariants of the symmetry algebras.
 Our approach provides algebraic perspectives for the recent work by Fordy and Huang on the construction of superintegrable systems in the Darboux spaces. 
\end{abstract}

\section{Introduction}
Superintegrable systems have been studied from different perspectives over the years, encompassing approaches from classical and quantum mechanics, representation theory, and mathematical physics \cite{MR3119484,MR2804560,MR3863299,MR4071113,MR4584218,MR4644061}.  In classical mechanics, a system with a Hamiltonian $\mathcal{H}$ is superintegrable in a phase space with dimension $2n$ if there are at least $2n-k$ globally functional independent integrals of motion that Poisson commutes with $\mathcal{H}$, where $1 \leq k \leq n-1$. In quantum mechanics, superintegrability implies the presence of $2n-k$ quantum observables that commute with the quadratic operator $\hat{\mathcal{H}}$.  It has been found that symmetry algebra approaches can be used to analyse physical systems more effectively than other approaches \cite{MR2804560,post2024racah,kuru2023quantum,MR1814439,MR2337668,MR2226333}. It was shown how Lie algebras, their realisations, and reductions allow one to obtain superintegrable Hamiltonians in the context of classical mechanics \cite{Marsden1974,Miller1981,MR0460751}. For instance, 
in \cite{Calzada2006,delOlmo1993,MR1170508,MR372124} various superintegrable systems on spheres and pseudo-spheres have been obtained. However, 
most of the previous studies rely on the use of explicit differential operator realisations, the Marsden-Weinstein reductions, and connections with special functions \cite{MR3702572,MR2384724,MR2023556}.

Only very recently it was found that an entirely algebraic approach to superintegrable systems can be established by realizing that the integrals of the systems are as close to polynomial algebras in the enveloping algebras of certain Lie algebras. The approach was successfully implemented for $\mathfrak{su}(3)$ \cite{Correa2020} and $ \mathfrak{gl}(3)$ \cite{CampoamorStursberg2021} by applying a scheme based on commutants and algebraizations. It is also extended to commutants regarding the Cartan subalgebras of $\mathfrak{sl}(n)$ \cite{CampoamorStursberg2023} and certain non-semisimple algebras \cite{CampoamorStursberg2022}. In this paper, we generalise the procedure further and present the construction of commutants concerning subalgebras, the corresponding algebraic superintegrable Hamiltonians, and their integrals of motion. This allows us to propose a new scheme for classifying superintegrable systems by classifying subalgebras and related commutants of a given Lie algebra.

Recently it was realised that conformal algebras play an important role in classifying superintegrable systems 
in 2-dimensional spaces \cite{MR2804560,MR2143019,MR2143027,MR3116192}. In the work by Fordy and Huang \cite{MR3988021,MR4088503,MR4258337}, through the use of specific subalgebras, their Casimir operators and differential realisations, it was shown that superintegrable systems in $2$D and $3$D conformally flat spaces are related to superintegrable systems in Darboux spaces.
In particular, the conformal symmetries of the $2$D Euclidean metric were applied to construct new free superintegrable systems in the 2D Darboux spaces. In this paper, we will examine various Abelian and non-Abelian subalgebras of the 2D conformal algebra $\mathfrak{c}(2)$ and propose a new scheme for classifying integrable and superintegrable systems through explicit construction of commutants relative to the subalgebras. 

Our construction is independent of the metric of the space. Using polynomials up to degree $6$ in the enveloping algebra of $\mathfrak{c}(2)$,  we obtain the corresponding algebraic Hamiltonians, integrals, and polynomial symmetry algebras.
Here, we use only the $Ad(C(2))$-equivalence to each subalgebra of $\mathfrak{c}(2)$ to build the polynomial algebras, where is $C(2)$ the corresponding conformal group of $\mathfrak{c}(2).$ As an example of the systematic scheme, we look at one type of commutant involving the Casimir of a subalgebra and obtain the corresponding algebraic superintegrable Hamiltonian. Our approach establishes a framework linking superintegrable systems to the theory of Lie algebras, their subalgebras, and enveloping algebras.  

The structure of the paper is as follows. In Section $\ref{2}$, we introduce a general approach for constructing polynomial algebras from the centralisers of subalgebras $\mathfrak{a}$ of a finite-dimensional Lie algebra $\mathfrak{g}$ in the universal enveloping algebra $\mathcal{U}(\mathfrak{g})$, and the corresponding algebraic superintegrable systems based on these reduction chains.   Then in Section $\ref{3},$ we focus on the 2D conformal algebra $\mathfrak{c}(2)$. Since $\mathfrak{c}(2)$ is an algebra isomorphic to the simple Lie algebra $\mathfrak{so}(3,1),$ we will use the classification of $\mathfrak{so}(3,1)$ given in \cite{MR372124} to identify the subalgebras of $\mathfrak{c}(2)$. Section \ref{4} shows that a new polynomial algebra arises from the centraliser of each subalgebra in the universal enveloping algebra of $\mathfrak{c}(2)$. Similarly, we determine finitely-generated Poisson algebras hidden in the symmetric algebra $\mathcal{S}(\mathfrak{c}(2)).$ Moreover, it turns out that the non-Abelian centralisers can be realised as certain quadratic algebras. Moreover, the Casimir invariants are also obtained. Under an appropriate realisation, some of the quadratic algebras can be identified with the Racah-type algebras. To relate to the work by Fordy and Huang in the $2$D conformal space, we take the Casimir $K_\mathfrak{a}$ of each subalgebra of $\mathfrak{c}(2)$ as the Hamiltonian. As can be seen in Section $\ref{4.5}$, the symmetry algebras generated by the centralisers $\mathcal{C}_{\mathcal{U}(\mathfrak{g})}(K_\mathfrak{a})$ are different from those of the centralisers $\mathcal{C}_{\mathcal{U}(\mathfrak{g})}( \mathfrak{a})$. In Section $\ref{5}$, we discuss the relationship between the quadratic algebras defined in Subsection $\ref{4.1}$ and the general quadratic algebras defined in \cite{Marquette:2023wxn}. Finally, Section $\ref{6}$ provides the conclusions of the paper.

\section{Superintegrable systems from the centralizers of $\mathcal{U}(\mathfrak{g})$}
\label{2}

We begin with foundational terms in the Poisson-Lie framework; see \cite[Chapter 7]{MR2906391} for further details. Let $ \mathfrak{g} $ be a $n$-dimensional Lie algebra over a characteristic zero field $\mathbb{F}$ with an ordered basis $\beta_\mathfrak{g} = \big(X_1,\ldots,X_n\big)$ satisfying the commutator relations $[X_i,X_j]_o = \sum_{k = 1}^n C_{ij}^k X_k$, and let $ \mathfrak{g}^* $ be its dual Lie algebra.  For any $f,g \in C^\infty(\mathfrak{g}^*)$ and $\xi \in \mathfrak{g}^*$, one can define the Poisson-Lie bracket in $\mathfrak{g}^*$ at $\xi$ as follows: \begin{align}
    \{f,g\}_o(\xi)  = \langle \xi, [d_\xi f,d_\xi g]_o \rangle ,\label{eq:poiss}
\end{align} where $d_\xi f, d_\xi g \in  T_\xi^* \mathfrak{g}^* \cong \left(\mathfrak{g}^*\right)^*  \cong \mathfrak{g}$, and $\langle \cdot,\cdot \rangle$ is the dual pair between $\mathfrak{g}^*$ and $\mathfrak{g}$. We now consider how this structure works in coordinates of $\mathfrak{g}^*$. Define the corresponding  coordinate functions $  \big(x_1,\ldots,x_n\big)$ on $\mathfrak{g}^*$ by $x_i(\xi) = \langle \xi,X_i\rangle$ with $i = 1,\ldots,n$. For any coordinate functions $x_i,x_j $, using the general formulas for the Poisson-Lie brackets \eqref{eq:poiss}, one can show that the Poisson bracket of the coordinate functions is given by \begin{align}
    \{x_i,x_j\}_o = \sum_{k=1}^n C_{ij}^k x_k.
\end{align}   In this case, the restriction of the Poisson bracket to the coordinates functions $x_j$ ($1 \leq j \leq n$) coincides with the Lie brackets on $\mathfrak{g}$.

Let $\mathcal{U}(\mathfrak{g})$ and $\mathcal{S}(\mathfrak{g})$ be the universal enveloping algebra and the symmetric algebra of $\mathfrak{g}$ , respectively. From the universal property, $\mathcal{U}(\mathfrak{g})$ is closed under the commutator $[\cdot,\cdot]_o$. On the other hand, since $\mathcal{S}(\mathfrak{g}) \subset C^\infty(\mathfrak{g}^*)$ is a Poisson subalgebra, the Poisson-Lie bracket defined in \eqref{eq:poiss} also admits on $\mathcal{S}(\mathfrak{g})$. Using the universal property of $\mathcal{S}(\mathfrak{g})$, we identify $\mathcal{S}(\mathfrak{g})$ as the polynomial algebra $\mathbb{F}[\mathfrak{g}^*] := \mathbb{F}[x_1,\ldots,x_n]$ through an algebra isomorphism, where the coordinate functions $x_j : \mathfrak{g}^* \rightarrow \mathbb{R} $ are realised as the generators of $\mathfrak{g}$. In what follows, we always refer the symmetric algebra $\mathcal{S}(\mathfrak{g})$ to the polynomial algebra $\mathbb{F}[x_1,\ldots,x_n]$.

We now define the commutants relative to the subalgebras of $\mathcal{U}(\mathfrak{g})$ and $\mathcal{S}(\mathfrak{g})$.  

\begin{definition}
\label{2.1}
 The $\textit{commutants}$ relative to the subalgebras $ \mathfrak{a} \subset \mathfrak{g}$ and $\mathfrak{a}^* \subset \mathfrak{g}^*$ in $\mathcal{U}(\mathfrak{g})$ and $\mathcal{S}(\mathfrak{g})$, denoted respectively as $\mathcal{C}_{\mathcal{U}(\mathfrak{g})}(\mathfrak{a})$   and $\mathcal{C}_{\mathcal{S}(\mathfrak{g})}(\mathfrak{a})$, are defined as the centralisers of $\mathfrak{a} $ in $\mathcal{U}(\mathfrak{g})$ and $\mathcal{S}(\mathfrak{g})$,
\begin{align*}
      \mathcal{C}_{\mathcal{U}(\mathfrak{g})}(\mathfrak{a}) &= \left\{ Y \in \mathcal{U}(\mathfrak{g}): [X,Y]_o = 0  \quad \forall X \in \mathfrak{a}\right\},\\
      \mathcal{C}_{\mathcal{S}(\mathfrak{g})}(\mathfrak{a})  & = \{y \in \mathcal{S}(\mathfrak{g}): \{x,y\}_o = 0  \quad \forall x \in \mathfrak{a}^*\},
  \end{align*} where $[\cdot,\cdot]_o$ is the usual Lie bracket that satisfies $[X_i,X_j]_o = C_{ij}^k X_k$ and $\{\cdot,\cdot\}_o$ is the corresponding the canonical Poisson-Lie bracket in $\mathfrak{g}^*.$
\end{definition} 

\begin{remark}
\label{re2.1}
(i) For any $h \in  \mathbb{N}_0 :=\mathbb{N}\cup \{0\},$ we can define $\mathcal{U}_h(\mathfrak{g}) = \mathrm{span} \left\{X_1^{i_1} \cdots X_n^{i_n}: i_1+ \cdots + i_n \leq  h\right\}$ as the subspace of $\mathcal{U}(\mathfrak{g}).$ We can then define the degree $\delta$ of an arbitrary element $P \in \mathcal{U}(\mathfrak{g})$ as $\delta = \mathrm{inf} \{ h: P\in\mathcal{U}_h(\mathfrak{g})\}.$ Furthermore, there is a natural filtration in $\mathcal{U}(\mathfrak{g}) = \bigcup_{h \in \mathbb{N}_0} \mathcal{U}_h(\mathfrak{g}) $ such that   \begin{align}
    \mathcal{U}_{0}(\mathfrak{g}) = \mathbb{F}, \quad \mathcal{U}_h(\mathfrak{g})\mathcal{U}_k(\mathfrak{g}) \subset \mathcal{U}_{h+k}(\mathfrak{g}), \quad \mathcal{U}_h(\mathfrak{g}) \subset \mathcal{U}_{h+k}(\mathfrak{g}),\quad \forall k \geq 1. \label{eq:fil}
\end{align}

(ii)  Notice that a linear basis in $  \mathcal{C}_{\mathcal{U}(\mathfrak{g})}(\mathfrak{a})$ is not necessarily canonical. In other words, there is no guarantee for the existence of finitely generated Poisson algebras associated with a (arbitrary) finite-dimensional Lie algebra. It was shown that in \cite[Section 2.4.13]{MR1451138}, if $\mathfrak{g}$ is a semisimple or reductive Lie algebra, then the commutant $\mathcal{C}_{\mathcal{U}(\mathfrak{g})}(\mathfrak{a})$ is a finitely generated module on a polynomial ring whenever the subalgebra $\mathfrak{a}$ is reductive in $\mathfrak{g}$. Note also that a linear basis of $\mathcal{C}_{\mathcal{U}(\mathfrak{g})}(\mathfrak{a})$ is not necessarily algebraically independent, but only linearly independent. Moreover, from the calculation in Section \ref{4}, we can see that the algebraic structures of centraliser subalgebras in Definition $\ref{2.1}$ are heavily dependent on the commutator relations of $\mathfrak{g}.$ 

\end{remark}

Note that $\mathcal{S}(\mathfrak{g}) \cong \mathbb{F}[\mathfrak{g}^*]$ as mentioned above and so in the following  each $p \in \mathcal{S}(\mathfrak{g})$ is expressed as a polynomial in terms of $x_1,\ldots,x_n$. 

\begin{definition}
\label{2.3} \cite{MR0760556}
The adjoint action of  $\mathfrak{g}$ and co-adjoint action of  $\mathfrak{g}^*$ on the universal enveloping algebra $\mathcal{U}(\mathfrak{g})$ and the symmetric algebra $\mathcal{S}(\mathfrak{g})$ are given, respectively,  by \begin{align}
    ad(X_j): \quad & P\left( X_1,\ldots,X_n\right) \in \mathcal{U}(\mathfrak{g})  \mapsto  [X_j,P]_o \in \mathcal{U}(\mathfrak{g}) , \\
    ad^*(X_j): \quad &  p(x_1,\ldots,x_n) \in \mathcal{S}(\mathfrak{g})   \mapsto \{x_j,p  \}_o  = \tilde{X}_j(p) = \sum_{l,k = 1}^n C_{jk}^lx_l \dfrac{\partial p}{\partial x_k} \in \mathcal{S}(\mathfrak{g}), \label{eq:dual}
  \end{align} 
where $\tilde{X}_j = \sum_{l,k = 1}^n C_{jk}^lx_l \dfrac{\partial }{\partial x_k}$ is the canonical representation of $\mathfrak{g}$ using linear vector fields in $\mathfrak{g}^*$.
\end{definition}
\begin{remark}
    
Note that the symmetric algebra $\mathcal{S}(\mathfrak{g})$ is an associative polynomial Poisson subalgebra. It can be shown that the Poisson-Lie bracket as described in \eqref{eq:poiss} takes the following coordinate expression:
  \begin{align}
     \{p,q\}_o  = \sum_{l=1}^n C_{jk}^l x_l \dfrac{\partial p}{\partial x_j} \dfrac{\partial q}{\partial x_k} \quad \forall p,q \in \mathcal{S}(\mathfrak{g}) , \label{eq:Poisson}
 \end{align}  where $ (x_1,\ldots,x_n)$ is the  coordinate functions of $\mathfrak{g}^*$.  For more details, see, for instance, \cite[Chapter 6, Section 6.1]{olver1993applications}.
\end{remark}

We now construct the symmetry algebras from the subalgebras $\mathfrak{a}$ of $\mathfrak{g}$.  Without loss of generality, we denote the generators of $\mathfrak{a}$ by $X_{\ell_1},\ldots,X_{\ell_s}$, where $s = \dim \mathfrak{a}$ and $\{\ell_1,\ldots,\ell_s\}$ is a subset of $\{1,\ldots,n\}$. 
Similar constructions were applied in \cite{MR191995,MR4660510,MR1520346}.  In view of Definition $\ref{2.1}$ and $\eqref{eq:dual},$  the commutant $\mathcal{C}_{\mathcal{S}(\mathfrak{g})}(\mathfrak{a})$ is obtained as homogeneous polynomial solutions of the system of partial differential equations (PDEs in short) \begin{align}
\tilde{X}_j(p_h)(x_1,\ldots,x_n) = \{x_j,p_h\}_o  =   \sum_{l,k = 1}^n C_{jk}^lx_l \dfrac{\partial p_h}{\partial x_k} = 0, \quad  j = \ell_1,\ldots,\ell_s ,\label{eq:func}
\end{align} where  $p_h(x_1,\ldots,x_n)$ are homogeneous polynomials of degree $h \geq 1$ with the generic form\begin{align}
    p_h(x_1,\ldots,x_n) = \sum_{i_1 + \cdots + i_n  =h}  \Gamma_{i_1,\ldots,i_n}\, x_1^{i_1} \cdots x_n^{i_n} \in \mathcal{S}(\mathfrak{g}). \label{eq:ci}
\end{align} Notice that, depending on the structure of the Lie algebra $\mathfrak{g},$  solutions to $\eqref{eq:func}$ may not be polynomials. See for instance, \cite{MR4660510} and the reference therein.  In this work, we will assume that the system $\eqref{eq:func}$ admits an integrity basis formed by polynomials. In order to form a finitely-generated algebra, we need to discard all the decomposable solutions from the solution space of $\eqref{eq:func}$. Here a polynomial $p \in \mathcal{S}(\mathfrak{g}) $ is said to be $\textit{decomposable}$ if there exists a polynomial $p'   \in \mathcal{S}(\mathfrak{g})   $ with $p' \neq p$ such that $p  \equiv 0 \mod p' .$ 

 We now present an algorithm for the construction of linearly independent and indecomposable polynomials in an abstract setting. For similar approaches, see, for example, \cite{CampoamorStursberg2023,MR4660510}.  We start from homogeneous polynomial solutions of degree one of $\eqref{eq:func}$ that belong to the centraliser of $\mathfrak{a}$ in $\mathfrak{g}.$  We denote the set of all linearly independent degree one monomials by 
 $$ \textbf{q}_1 = \left\{p_{1,1},\ldots,p_{1,m_1}\right\}. $$ We then construct quadratic solutions of $\eqref{eq:func}.$ Any singular-plural are not agreed of the elements of the set $\textbf{q}_1$ need to be discarded. In this way, we can construct the set of all linearly independent and indecomposable quadratic homogeneous polynomials \begin{align*}
     \textbf{q}_2 = \left\{p_{2,1},\ldots,p_{2,m_2}\right\}.
 \end{align*}  Repeating the process above, all possible indecomposable homogeneous polynomials up to a certain degree $g $, deduced from $\eqref{eq:func},$ can be obtained by composition  \begin{align}
   \textbf{Q}_g = \bigcup_{h=1}^g\textbf{q}_h , \label{eq:bas}
\end{align} where the set $\textbf{q}_h =\{p_{h,1} ,\ldots,p_{h,m_h} \}$ contains all the indecomposable homogeneous polynomials of degree $h$. That is, in the construction of $\eqref{eq:bas}$, the polynomial solutions that can be written as product of indecomposable homogeneous polynomials of lower degrees must be discarded. In what follows, we will assume that $g$ is the maximum degree of a homogeneous polynomial in the set of all commutants $\textbf{Q}_g$. By maximum degree, we mean that any homogeneous polynomials with degree greater than $g$ are decomposable and are not contained in $\textbf{Q}_g.$ In this way, we are able to determine a finite generating set $\textbf{Q}_g$  containing $m_1 + \cdots + m_g$ number of indecomposable monomials. Let $\textbf{Alg} \langle  \textbf{Q}_g \rangle$ denote the algebra generated by the set $\textbf{Q}_g.$


We now define a Poisson bracket on the finitely-generated algebra $\textbf{Alg} \langle  \textbf{Q}_g \rangle$. For any $p_h \in \textbf{q}_h$ and $p_\ell \in \textbf{q}_\ell$,  define the bilinear map $\{\cdot,\cdot\} :  \textbf{Alg} \langle  \textbf{Q}_g \rangle \times \textbf{Alg} \langle  \textbf{Q}_g \rangle\rightarrow \textbf{Alg} \langle  \textbf{Q}_g \rangle$ by \begin{align}
    \{p_h,p_\ell\} =  \sum_{h_1+ \cdots + h_j \leq \ell + h -1} \Gamma_{m_{h_1} \ldots m_{h_j}}^{h \ell} p_{h_1,m_{h_1}} \cdots p_{h_j,m_{h_j}}, \label{eq:newpoi}
\end{align} where $1\leq h_j\leq g$ for $1\leq j\leq k$ and  $\Gamma_{m_{h_1} \ldots m_{h_j}}^{h \ell}$ are constants. Notice that we can verify that for all the examples provided in the later sections of this paper the Jacobi identity holds for the bilinear map $\eqref{eq:newpoi}$.  This operation gives the polynomial solution set of \eqref{eq:func} a Poisson structure, which is an induced Poisson bracket from the Poisson-Lie bracket defined in $\eqref{eq:Poisson}$. 
 The $\textit{degree of a}$ $\textit{polynomial Poisson algebra}$ is defined as  \begin{align}
    d := \max_{h_1+ \cdots + h_j\leq \ell + h -1} \mathcal{N}\left(p_{h_1,m_{h_1}} \cdots p_{h_j,m_{h_j}}\right), \label{eq:degree}
\end{align}where $\mathcal{N}\left(p_{h_1,m_{h_1}} \cdots p_{h_j,m_{h_j}}\right)$ is the number of $p_{h,m_h}$ appearing in $p_{h_1,m_{h_1}} \cdots p_{h_j,m_{h_j}} $, i.e., in the right hand side of $\eqref{eq:newpoi}$.  The algebra $\left(\textbf{Alg} \langle  \textbf{Q}_g \rangle,\{\cdot,\cdot\}\right)$ is a finitely-generated Poisson algebra of degree $d$. In what follows it will simply denoted as  $\mathcal{Q}(d)$. 

We can show that the centralisers in $ \mathcal{U}(\mathfrak{g})  $ are also finitely generated. There is a well-defined canonical linear isomorphism, see, for instance, \cite{MR432819,MR2515551} \cite[Section 2.12]{gtp}, \begin{align}
      \Lambda: \mathcal{S}(\mathfrak{g}) \rightarrow \mathcal{U}(\mathfrak{g}), \text{ } 
     p\left(x_1,\ldots,x_n\right)   \mapsto P(X_1,\ldots,X_n): = \Lambda\left(p\left(x_1,\ldots,x_n\right)\right)   ,  
 \label{eq:symme}
\end{align} in terms of the canonical basis of $\mathcal{S}(\mathfrak{g})$ is defined by \begin{align}
    \Lambda(x_{i_1} \cdots x_{i_n}) = \frac{1}{n!} \sum_{\sigma \in S_n} X_{i_{\sigma(1)}} \cdots X_{i_{\sigma(n)}}  
\end{align} with $S_n$ being the permutation group  on the set $\{1,\ldots,n\}$,   such that $\Lambda\big( \tilde{X} (p(x_1,\ldots,x_n) \big) = [X,\Lambda(p)]_o$ for any $X \in \mathfrak{g}$. Then for all $k,$ $\Lambda(p_k(x_1,\ldots,x_n)) =P_k(X_1,\ldots,X_n)$ will be non-commutative, functionally independent polynomials in $\mathcal{U}(\mathfrak{g}),$ where $g$ is the maximum degree of the generators. Recall that $\mathcal{S}(\mathfrak{g})$ is naturally a graded algebra \begin{align*}
    \mathcal{S} (\mathfrak{g}) = \bigoplus_{k \geq 0} \mathcal{S}^k(\mathfrak{g}),
\end{align*} where $\mathcal{S}^k(\mathfrak{g}) = \mathrm{span} \left\{x_1^{i_1} \cdots x_n^{i_n}: i_1 + \ldots + i_n =k\right\}$. Here $\mathcal{S}^0(\mathfrak{g}) = \mathbb{F}$. From the natural filtration relation \eqref{eq:fil}, one can define a vector space $\mathcal{U}^k (\mathfrak{g}) := \mathcal{U}_k (\mathfrak{g}) /\mathcal{U}_{k-1} (\mathfrak{g})  $. Note that $\mathcal{U}^k(\mathfrak{g})$ are not subalgebras since the product of two homogeneous elements of degree $k$ and $\ell$ lies in the degree $k+ \ell$. That is, for any $p \in\mathcal{S}^{k}(\mathfrak{g}) $ and $q \in \mathcal{S}^{\ell}(\mathfrak{g}),$ we have \begin{align*}
    \deg \left( \Lambda(pq)\right) = k + \ell , \qquad   \Lambda(pq) -  \Lambda(p) \Lambda(q) \in \mathcal{U}_{k+ \ell-1}(\mathfrak{g}), 
\end{align*} where $\Lambda (pq) = \Lambda (p) \Lambda(q) + \text{lower order terms}$. We now pass from the filtered algebra $\mathcal{U}(\mathfrak{g})$ to its associated graded algebra \begin{align}
    \mathrm{gr} \, \mathcal{U} (\mathfrak{g}) = \bigoplus_{k \geq 0 } \mathcal{U}^k(\mathfrak{g}),
\end{align}  where $\mathcal{U}^0(\mathfrak{g}) = \mathbb{F}$. Using the Poincar\'e-Birkhoff-Witt (PBW) theorem, the linear isomorphism $\Lambda$ in \eqref{eq:symme} induces an algebra isomorphism $\tilde{\Lambda}$ between $\mathcal{S}(\mathfrak{g})$ and $\mathrm{gr} \, \mathcal{U}(\mathfrak{g})$. Let $\tilde{\Lambda}_k = \tilde{\Lambda}\vert_{\mathcal{S}^k(\mathfrak{g})} $. We then have an algebra isomorphism between $\mathcal{U}^k(\mathfrak{g})  $ and $ \mathcal{S}^k(\mathfrak{g}) $.  Define \begin{align*}
   \mathcal{C}_{\mathcal{U}^k(\mathfrak{g})}(\mathfrak{a})  &= \left\{ P \in \mathcal{U}^k(\mathfrak{g}): [X,P]_o = 0  \quad \forall X \in \mathfrak{a}\right\},\\
    \mathcal{C}_{\mathcal{S}^k(\mathfrak{g})}(\mathfrak{a})   & = \left\{p \in \mathcal{S}^k(\mathfrak{g}): \{x,p\}_o = 0  \quad \forall x \in \mathfrak{a}^*\right\}.
\end{align*}  The construction above induces the next proposition.
\begin{proposition}
\label{2.5}
    Let $\mathcal{C}_{\mathcal{U}^k(\mathfrak{g})}(\mathfrak{a}) :=  \,\mathcal{C}^k(\mathfrak{a}) $ and $ \mathcal{C}_{\mathcal{S}^k(\mathfrak{g})}(\mathfrak{a}) := \, \mathcal{C}_k(\mathfrak{a})  $. Then $ \mathcal{C}_{\mathcal{U} (\mathfrak{g})}(\mathfrak{a}) = \bigoplus_{k \geq 0} \mathcal{C}^k(\mathfrak{a})$ and $\mathcal{C}_{\mathcal{S} (\mathfrak{g})}(\mathfrak{a}) = \bigoplus_{k \geq 0}   \mathcal{C}_k(\mathfrak{a}). $ In particular, $\tilde{\Lambda}_k$ is an algebra isomorphism between $ \mathcal{C}^k(\mathfrak{a})  $ and $\mathcal{C}_k(\mathfrak{a})$.
\end{proposition}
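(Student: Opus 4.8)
The plan is to establish the statement for the symmetric algebra $\mathcal{S}(\mathfrak{g}^*)$ first, where the grading is literally present, and then transport everything to $\mathcal{U}(\mathfrak{g})$ along the symmetrisation map $\Lambda$. On the $\mathcal{S}(\mathfrak{g}^*)$ side the key observation is that each vector field $\tilde X_j = \sum_{k,l} C_{jk}^l x_l\,\partial_{x_k}$, for $X_j$ in a basis of $\mathfrak{a}$, is a \emph{degree-preserving} derivation: it differentiates once and multiplies by one linear form, so it maps the homogeneous component $\mathcal{S}^h(\mathfrak{g}^*)$ into itself. Hence, writing $p\in\mathcal{S}(\mathfrak{g}^*)$ in its homogeneous decomposition $p=\sum_h p_h$ with $p_h\in\mathcal{S}^h(\mathfrak{g}^*)$, the relation $\{x_j,p\}=\sum_h\{x_j,p_h\}$ is again homogeneous component by component; since $\mathcal{S}(\mathfrak{g}^*)=\bigoplus_h\mathcal{S}^h(\mathfrak{g}^*)$ is a direct sum, $\{x_j,p\}=0$ for all $1\le j\le s$ is equivalent to $\{x_j,p_h\}=0$ for all $j$ and all $h$. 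This immediately gives $\mathcal{C}_{\mathcal{S}(\mathfrak{g}^*)}(\mathfrak{a}^*)=\bigoplus_k\mathcal{C}^k(\mathfrak{a}^*)$.

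Next I would pull this grading back to $\mathcal{U}(\mathfrak{g})$ using the strata-preserving isomorphism $\Lambda$ recalled in $\eqref{eq:symme}$. The subspaces $\mathcal{U}^{\langle k\rangle}(\mathfrak{g}):=\Lambda_k\!\left(\mathcal{S}^k(\mathfrak{g}^*)\right)$ give a vector-space grading $\mathcal{U}(\mathfrak{g})=\bigoplus_k\mathcal{U}^{\langle k\rangle}(\mathfrak{g})$, with $\mathcal{U}_k(\mathfrak{g})=\bigoplus_{j\le k}\mathcal{U}^{\langle j\rangle}(\mathfrak{g})$ recovering the PBW filtration; under this identification $\mathcal{C}_k(\mathfrak{a})$ is read as the degree-exactly-$k$ piece $\mathcal{C}_{\mathcal{U}(\mathfrak{g})}(\mathfrak{a})\cap\mathcal{U}^{\langle k\rangle}(\mathfrak{g})$ (equivalently, the associated graded of the filtered centralisers $\mathcal{C}_{\mathcal{U}_k(\mathfrak{g})}(\mathfrak{a})$). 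The intertwining identity $\Lambda(\tilde X(p))=[X,\Lambda(p)]$ for $X\in\mathfrak{g}$ shows that $\mathrm{ad}(X)$ maps $\mathcal{U}^{\langle k\rangle}(\mathfrak{g})$ into itself, so the same direct-sum argument as in the previous paragraph yields $\mathcal{C}_{\mathcal{U}(\mathfrak{g})}(\mathfrak{a})=\bigoplus_k\mathcal{C}_k(\mathfrak{a})$.

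For the final assertion I would argue as follows. The restriction $\Lambda_k:\mathcal{S}^k(\mathfrak{g}^*)\to\mathcal{U}^{\langle k\rangle}(\mathfrak{g})$ is a linear bijection, and by the intertwining property it carries $\ker\!\left(\tilde X_j|_{\mathcal{S}^k(\mathfrak{g}^*)}\right)$ onto $\ker\!\left(\mathrm{ad}(X_j)|_{\mathcal{U}^{\langle k\rangle}(\mathfrak{g})}\right)$ for each basis element $X_j$ of $\mathfrak{a}$; intersecting over $j=1,\dots,s$ gives $\Lambda_k\!\left(\mathcal{C}^k(\mathfrak{a}^*)\right)=\mathcal{C}_k(\mathfrak{a})$, so $\Lambda_k$ is the claimed vector-space isomorphism. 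Summing over $k$ also re-derives the $\mathcal{U}(\mathfrak{g})$-decomposition from the $\mathcal{S}(\mathfrak{g}^*)$-one and shows the two centralisers are isomorphic as graded vector spaces.

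The hard part will not be any computation — beyond noting that $\tilde X_j$ has degree zero, no structure-constant manipulation is required — but rather the bookkeeping around the filtered versus graded structure of $\mathcal{U}(\mathfrak{g})$: one must commit to interpreting $\mathcal{C}_k(\mathfrak{a})$ as a homogeneous component of degree exactly $k$, which is legitimate precisely because $\Lambda$ transports the honest grading of $\mathcal{S}(\mathfrak{g}^*)$ to a vector-space grading of $\mathcal{U}(\mathfrak{g})$ that is compatible with the adjoint action. With that convention fixed, the whole statement is a formal consequence of $\Lambda$ being a degree-preserving $\mathfrak{g}$-module isomorphism, a fact already recorded in the text.
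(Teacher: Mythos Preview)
Your proposal is correct and follows precisely the line the paper intends: the proposition is stated without proof, introduced by ``The construction above induces the following proposition'', so its justification is meant to be read off from the preceding discussion of the symmetrisation map $\Lambda$, its strata-preserving property, and the intertwining relation $\Lambda(\tilde X(p))=[X,\Lambda(p)]$. You have simply made explicit the argument the paper leaves implicit, including the key observation that each $\tilde X_j$ is a degree-zero derivation on $\mathcal{S}(\mathfrak{g}^*)$.
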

 
Using the symmetrization map $\eqref{eq:symme}$, the commutant in the enveloping algebra $\mathcal{U}(\mathfrak{g})$ is obtained as $\mathcal{C}_{\mathcal{U}(\mathfrak{g})}(\mathfrak{a}) = \Lambda \left(\mathcal{C}_{\mathcal{S}(\mathfrak{g})}(\mathfrak{a})\right)$. Therefore, the results we obtained in the commutative setting can be therefore mapped to the non-commutative Lie algebra setting. 
 Following a procedure similar to the one leading to $\eqref{eq:bas}$, we can define the set  \begin{align}
    \hat{\textbf{Q}}_g = \bigcup_{h=1}^g \hat{\textbf{q}}_h \label{eq:14}
\end{align}  
and construct the corresponding finitely-generated algebra $ \textbf{Alg} \langle \hat{\textbf{Q}}_g \rangle .$ Here $g$ is the maximal degree of homogeneous representatives and 
$\hat{\textbf{q}}_h = \left\{P_{h,1} ,\ldots, P_{h,m_h} \right\} = \Lambda \big(\textbf{q}_h\big)$  consists of all the linearly independent and indecomposable homogeneous representatives of degree $h$, i.e., $P_{h,j} = \Lambda(p_{h,j})$ for all $1 \leq j \leq m_h$.  For any $P_h \in \hat{\textbf{q}}_h$ and $P_\ell \in \hat{\textbf{q}}_\ell,$ we define the bilinear operator $[\cdot,\cdot]: \textbf{Alg}\langle \hat{\textbf{Q}}_g\rangle \times \textbf{Alg}\langle \hat{\textbf{Q}}_g\rangle \rightarrow \textbf{Alg}\langle \hat{\textbf{Q}}_g\rangle  $ by  \begin{align} 
    [P_h ,P_\ell ]   = \sum_{h_1+ \cdots + h_j \leq \ell + h -1} \Gamma_{m_{h_1} \ldots m_{h_j}}^{h \ell} P_{h_1,m_{h_1}} \cdots P_{h_j,m_{h_j}}    , \label{eq:rec}
\end{align}  where $\Gamma_{m_{h_1} \ldots m_{h_j}}^{h \ell} \in \mathbb{F}.$ Here, again, $d$ is the $\textit{degree}$ of the quantized Poisson polynomial algebra $\eqref{eq:rec}$ given by $$d := \max_{h_1+ \cdots + h_j\leq \ell + h -1} \mathcal{N}(P_{h_1,m_{h_1}} \cdots P_{h_j,m_{h_j}} ).$$ 
Then $\left(\textbf{Alg}\langle \hat{\textbf{Q}}_g\rangle,[\cdot,\cdot]\right)$ forms a (finitely-generated) associative algebra \cite[Section 2.3]{MR4660510}, which is denoted by $\hat{\mathcal{Q}}(d).$  It is verified in \cite[Section 2.4.13]{MR1451138} that $\hat{\mathcal{Q}}(d)$ possesses the property of being finitely-generated. Notice that, for all the examples provided in the later sections of this manuscript, the associativity of $\hat{\mathcal{Q}}(d)$ can be established through the verification of the Jacobi identity.



From the construction above, we can define algebraic Hamiltonians and their corresponding superintegrable systems in $\mathcal{U}\left(\mathfrak{g}\right)$.

\begin{definition}
\label{H} \cite{MR2515551}.
  Let $\mathfrak{a} \subset \mathfrak{g}$ be a Lie subalgebra with an ordered basis $\beta_\mathfrak{a} = (X_{\ell_1},\ldots,X_{\ell_s}) $, and let $K_t$ be the Casimir invariants of $\mathfrak{g}$. Let $\hat{\textbf{Q}}_g$ be the same as defined in $\eqref{eq:14}.$ An algebraic Hamiltonian with respect to $\mathfrak{a}$ is given by \begin{align}
      \hat{\mathcal{H}} = \sum_{\ell_1,\ldots,\ell_s}^s \Gamma_{\ell_1 \ldots \ell_s}X_{\ell_1}\cdots X_{\ell_s}+   \sum_t \zeta_t K_t, \label{eq:quHa}
  \end{align} where  $\ell_1,\ldots,\ell_s$ are indices for the generators of $\mathfrak{a},$ $\Gamma_{\ell_1 \ldots \ell_s}$ and $\zeta_t$ are constants. For all algebraically independent generators in $\hat{\mathcal{Q}}_g$, $S = \{\hat{\mathcal{H}}\} \cup \{ P_k\}_{ k \geq 1}^N $ forms an $\textit{algebraic}$ $\textit{superintegrable}$ $\textit{system}$ with the underlying symmetry algebra $ \hat{\mathcal{Q}}(d) $, where $N$ is the number of algebraically independent generators.
\end{definition} 

\begin{remark}
\label{3.7}
(i) Notice that using the construction above, we can determine a unique set of linearly independent and indecomposable polynomials that generate the associative algebra $ \hat{\mathcal{Q}}(d) $. However, by the definition of centralisers, there exists a family of Hamiltonians $\hat{\mathcal{H}}$ in $\mathcal{U}(\mathfrak{a}) $ such that $\big[\hat{\mathcal{H}},\hat{\mathcal{Q}}(d)\big] = 0.$ To make $\hat{\mathcal{H}}$ a quadratic differential operator, we can choose a suitable realisation of the subalgebra $\mathfrak{a}$ such that the first summation in $\hat{\mathcal{H}}$ does not produce differential operators of order greater than 2. 

 (ii) Using the isomorphism $\Lambda$, for any  $p  \in \mathcal{C}_{\mathcal{S}(\mathfrak{g})}(\mathfrak{a})$, there exists a $P = \Lambda(p) \in \mathcal{C}_{\mathcal{U}(\mathfrak{g})}(\mathfrak{a})$ such that $[\hat{\mathcal{H}},P] = \Lambda(\{\mathcal{H},p    \}) = 0,  $ where $\mathcal{H}$ is a free Hamiltonian of the form \begin{align}
      \mathcal{H}  =\sum_{\ell_1,\ldots,\ell_s}^s \Gamma_{\ell_1 \ldots \ell_s}x_{\ell_1}\cdots x_{\ell_s}   + \sum_t \zeta_t \mathcal{C}_t,\label{eq:Hamil}
 \end{align} where  $\ell_1,\ldots,\ell_s$ are indices for the generators of $\mathfrak{a}^*,$ $\Gamma_{\ell_1 \ldots \ell_s}$ and $\zeta_t$ are constants. Here, $\mathcal{C}_t$ is the Casimir invariants of $\mathfrak{g}^*.$ Moreover, the number of functionally independent integrals of motion can be determined from $\eqref{eq:func}$. 
\end{remark}

We now have a closer look at the central elements of the Poisson enveloping algebra $\mathcal{U}\left(\mathcal{Q}(d)\right)$. See, for instance, \cite[Section 2]{MR3659329}.   A $\textit{Casimir}$ of $\mathcal{Q}(d)$ is a polynomial $K$ in $\mathcal{U}\big(\mathcal{Q}(d)\big)$ such that $\{K,p\} = 0$ for any $p \in \mathcal{U}\big(\mathcal{Q}(d)\big).$ Let $\textbf{a}_i = \ \alpha_{i,1} \cdots \alpha_{i,m_i}$ and $\textbf{q}_1^{\textbf{a}_1}\ldots\textbf{q}_g^{\textbf{a}_g} = p_{1,1}^{\alpha_{1,1}} \cdots p_{1,m_1}^{\alpha_{1,m_1}} \cdots p_{g,1}^{\alpha_{g,1}} \cdots p_{g,m_g}^{\alpha_{g,m_g}}$. The generic form of a degree $h$-polynomial $K_{\mathcal{Q}(d)}^h \in \mathcal{U}_h\big(\mathcal{Q}(d)\big)$ has the expression \begin{align}
    K_{\mathcal{Q}(d)}^h = \sum \Gamma_{1 \ldots m_g}\textbf{q}_1^{\textbf{a}_1}\cdots\textbf{q}_g^{\textbf{a}_g} , \text{ }\qquad  \sum_i^g \textbf{a}_i = \sum_i^g \sum_{j=1}^{m_i}\alpha_{i,j} = h, \label{eq:polycas}
\end{align} where $\mathcal{U}_h\left(\mathcal{Q}(d)\right) = \mathrm{span}\left\{p_1^{i_1} \cdots p_g^{i_g}: i_1+ \cdots + i_g \leq  h\right\}.$   For any $p_{h,m_h} \in \mathcal{Q}(d),$ to determine the linearly independent and indecomposable Casimir invariants, it is sufficient to determine the coefficients of $K_{\mathcal{Q}(d)}^h$ in $\eqref{eq:polycas}$ such that  \begin{align}
    \left\{K_{\mathcal{Q}(d)}^h,p_{h,m_h}\right\} = 0. \label{eq:casipde}
\end{align}  
Note that similar to the method used to determine $\textbf{Q}_g$ earlier, we must exclude the decomposable Casimir invariants. 
In this work, we identify all functionally independent Casimirs through direct calculations.

\section{Conformal algebra $\mathfrak{c}(2)$ and its conjugacy classes}
\label{3}

This section examines the algebraic properties of $\mathfrak{c}(2)$. Given that $\mathfrak{c}(2) \cong \mathfrak{so}(3,1)$ represents a Lie algebra isomorphism, it is beneficial to take advantage of the classification of the subalgebras of $\mathfrak{so}(3,1)$. The Appendix provides an overview of how subalgebras of $\mathfrak{so}(3,1)$ are classified. Here, we use the conclusions drawn from Corollary $\ref{clas}$ to identify all subalgebras of $\mathfrak{c}(2)$ based on a specified basis $\eqref{eq:real}$. We begin with the commutator relations of $\mathfrak{c}(2)$.

\begin{definition}
\label{confor}
  The $\textit{conformal algebra}$ $\mathfrak{c}(2) $ in 2-dimensional space with coordinates $(x,y)$ is generated by $6$ elements $ \{X_j\}_{j=1}^6$ which satisfy the commutation relations \cite{MR3988021},
  \begin{center}
      \begin{tabular}{|c|c c c c c c| }
      \hline
           & $X_1$ & $X_2$ & $X_3$ & $X_4$ & $X_5$ & $X_6$   \\
           \hline
    $X_1$       & $0$ & $0$ & $-X_2$ & $X_1$& $2X_4$ & $2X_3$  \\
     $X_2$       & $0$ &  $0$ & $X_1$ &$X_2$ &$-2X_3$ & $2X_4$  \\
      $X_3$        & $ X_2$ &$-X_1$ & $0$ &$0$ &$ X_6$ & $-X_5$ \\
       $X_4$        & $-X_1$ & $-X_2$& $0$& $ 0$&$X_5$ & $X_6$ \\
      $X_5$         & $-2X_4$ & $2X_3$ & $-X_6$ & $-X_5$ &0 & 0 \\
        $X_6$     & $-2X_3$ & $-2X_4$ & $X_5$ & $-X_6$ & 0&  0\\ 
        \hline
      \end{tabular}

      \quad

    $\textbf{Table 1}$ \label{1}  
  \end{center} 
\end{definition}

Notice that we have the following realisation in terms of differential operators for the generators of   $\mathfrak{c}(2)$\begin{align}
    &X_1 = \partial_x,\quad X_2 = \partial_y, \text{ } \quad X_3 = y \partial_x- x \partial_y , \text{ }\quad X_4  = x \partial_x + y \partial_y,  \nonumber\\
    & X_5 = (x^2 -y^2) \partial_x + 2xy \partial_y, \text{ } \quad X_6 = 2 xy \partial_x +(y^2 - x^2) \partial_y.\label{eq:real}
\end{align}  

We now enumerate all subalgebras, up to conjugate classes, of $\mathfrak{c}(2)$ according to the notation of Table 1. Those subalgebras will be used in the next Section \ref{4} to construct polynomial algebras from reduction chains. We begin with one-dimensional subalgebras. Let $\mathfrak{a}_j = \mathrm{span} \{X_j\}$, $1 \leq  j \leq 6$, be a one-dimensional subalgebra of $\mathfrak{c}(2)$. From the classification of the subalgebras of $\mathfrak{so}(3,1)$ in Corollary $\ref{clas}$, we can identify $\mathfrak{a}_1  $ and $\mathfrak{a}_5  $ as one-dimensional Euclidean algebra $\mathfrak{e}(1)$. We can further show that there are in different conjugation classes. That is, for some $X,Y \in \mathfrak{c}(2)$, the conjugation classes of the one-dimensional subalgebras are 
$$  \mathfrak{a}_1 \overset{{\rm Ad}(\exp(X))}{\sim} \mathfrak{a}_2,\qquad   \mathfrak{a}_5 \overset{{\rm Ad}(\exp(Y))}{\sim} \mathfrak{a}_6,$$ Here $\exp(X), \exp(Y) \in C(2)$ and $\mathrm{Ad}(\cdot)$ is the adjoint representation of $C(2)$, where $C(2)$ is the Lie group of $\mathfrak{c}(2)$. Moreover, we have $\mathfrak{a}_3 \cong \mathfrak{o}(2)$ and $\mathfrak{a}_4 \cong \mathfrak{o}(1,1).$     From Remark $\ref{clasr}$, they can all be identified with a one-dimensional Euclidean algebra $\mathfrak{e}(1).$ Finally, there also exists a one-dimensional subalgebra $\mathfrak{a}_c$ generated by $X_c = \frac{\cos c_1}{2} X_3  + \frac{\sin c_2}{2}X_4$, $0 < c_1 < \frac{\pi}{2}$ and $\frac{\pi}{2} < c_2 < \pi$,  which is the subalgebra $F_{10}$ in Corollary $\ref{clas}$. Notice that $\mathfrak{a}_c = \mathrm{span}  \{X_c\}$ is the algebra that corresponds to a rotation about the space axis with a simultaneous boost along the same axis.



We now have a look at the $2$-dimensional subalgebras, which are divided into Abelian and non-Abelian cases.   From Table 1, it is easy to observe that $ \mathfrak{a}_{(12)}$, $\mathfrak{a}_{(34)}$, and $\mathfrak{a}_{(56)}$ are Abelian, where \begin{align}
    \begin{matrix}
        \mathfrak{a}_{(12)} = \mathrm{span} \left\{X_1,X_2: \text{ } [X_1,X_2]_o = 0\right\} ; \\
        \mathfrak{a}_{(34)} = \mathrm{span} \left\{X_3,X_4: \text{ } [X_3,X_4]_o = 0\right\} ; \\
        \mathfrak{a}_{(56)} = \mathrm{span} \left\{X_5,X_6: \text{ } [X_5,X_6]_o = 0\right\} .
    \end{matrix} \label{eq:2da}
\end{align} Furthermore, using Corollary $\ref{clas}$, it follows that $\mathfrak{a}_{(12)},\mathfrak{a}_{(56)} = \mathfrak{e}(1) \oplustilde \mathfrak{e}(1)$ are maximal Abelian nilpotent subalgebra, and $\mathfrak{a}_{(34)} = \mathfrak{o}(2) \oplustilde \mathfrak{o}(1,1)$ are orthogonally decomposable maximal Abelian subalgebra\footnote{Please refer to Definition $\ref{7.2}$ in Appendix.}, where $\oplustilde$ denotes a direct sum of the Lie algebra.  On the other hand, non-Abelian $2$-dimensional subalgebras of $\mathfrak{c}(2)$ are isomorphic to those algebras with the form of $\mathfrak{a} = \mathrm{span}\{X,Y: [X,Y] = X\}.$ In particular, in the basis $\eqref{eq:real},$ we have \begin{align}
  \mathfrak{a}_{(14)} &=  \mathrm{span} \{X_1,X_4: [X_1,X_4]_o = X_1\}, \text{ } \quad\mathfrak{a}_{(24)} = \mathrm{span} \{X_2,X_4: [X_2,X_4]_o =  X_2\}, \label{eq:2dc} \\
  \nonumber
\mathfrak{a}_{(45)} & =   \mathrm{span} \{X_5,X_4: [X_4,X_5]_o =  X_5\} , \text{ } \quad\mathfrak{a}_{(46)}=\mathrm{span} \{X_6,X_4: [X_4,X_6]_o =  X_6\}.
\end{align}  Notice that all non-Abelian $\mathfrak{a}_{(ij)}$ are rank-one subalgebras, which are isomorphic to the Borel subalgebra $\mathfrak{b}_0$ of $\mathfrak{sl}(2)$.

There are $5$ non-isomorphic three-dimensional subalgebras on the basis $\beta_{\mathfrak{c}(2)}$. They are given by \begin{align}
\nonumber
  &   \mathfrak{a}_{(123)} = \mathrm{span}\{X_1,X_2,X_3\} \cong \mathfrak{a}_{(563)} = \mathrm{span} \{X_5,X_6,X_3\} \cong \mathfrak{e}(2),\\
  \nonumber
  & \mathfrak{a}_{(145)} = \mathrm{span}\{X_1,X_4,X_5\} \cong \mathfrak{a}_{(246)} = \mathrm{span} \{X_2,X_4,X_6\} \cong \mathfrak{sl}(2) ,\\
  & \mathfrak{a}_{(124)} = \mathrm{span}\{X_1,X_2,X_4\} \cong \mathfrak{a}_{(564)} = \mathrm{span} \{X_5,X_6,X_4\} \cong \mathfrak{iso}(2), \label{eq:3dc} \\
  \nonumber
  & \mathfrak{a}_{(3,6,5)}^+ = \mathrm{span} \left\{X_3,\frac{X_6 + X_2}{2}, \frac{X_1 + X_5}{2}\right\}\cong \mathfrak{su}(2),\\
  \nonumber
  & \mathfrak{a}_{(3,6,5)}^- = \mathrm{span} \left\{X_3,\frac{X_2- X_6}{2}, \frac{X_1 - X_5}{2}\right\}\cong \mathfrak{su}(1,1) \cong \mathfrak{sl}(2)
\end{align} and \begin{align*}
    \mathfrak{a}_{(1,2,4)} = \mathrm{span}\left\{\frac{\cos c_1}{2} X_3  + \frac{\sin c_2}{2}X_4,X_1,X_2\right\}, 
\end{align*} which is isomorphic to  $\mathfrak{n}_{3,3}. $ Please refer to \cite{MR3184730} for more details.

It is clear that the Borel subalgebra $\mathfrak{b}$ of $\mathfrak{c}(2)$ is a subalgebra of dimension $4$, which is constructed using the Cartan subalgebra and two positive simple roots.

\section{Construction of polynomial algebras from reduction chains of $\mathfrak{c}(2)$}
\label{4}

 In this Section \ref{4}, we construct symmetry algebras generated from the reduction chains of the subalgebras of $\mathfrak{c}(2)$. From the classification in Section $\ref{3}$, they are $\mathfrak{a}_j \subset \mathfrak{c}(2),$ $ \mathfrak{a}_c \subset \mathfrak{c}(2),$ $\mathfrak{a}_{(ij)} \subset \mathfrak{c}(2)$, $\mathfrak{a}_{(ijk)} \subset \mathfrak{c}(2)$, $\mathfrak{a}_{(3,6,5)}^\pm \subset \mathfrak{c}(2)$,  $\mathfrak{a}_{(1,2,3)} \subset \mathfrak{c}(2)$ and $\mathfrak{b} \subset \mathfrak{c}(2) $ with $1 \leq i \neq j \neq k \leq 6.$ In the following, we will calculate the symmetric algebra and its corresponding quantised algebra case by case.
 
 Let $(x_1,\ldots,x_6)$ be the coordinate functions of $\mathfrak{c}^*(2)$. For our purpose, we have to solve the systems of PDEs \begin{align}
    \{p(x_1,\ldots,x_6),x_j\}_o  = 0  \label{eq:PDE}
\end{align} under the Poisson-Lie bracket defined in $\eqref{eq:Poisson}$, where  $\displaystyle p(x_1,\ldots,x_6) = \sum_{h=1}^{\deg p} p_h(x_1,\ldots,x_6)$ with  $$\displaystyle p_h(x_1,\ldots,x_6) = \sum_{i_1+ \cdots + i_6 = h  }\Gamma_{i_1,\ldots,i_6}\,x_1^{i_1}\cdots x_6^{i_6}$$ being homogeneous monomials of degree $h$. Here $\Gamma_{i_1,\ldots,i_6}$ are constants. Then $ \{p(x_1,\ldots,x_6),x_i\}_o =0$ if and only if $ \{p_h(x_1,\ldots,x_6),$ $x_i\}_o =0$ for each $1 \leq h \leq  \deg p .$ In what follows, all the linearly independent and indecomposable monomial solutions of $\eqref{eq:PDE}$ are denoted by $\varphi_j(x_1,\ldots,x_6)$.  Note that $\mathfrak{c}^*(2)$ has two quadratic Casimir invariants \begin{align}
    \mathcal{C}_1 = x_3^2 - x_4^2 + x_1 x_5 + x_2 x_6, \quad \mathcal{C}_2 = 2 x_3 x_4 + x_2x_5 - x_1 x_6, \label{eq:Casi}
\end{align} and by Definition $\ref{H},$ the algebraic Hamiltonian is given by \begin{align}
 \mathcal{H} =\sum_{\ell_1 \ldots \ell_6}^{\dim \mathfrak{a}^*} \Gamma_{\ell_1 \ldots \ell_6} x_{\ell_1}\cdots x_{\ell_6} +     \gamma_1 \mathcal{C}_1 + \gamma_2 \mathcal{C}_2, \label{eq:1}
\end{align} where $x_{\ell_1} , \ldots  x_{\ell_6} \in \mathfrak{a}^*$, $\Gamma_{\ell_1 \ldots \ell_6}$, $\gamma_1$ and $\gamma_2$ are constants.

\subsection{Symmetry algebras from one-dimensional subalgebras} 
\label{4.1}

In the previous Section \ref{3}, we determined all possible subalgebras, up to conjugate, of $\mathfrak{c}(2)$ on the basis $\eqref{eq:real}$. 
As mentioned in Section $\ref{3},$ the one-dimensional subalgebras $\mathfrak{a}_1,\mathfrak{a}_2,\mathfrak{a}_5,\mathfrak{a}_6$ are realised in $ \mathfrak{e}(1)$, while $\mathfrak{a}_3$ and $\mathfrak{a}_4$ are realised in $\mathfrak{o}(2)$ and $ \mathfrak{o}(1,1)$, respectively. In this Subsection \ref{4.1}, we determine the Poisson polynomial algebras generated from the reductive chains $\mathfrak{a}_j \subset \mathfrak{g}$ for all $1 \leq j \leq 6$ and $\mathfrak{a}_c \subset \mathfrak{g}.$ Recall that $\mathfrak{a}_j$ is a one-dimensional subalgebra generated by $X_j$ and $\mathfrak{a}_c = \mathrm{span}\{X_c\} $ regardless of whether conjugated or not.  As demonstrated in the following subsections, all the polynomial algebras generated from one-dimensional reduction chains $\mathfrak{a}_j \subset \mathfrak{g}$ are quadratic, while the polynomial algebra generated from the reduction chain $\mathfrak{a}_c \subset \mathfrak{g}$ is Abelian.

\subsubsection{Symmetry algebra from $\mathfrak{a}_1  $}
\label{4.1.1}

We consider the subalgebra $\mathfrak{a}_1$ and construct the commutant $\mathcal{C}_{\mathcal{S}(\mathfrak{c}(2))}(\mathfrak{a}_1)$. By the construction in Section \ref{2}, we need to find the solution of the following PDE in the form of $\eqref{eq:ci}$ \begin{align}
    \left\{p(x_1,\ldots,x_6),x_1\right\}_o = x_2 \dfrac{\partial p}{\partial x_3} - x_1 \dfrac{\partial p}{\partial x_4} - 2x_4 \dfrac{\partial p}{\partial x_5} -2 x_3 \dfrac{\partial p}{\partial x_6}= 0. \label{eq:a1}
\end{align} Using Mathematica computational packages, we find that, upto degree $6$, there are $157$ linearly independent homogeneous polynomials. Not all those 157 
 solutions are indecomposable and we find that a solution in this set is decomposable if their degree is greater than or equal to 3. Since $\mathcal{C}_{\mathcal{S}(\mathfrak{c}(2))}(\mathfrak{a}_1)$ is finitely generated, we conclude that the maximal degree of linearly independent and indecomposable solutions (polynomials) is $2.$   It is shown that $\eqref{eq:a1}$ has the following indecomposable and linearly independent homogeneous polynomial solutions \begin{align}
\nonumber
    \varphi_1  & = x_1, \quad \varphi_2  = x_2, \quad \varphi_3 = x_1 x_3 + x_2 x_4, \\
    \varphi_4 &    = x_2 x_6 + x_3^2 , \quad \varphi_5 = x_1 x_6 -2 x_3 x_4 - 2x_2 x_5, \quad \varphi_6 = x_4^2 - x_1 x_5.  \label{eq:s1}
\end{align}    
 In the later sections, we will only present linearly independent and indecomposable solutions.

 
 Let $A_j = \varphi_j$ for all $1 \leq j \leq 6 $, and let $\textbf{Q}_2^{(1)} = \{A_1,\ldots,A_6\}$. We now show that the set $ \textbf{Q}_2^{(1)} $ forms a finitely-generated algebra and the elements in the set commute with a Hamiltonian.  From $\eqref{eq:1},$ the possible Hamiltonian in this case is then given by $$ \mathcal{H}_1  = \alpha x_1 + \gamma_1 \mathcal{C}_1 + \gamma_2 \mathcal{C}_2, $$ where $\mathcal{C}_1$ and $\mathcal{C}_2$ are the Casimir invariants of $\mathfrak{c}^*(2)$ defined in $\eqref{eq:Casi},$  and $\alpha = \Gamma_{10000}$ is a constant. In the following, without further state, $\gamma_1,\gamma_2$ are arbitrary coefficients for the Casimir functions. In subsequent sections, we will streamline the notation for coefficients in the induced brackets by referring to them as $\alpha, \beta$, etc.
It is easy to check $\{\mathcal{H}_1,A_j\} = 0$ for $1 \leq j \leq 6$. The linearly independent and indecomposable polynomial integrals $A_j$ form the quadratic Poisson algebra $  \textbf{Alg} \left\langle \textbf{Q}_2^{(1)} \right\rangle$, with the following non-zero brackets \begin{align}
   \{A_2,A_3\} & = A_1^2 + A_2^2,\quad \text{ } \{A_2,A_4\} =  -\{A_2,A_6\} = 2 A_3 , 
   \nonumber\\
     \{A_3,A_4\} &=- A_1 A_5 -2 A_2 A_6 = -  \{A_3,A_6\}  .   \label{eq:} 
\end{align} By a direct calculation, we can then verify that $\left\{A_i,\left\{A_j,A_k\right\}\right\} +\left\{A_j,\left\{A_k,A_i\right\}\right\}+\left\{A_k,\left\{A_i,A_j\right\}\right\}= 0$ holds for all $ 1 \leq i,j,k \leq 6$. Hence $\mathcal{Q}_1(2) = \big(\textbf{Alg} \left\langle \textbf{Q}_2^{(1)} \right\rangle,\{\cdot,\cdot\} \big)$ defines a Poisson algebra.  Notice that $\mathcal{Z}\left(\mathcal{Q}_1(2)\right) =\textbf{Alg} \langle A_1,A_5\rangle.$ Here $\mathcal{Z} \left(\cdot\right)$ is the center of Poisson algebra. Moreover, a routine computation shows that the rank of the Jacobian matrix $  J(A_1,\ldots,A_6)  $ is equal to $5$. Thus integrals in $\textbf{Q}_2^{(1)}$ are functionally dependent. Then we may choose $5$ of them such that $S_1 =  \{\mathcal{H}_1 = \alpha A_1 + \gamma_1 A_5,A_2,A_3,$ $A_4,A_6\}$ defines a superintegrable system.

By the PBW theorem, the Casimirs of the quadratic Poisson algebra $\mathcal{Q}_1(2)$ have the form \begin{align}
    K_{\mathcal{Q}_1(2)}^h =\sum_{i_1+ \cdots + i_6 = h} \Gamma_{i_1,\ldots,i_6} A_1^{i_1} \cdots A_6^{i_6} \in \mathcal{U}_h\left(\mathcal{Q}_1(2)\right), \label{eq:cas1}
\end{align}   where $\Gamma_{i_1,\ldots,i_6}$ are constants and $\mathcal{U}_h(\mathcal{Q}_1(2))$ is an enveloping Poisson quadratic algebra with polynomials upto degree $h$. 
We now find the explicit expressions for $K_{\mathcal{Q}_1(2)}^h$ such that $\{K_{\mathcal{Q}_1(2)}^h,A_j\} = 0$ for all $j.$ By a direct calculation, we can write $K_{\mathcal{Q}_1(2)}^h$ in terms of monomials as follows  \begin{align*}
K_{\mathcal{Q}_1(2)}^1 = & a_1 A_1 + a_2 \left( A_4 + A_6\right) + a_3 A_5,\\
    K_{\mathcal{Q}_1(2)}^2 = &   a_1A_1+a_2\left(A_4+A_6\right)+a_3A_5+ a_4A_1^2+ a_5A_1 (A_4+A_6)+a_6 A_1 A_5+a_7(A_4+A_6)^2 \\
    & +a_8A_5 (A_4+A_6)+ a_9A_5^2\\
    = & K_{\mathcal{Q}_1(2)}^1 + \left( K_{\mathcal{Q}_1(2)}^1 \right)^2 - \left[\frac{(a_5A_1+a_8A_5)}{a_2} \left(K_{\mathcal{Q}_1(2)}^1-(a_1A_1+a_3A_5)\right)+ a_6 A_1 A_5   \right],\\
K_{\mathcal{Q}_1(2)}^3  = & a_1A_1+a_4A_1^2+a_2\left(A_4+A_6\right)+ a_3 A_5+a_5A_1 (A_4+A_6)+a_6A_1 A_5+a_7(A_4+A_6)^2 \\
&+a_8A_5 (A_4+A_6)+a_9A_5^2+ a_{10} \left((A_1^2+A_2^2)A_6+A_2 A_5 A_1+A_3^2\right)+a_{11}A_1^3+a_{12}A_1^2 (A_4+A_6)\\
& +a_{13}A_1^2 A_5+a_{14}A_1 (A_4+A_6)^2+a_{15}A_1 A_5 (A_4+A_6)+a_{16}A_1 A_5^2+a_{17}(A_4+A_6)^3\\
& +a_{18}A_5 (A_4+A_6)^2+a_{19}A_5^2 (A_4+A_6)+a_{20}A_5^3 \\
=& K_{\mathcal{Q}_1(2)}^2 + \left( K_{\mathcal{Q}_1(2)}^1 \right)^3-\frac{1}{a_2}\left(2 a_{12} A_1^2 + 5 a_{15} A_1 A_5 + a_{19}A_5^2\right) \left(K_{\mathcal{Q}_1(2)}^1 -(a_1A_1+a_3A_5)\right) \\
& -2\left(a_{14}A_1+a_{18}A_5\right)  \left(K_{\mathcal{Q}_1(2)}^1 -(a_1A_1+a_3A_5)\right)^2 -2 A_1A_5\left(a_{13}A_1  + a_{16} A_5 \right) \\
& + a_{10} \left( (A_1^2+A_2^2)A_6+A_2 A_5 A_1+A_3^2\right) \\
K_{\mathcal{Q}_1(2)}^4  = & K_{\mathcal{Q}_1(2)}^3 + a_{21}A_1^4 a_{22}A_1 \left((A_1^2+A_2^2)A_6+A_2 A_5 A_1+A_3^2\right) + a_{23}A_1 A_5 (A_4+A_6)+a_{24}(A_4+A_6) \\
&\times\left( (A_1^2+A_2^2)A_6+A_2 A_5 A_1+A_3^2\right)+ a_{25}A_5  \left( (A_1^2+A_2^2)A_6+A_2 A_5 A_1+A_3^2\right) + a_{26}A_1^3 (A_4+A_6) \\
&+a_{27}A_1^3 A_5+a_{28}A_1^2 (A_4+A_6)^2+a_{29}A_1^2 A_5 (A_4+A_6)+a_{30}A_1^2 A_5^2+a_{31}A_1 (A_4+A_6)^3\\
&+a_{32}A_1 A_5 (A_4+A_6)^2 
  +a_{33}A_1 A_5^2 (A_4+A_6)+a_{34}A_1 A_5^3+a_{35}(A_4+A_6)^4+a_{36}A_5 (A_4+A_6)^3\\
  & +a_{37}A_5^2 (A_4+A_6)^2+a_{38}A_5^3 (A_4+A_6)+a_{39}A_5^4 \\
 =& K_{\mathcal{Q}_1(2)}^3 +\left( K_{\mathcal{Q}_1(2)}^1\right)^4 + f_4( (A_1^2+A_2^2)A_6+A_2 A_5 A_1+A_3^2, A_1,A_4 + A_6, A_5),
\end{align*} Here, for convenience, the real coefficients $\Gamma_{i_1,\ldots,i_6}$ in $\eqref{eq:cas1}$ are replaced by $a_t,$ $1 \leq t \leq 39,$ i.e., $a_1 = \Gamma_{1,0,0,0,0,0}, a_2 = \Gamma_{0,0,0,1,0,1}, $etc.  Note that associated with each of coefficients is a linearly independent monomial. It is clear that $  K_{\mathcal{Q}_1(2)}^{h+1} = K_{\mathcal{Q}_1(2)}^h +$ some functions for $h \geq 1.$ Inductively, we can show that \begin{align*}
 K_{\mathcal{Q}_1(2)}^h  & = K_{\mathcal{Q}_1(2)}^{h-1} +\left( K_{\mathcal{Q}_1(2)}^1\right)^h + f_h\left( (A_1^2+A_2^2)A_6+A_2 A_5 A_1+A_3^2, A_1,A_4 + A_6, A_5\right) 
\end{align*} for $ h \geq 3.$ Notice that $f_3\left( (A_1^2+A_2^2)A_6+A_2 A_5 A_1+A_3^2, A_1,A_4 + A_6, A_5\right) = (A_1^2+A_2^2)A_6+A_2 A_5 A_1+A_3^2.$ Then the functionally independent Casimir operators are \begin{align*}
    K_{\mathcal{Q}_1(2)}^{1,1} = A_1, \quad\text{ } K_{\mathcal{Q}_1(2)}^{1,2}= A_5, \quad\text{  }   K_{\mathcal{Q}_1(2)}^{1,3} = A_4 + A_6, \quad\text{  } K_{\mathcal{Q}_1(2)}^{3,1} = (A_1^2+A_2^2)A_6+A_2 A_5 A_1+A_3^2.
\end{align*}
Using the classical version of the realisation $ \eqref{eq:real}$, let $R(A_j) = R_j,$ these Casimir functions become \begin{align*}
    R\left( K_{\mathcal{Q}_1(2)}^1\right) &= R_1, \\
    R\left( K_{\mathcal{Q}_1(2)}^2\right) & = a_1 R_1 + a_2 R_1^2 =  a_1R\left( K_{\mathcal{Q}_1(2)}^1\right) + a_2R^2\left(K_{\mathcal{Q}_1(2)}^1\right), \\ 
    R\left( K_{\mathcal{Q}_1(2)}^3\right) & =R\left( K_{\mathcal{Q}_1(2)}^2\right) + a_3 R_1^3 + a_4 R_3^2 + a_5 (R_1^2+R_2^2)R_6, \\
     R\left( K_{\mathcal{Q}_1(2)}^4\right)& = R( K_{\mathcal{Q}_1(2)}^3) + R_1 \left(R(K_{\mathcal{Q}_1(2)}^3)- a_7R(K_{\mathcal{Q}_1(2)}^2) \right).
\end{align*}  Noticing $   K_{\mathcal{Q}_1(2)}^{h+1} = K_{\mathcal{Q}_1(2)}^h + $ some polynomials, we can show that $$R(K_{\mathcal{Q}_1(2)}^h) = R(K_{\mathcal{Q}_1(2)}^{h-1}) + \left(\sum_{l=1}^{h-3} R_1^l \right)\left(R(K_{\mathcal{Q}_1(2)}^3)- R(K_{\mathcal{Q}_1(2)}^2) \right)  $$ for any $h \geq 4.$ Then the realised functionally independent Casimir are 
  $R_1 $ and $ R\left(K_{\mathcal{Q}_1(2)}^{3,1}\right) = (R_1^2+R_2^2)R_6 +R_3^2.$

The corresponding quantum quadratic algebra is generated by the integrals \begin{align*}
   \hat{A}_1 & = X_1, \quad \hat{A}_2 = X_2, \quad \hat{A}_3 = X_1 X_3 + X_2 X_4, \quad \hat{A}_4 = X_2 X_6 + X_3^2 - X_4,  \\
   \hat{A}_5 &= X_3 + \frac{1}{2}(X_2 X_5 -X_1 X_6) +X_3X_4, \quad \text{ } \hat{A}_6  = X_4 -X_1 X_5 + X_4^2 .
\end{align*} These integrals satisfy the commutation relations \begin{align*}
   [\hat{A}_2,\hat{A}_3] & = \hat{A}_1^2 + \hat{A}_2^2,\quad \text{ } [\hat{A}_2,\hat{A}_4] = -[\hat{A}_2,\hat{A}_6]  = 2 \hat{A}_3 , \\
   [\hat{A}_3,\hat{A}_4] & = 2\left( \hat{A}_2 \hat{A}_6 - \hat{A}_3 \right) + 2 \hat{A}_1 \hat{A}_5 =  
   [\hat{A}_3,\hat{A}_6]   . 
 \end{align*} It is clear that these commutation relations form the quadratic algebra $\hat{\mathcal{Q}}_1(2)$. Analogous to the classical case, we can verify that the Jacobi identity \(\left[\hat{A}_i,\left[\hat{A}_j,\hat{A}_k\right]\right] +\left[\hat{A}_j,\left[\hat{A}_k,\hat{A}_i\right]\right]+\left[\hat{A}_k,\left[\hat{A}_i,\hat{A}_j\right]\right]= 0\) is satisfied for all $1 \leq i,j,k \leq 6$. Therefore, \(\hat{\mathcal{Q}}_1(2)\) is an associative algebra.

\subsubsection{Symmetry algebra from $\mathfrak{a}_2$}

We now consider subalgebra $\mathfrak{a}_2.$ Using the similar procedure, for $p(x_1,\ldots,x_6) \in \mathcal{S}(\mathfrak{c}(2))$ the commutant of $\mathfrak{a}_2$ satisfies the following PDE \begin{align*}
 \{p(x_1,\ldots,x_6),x_2\}_o =   -x_1 \dfrac{\partial p}{\partial x_3} - x_2 \dfrac{\partial p}{\partial x_4} + 2 x_3 \dfrac{\partial p}{\partial x_5} - 2x_4 \dfrac{\partial p}{\partial x_6} = 0.
\end{align*}  All the linearly independent and indecomposable solutions are given by  \begin{align*}
  & \varphi_1  = x_1, \quad\text{ }   \varphi_2  = x_2, \quad\text{ }  \varphi_3  = x_4 x_1 -x_2 x_3, \\
  &  \varphi_4  = x_1 x_5 + x_3^2,  \quad\text{ }  \varphi_5  =  x_1 x_6   - x_2   x_5 - 2 x_3 x_4, \quad \text{ }  \varphi_6 = x_2 x_6 - x_4^2  
\end{align*} 

Let $\varphi_j = A_j$ for all $1 \leq j \leq 6,$ and let $\textbf{Q}_2^{(2)} = \{A_1,\ldots,A_6\}$ be a finitely generating set. We now demonstrate that $\textbf{Alg} \left\langle\textbf{Q}_2^{(2)}\right\rangle$ forms a quadratic algebra and its elements commute to a Hamiltonian in $\mathcal{S}(\mathfrak{a}_1).$ From $\eqref{eq:1},$ the possible Hamiltonian for this system is then given by $$ \mathcal{H}_2 = \alpha x_2 + \gamma_1 \mathcal{C}_1 + \gamma_2 \mathcal{C}_2. $$  Here $\alpha = \Gamma_{010000}$ is a constant.
It is easy to check that $\{\mathcal{H}_2,A_j\} = 0$ for $1 \leq j \leq 6.$ The quadratic algebra $\textbf{Alg} \langle \textbf{Q}_2^{(2)} \rangle$ is generated by these integrals, possessing the following non-zero Poisson brackets  \begin{align*}
   \{A_1,A_3\} &= A_1^2 + A_2^2,\quad \text{ } - \{A_1,A_4\} = \{A_1,A_6\}  = -2 A_3 , \\
   \{A_3,A_4\} &=  A_2A_5 - 2 A_1 A_6 = - \{A_3,A_6\}.   
\end{align*}  Moreover, direct calculations show that the Jacobi identity for these polynomial vanish. Hence $ \mathcal{Q}_2(2) = \big(\textbf{Alg} \left\langle\textbf{Q}_2^{(2)}\right\rangle, \{\cdot,\cdot\} \big) $ is a Poisson quadratic algebra. By checking the Jacobian matrix for all the polynomial in $\textbf{Q}_2^{(2)}$, we find that the number of functionally independent elements is $5$.  Therefore, the corresponding superintegrable system is $S_2 = \{\mathcal{H}_2 =\alpha A_2 + \gamma A_5, A_1,A_3,A_4,A_6\}.$   Similarly to the arguments in Subsection $\ref{4.1.1}$, the functionally independent Casimir operators are  \begin{align*}
      K_{\mathcal{Q}_2(2)}^{1,1} = A_2, \quad\text{ } K_{\mathcal{Q}_2(2)}^{1,2}= A_5,\quad \text{  }   K_{\mathcal{Q}_2(2)}^{1,3} = A_4 + A_6, \quad\text{  } K_{\mathcal{Q}_2(2)}^{3,1} = (A_1^2+A_2^2)A_6+A_2 A_5 A_1+A_3^2.  
\end{align*} Moreover, using the classical version of the realisation $ \eqref{eq:real}$, the realised functionally independent Casimir are 
  $R\left( K_{\mathcal{Q}_2(2)}^{1,1}  \right) = R_2 $ and $ R\left(K_{\mathcal{Q}_1(2)}^{3,1}\right) = (R_1^2+R_2^2)R_6 +R_3^2.$ 

The corresponding quantum algebra has the following generators \begin{align*}
    \hat{A}_1 &= X_1, \quad \hat{A}_2 = X_2 , \quad \hat{A}_3 = X_2 X_3 - X_1 X_4, \\
    \hat{A}_4 & = X_1 X_5 + X_3^2 - X_4, \text{ } \hat{A}_5 = X_3 + \frac{X_2 X_5- X_1 X_6 }{2} + X_3 X_4. \\
    \hat{A}_6 & =    2X_6X_2 - 2X_4^2 + 2 X_4.
\end{align*} They form the closed quadratic algebra $\hat{\mathcal{Q}}_2(2)$ with non-zero commutation relations \begin{align*}
    [\hat{A}_1,\hat{A}_3] &= -\hat{A}_1^2 - \hat{A}_2^2, \quad \text{ } -[\hat{A}_1,\hat{A}_4] =
   [\hat{A}_1,\hat{A}_6]    = -2\hat{A}_3,\\ 
   [\hat{A}_3,\hat{A}_4] & = -2 \left(\hat{A}_3 + \hat{A}_1\hat{A}_6 \right) + 2 \hat{A}_2 \hat{A}_5 = [\hat{A}_3,\hat{A}_6]. 
 \end{align*}  Similar to the Subsection \ref{4.1.1}, the associativity of $\hat{\mathcal{Q}}_2(2)$ has been checked by verifying the Jacobi identity.

\subsubsection{Symmetry algebra from $\mathfrak{a}_3$}

The commutant of the subalgebra $\mathfrak{a}_3$ satisfies the linear PDE \begin{align}
   \left\{p(x_1,\ldots,x_6),x_3\right\}_o = - x_2 \dfrac{\partial p}{\partial x_1} + x_1 \dfrac{\partial p}{\partial x_2} - x_6 \dfrac{\partial p}{\partial x_5} + x_5 \dfrac{\partial p}{\partial x_6} = 0. \label{eq:a3}
\end{align} This PDE $\eqref{eq:a3}$ has $6$  linearly independent and indecomposable solutions as follows  \begin{align*}
    \varphi_1  &= x_3, \quad\text{ }  \varphi_2  = x_4, \quad\text{ }  \varphi_3 = x_1^2 + x_2^2, \quad\text{ }  \varphi_4   = x_1x_5 + x_2 x_6, \\ \varphi_5  & = x_1 x_6 - x_2 x_5 , \hskip 0.5cm \quad \varphi_6 = x_5^2 +  x_6^2.
\end{align*} The possible Hamiltonian of the system, from $\eqref{eq:1}$, is given by \begin{align*}
    \mathcal{H}_3   = \alpha x_3 + \gamma_1 \mathcal{C}_1 + \gamma_2 \mathcal{C}_2,
\end{align*}  where $\alpha = \Gamma_{001000}$.

Let $A_j = \varphi_j(x_1,\ldots,x_6)$ for any $ 1 \leq j \leq 6$, and let $\textbf{Q}_2^{(3)} = \{A_1,\ldots,A_6\}.$  Note that $A_4 - 2 A_1 A_2 =\mathcal{C}_2 $ is a Casimir function of $\mathfrak{c}(2)$. Hence, we may take $\mathcal{H}_3 = \alpha A_1 + \gamma \mathcal{C}_2$ as the Hamiltonian. We can further find a polynomial relation among $A_3,A_4,A_5$ and $A_6,$ which is given by $A_3A_6 = A_4^2 + A_5^2.$ Therefore, the superintegrable system is given by $S_3 = \{\mathcal{H}_3, A_2,A_3, A_5, A_6\}.$ We are now looking for its hidden symmetry algebra.  Polynomials $A_j$ form the quadratic algebra $  \textbf{Alg} \langle \textbf{Q}_2^{(3)} \rangle$ with the following non-zero bilinear operation \begin{align*}
      \{A_2,A_3\} &= -2 A_3, \quad \text{ }  \{A_2,A_6\} = 2 A_6,\\
    \{A_3,A_4\} &= 4 A_3 A_2, \quad \text{ } \{A_3,A_5\} = 4 A_3 A_1 ,\\
    \{A_3,A_6\} & = 8 \left( A_2 A_4 + A_1 A_5 \right), \\
 \{A_4,A_6\}& = 4 A_2 A_6, \quad   \{A_5,A_6\}   = 4 A_1 A_6.
\end{align*} By checking the Jacobi identity, we deduce that \begin{align*}
    \left\{A_i,\left\{A_j,A_k\right\}\right\} +\left\{A_j,\left\{A_k,A_i\right\}\right\}+\left\{A_k,\left\{A_i,A_j\right\}\right\}= 0
\end{align*} for all $ 1\leq i,j,k \leq 6.$ Hence $\mathcal{Q}_3(2) = \left(\textbf{Alg} \langle \textbf{Q}_2^{(3)} \rangle,\{\cdot,\cdot\}\right)$ is a Poisson quadratic algebra. We now determine the Casimir invariants of $\mathcal{Q}_3(2)$. By direct calculations, we find  \begin{align*}
  K_{\mathcal{Q}_3(2)}^1  =& c_1A_1, \\
  K_{\mathcal{Q}_3(2)}^2 
  = &  K_{\mathcal{Q}_3(2)}^1 + \left( K_{\mathcal{Q}_3(2)}^1\right)^2 + c_2\left(A_4-A_2^2\right)+c_3\left(A_5-2 A_1 A_2\right) -c_5\left(A_4^2+A_5^2-A_3 A_6\right), \\
  K_{\mathcal{Q}_3(2)}^3  
   = &  K_{\mathcal{Q}_3(2)}^2 - c_5 \left(A_4^2+A_5^2-A_3 A_6\right)\\
    &+ f_3\left(A_1,4 A_4 A_1^2-4 A_2 A_5 A_1-A_4^2+A_3 A_6,4 A_4 A_1^2-4 A_2 A_5 A_1+A_5^2\right), \\
  K_{\mathcal{Q}_3(2)}^4 =
 & K_{\mathcal{Q}_3(2)}^3 +c_6A_2^4 +4 c_7 A_1^2 A_2^2 + c_8\left(A_2^2-A_4\right)^2+\left(A_2^2-A_4\right) (2 A_1 A_2-A_5) \\
  & + c_9A_1 \left(4 A_4 A_1^2-4 A_2 A_5 A_1-A_4^2+A_3 A_6\right) +c_{10}A_1 \left(4 A_4 A_1^2-4 A_2 A_5 A_1+A_5^2\right)\\
  & +c_{11}\left(A_2^2-A_4\right) \left(A_4^2+A_5^2-A_3 A_6\right)  +c_{12}(2 A_1 A_2-A_5) \left(A_4^2+A_5^2-A_3 A_6\right)\\
  & +c_{13}A_1^4-c_{14}A_1^2 \left(A_4^2+A_5^2-A_3 A_6\right)+c_{15}\left(A_4^2+A_5^2-A_3 A_6\right)^2,
\end{align*} where $c_t$ are real coefficients that separate each linearly independent terms with all $1 \leq t \leq 15.$ Inductively, it is sufficient to show that for  $k\geq 4$ there exist polynomials $h_k$ of the form \begin{align*}
    K_{\mathcal{Q}_3(2)}^k = &h_k\left(A_1,  A_4-A_2^2, ~ A_5-2 A_1 A_2,~ A_4^2+A_5^2-A_3 A_6 ,\right.\\
    & \left. 4 A_4 A_1^2-4 A_2 A_5 A_1-A_4^2+A_3 A_6,~ 4 A_4 A_1^2-4 A_2 A_5 A_1+A_5^2\right).
\end{align*} Denote\begin{align*}
    K_{\mathcal{Q}_3(2)}^{1,1} = & \,   A_1  , \quad \text{  }   K_{\mathcal{Q}_3(2)}^{2,1} =  A_4-A_2^2, \quad\text{  } K_{\mathcal{Q}_3(2)}^{2,2} = A_5-2 A_1 A_2, \quad\text{ } K_{\mathcal{Q}_3(2)}^{2,3} = A_4^2+A_5^2-A_3 A_6, \\
    K_{\mathcal{Q}_3(2)}^{2,4} = & \, 4 A_4 A_1^2-4 A_2 A_5 A_1-A_4^2+A_3 A_6, \text{  }\quad K_{\mathcal{Q}_3(2)}^{2,5} =4 A_4 A_1^2-4 A_2 A_5 A_1+A_5^2
\end{align*}  and notice that we can rewrite \begin{align*}
     K_{\mathcal{Q}_3(2)}^{2,4} = 4 \left( K_{\mathcal{Q}_3(2)}^{1,1}\right)^2K_{\mathcal{Q}_3(2)}^{2,1}- K_{\mathcal{Q}_3(2)}^{2,3} + \left(K_{\mathcal{Q}_3(2)}^{2,2}\right)^2, \quad\text{ } K_{\mathcal{Q}_3(2)}^{2,5} = \left(K_{\mathcal{Q}_1(2)}^{2,2}\right)^2 - 4   \left( K_{\mathcal{Q}_3(2)}^{1,1}\right)^2K_{\mathcal{Q}_3(2)}^{2,1}.
\end{align*} Then the functionally independent Casimir functions for the quadratic algebra $\mathcal{Q}_3(2)$ are $K_{\mathcal{Q}_3(2)}^{1,1}, K_{\mathcal{Q}_3(2)}^{2,1}$, $ K_{\mathcal{Q}_3(2)}^{2,2}$ and $ K_{\mathcal{Q}_3(2)}^{2,3}$.

\vskip 0.2cm

The realisation of the polynomials $A_j$ satisfies the constraints \begin{align}
    R_4- R_2^2 = -R_3^2  ,  \text{ }\quad  R_5 = 2 R_1 R_2 ,\text{ } \quad R_3 R_6 = R_4^2 + R_5^2. \label{eq:realiz3}
\end{align} Using these constraints, the above Casimir functions in the realisation become \begin{align*}
    R\left( K_{\mathcal{Q}_3(2)}^1\right)  =& \, R_1, \quad \text{ } R \left(K_{\mathcal{Q}_3(2)}^2\right) = c_1 R_1 + c_2 R_1^2 - c_3 R_3^2, \\
    R \left(K_{\mathcal{Q}_3(2)}^3\right) = & \, R \left(K_{\mathcal{Q}_3(2)}^2\right) + c_4 R_1^3  - c_5 R_1 R_3^2 + 8 c_6 R_1^2 R_4 - 2 c_7 R_5^2 \\
    R \left(K_{\mathcal{Q}_3(2)}^4\right) 
      = & \,  R \left(K_{\mathcal{Q}_3(2)}^3\right)  -c_8 R_1^3 + 2 c_9 \left( R_1 + R_1^2 -   R \left(K_{\mathcal{Q}_3(2)}^2\right)\right)^2  .
\end{align*} It is clear that both $ R \left(K_{\mathcal{Q}_3(2)}^2\right)$ and $ R \left(K_{\mathcal{Q}_3(2)}^3\right)$ can not be expressed in terms of other $R \left(K_{\mathcal{Q}_3(2)}^h\right)$. we can show, inductively, that \begin{align*}
    R \left(K_{\mathcal{Q}_3(2)}^k\right) = \sum_{l=1}^{k-1} R_1^l + R_1^{k-3} R \left(K_{\mathcal{Q}_3(2)}^{k-1}\right) + f\left(K_{\mathcal{Q}_3(2)}^1,\ldots,K_{\mathcal{Q}_3(2)}^{k-2}\right)
\end{align*} for $k \geq 4.$  In particular, in the realisation the functionally independent Casimirs become \begin{align*}
     R \left(K_{\mathcal{Q}_3(2)}^{1,1}\right) = R_1, \text{ } \quad  R \left(K_{\mathcal{Q}_3(2)}^{2,1}\right) = -R_1^2, \quad\text{ } R \left(K_{\mathcal{Q}_3(2)}^{2,3}\right) =  \left(K_{\mathcal{Q}_3(2)}^{2,4}\right) = 0.
\end{align*}

We remark that the integrals of the corresponding quantum system are given by \begin{align*}
    \hat{A}_1 & \,= X_3, \quad\text{ }  \hat{A}_2 = X_4, \quad \text{ } \hat{A}_3 = X_1^2 + X_2^2,\quad \text{ }  \hat{A}_4 = X_2 X_5  - X_1 X_6, \\
    \hat{A}_5& \,  = X_5^2 + X_6^2,  \hskip 1.1cm \quad\text{  }\hat{A}_6 = X_5X_1 +X_6 X_2.
\end{align*}  They form the quadratic algebra $\hat{\mathcal{Q}}_3(2)$ with the following commutator relations \begin{align*}
    [\hat{A}_2,\hat{A}_3]& \, = -2 \hat{A}_3,\quad\text{ } [\hat{A}_2,\hat{A}_5] = 2 \hat{A}_5,  \\
    [\hat{A}_3,\hat{A}_4] & \, = - 4 \hat{A}_1 \hat{A}_3, \quad\text{ } [\hat{A}_3,\hat{A}_5] = -16(\hat{A}_1^2 + \hat{A}_2^2) - 8  \hat{A}_1\hat{A}_4 + 8 \hat{A}_2 \hat{A}_6,\\
    [\hat{A}_3,\hat{A}_6] & \,= 4 \hat{A}_2\hat{A}_3 + 8 \hat{A}_3,\quad 
      [\hat{A}_4,\hat{A}_5]  = -4  \hat{A}_1 \hat{A}_5, \quad
    [\hat{A}_5,\hat{A}_6]  = -4  \hat{A}_2\hat{A}_5. 
\end{align*}  By  verifying the Jacobi identity, we see that $\hat{\mathcal{Q}}_3(2)$ is associative. 
 

\subsubsection{Symmetry algebra from $\mathfrak{a}_4$}

The commutant of $\mathfrak{a}_4$ induces the following PDE \begin{align*}
    \{p(x_1,\ldots,x_6),x_4\}_o = x_1 \dfrac{\partial p}{\partial x_1} + x_2 \dfrac{\partial p}{\partial x_2} - x_5 \dfrac{\partial p}{\partial x_5}  - x_6 \dfrac{\partial p}{\partial x_6} = 0.
\end{align*} This PDE has the following linearly independent and indecomposable solutions \begin{align*}
    \varphi_1   =& \, x_3, \hskip 0.4cm \text{ }\quad  \varphi_2  = x_4, \hskip 0.3cm \text{ }\quad   \varphi_3  = x_1x_5,\\
    \varphi_4   =& \, x_1x_6,\text{ }\quad  \varphi_5  = x_2x_6, \text{ }\quad  \varphi_6 = x_2x_5
\end{align*}
The possible algebraic Hamiltonian is given by \begin{align*}
    \mathcal{H}_4  = \alpha x_4 + \gamma_1 \mathcal{C}_1 + \gamma_2 \mathcal{C}_2.
\end{align*} Here $\alpha = \Gamma_{000100} \in \mathbb{R}$. Let $\varphi_j(x_1,\ldots,x_6) = A_j$ for all $ 1 \leq j \leq 6$, and let $\textbf{Q}_2^{(4)} = \{A_1,\ldots,A_6\}.$ It is easy to verify that there is a polynomial relations among the integrals $A_j$ for all $j$ such that $ A_3 A_5 - A_4A_6 =0.$ By definition, one of the possible Hamiltonian is given by $\mathcal{H}_4 =  \alpha A_2 + \gamma_1 \mathcal{C}_1$, where $\mathcal{C}_1 = A_1^2 + A_2^2 -A_3 +A_5.$ Then $S_4 = \{\mathcal{H}_4, A_1,A_3, A_4 ,A_6\}$ defines a superintegrable system.  The integrals form the closed quadratic algebra $ \textbf{Alg} \langle \textbf{Q}_2^{(4)} \rangle$ with $\{A_2,A_j\} = 0$ for all $1 \leq j \leq 6$ and \begin{align*}
 \{A_1,A_3\} & \,= A_4 + A_6 = -\{A_1,A_5\}, \text{ }  \{A_1,A_4\}   =     \{A_1,A_6\} =A_5 -A_3, \\
 \{A_3,A_4\} & \, = 2 (A_3 A_1-  A_2 A_4) ,\text{ }  \{A_3,A_5\} =  2 (A_4  + A_6  )A_1,\text{ }  \{A_3,A_6\} = 2(A_3A_1 + A_6 A_2) \\
 \{A_4,A_5\} & \, = 2(A_5 A_1 - A_4A_2) , \text{ } \{A_4,A_6\} =  2(A_2 A_5 - A_3 A_1) ,\text{ }  \{A_5,A_6\} = -2(A_1 A_5 + A_2 A_6).
\end{align*} For all $ 1\leq i,j,k \leq 6,$   \begin{align*}
    \left\{A_i,\left\{A_j,A_k\right\}\right\} +\left\{A_j,\left\{A_k,A_i\right\}\right\}+\left\{A_k,\left\{A_i,A_j\right\}\right\}= 0.
\end{align*}   Hence $\mathcal{Q}_4(2) =\left(\textbf{Alg} \langle \textbf{Q}_2^{(4)} \rangle,\{\cdot,\cdot\}\right)$ is a Poisson quadratic algebra. This quadratic algebra admits the following central elements  \begin{align*}
 K_{\mathcal{Q}_3(2)}^1  =& \, d_1 A_2, \\
    K_{\mathcal{Q}_4(2)}^2 = & \, d_1A_2+ d_2\left(A_1^2+A_3+A_5\right) + d_3\left(-2 A_1 A_2+A_4-A_6\right) +d_4A_2^2+ d_5\left(A_3 A_5-A_4 A_6\right) \\
      K_{\mathcal{Q}_4(2)}^3 = & \,   K_{\mathcal{Q}_4(2)}^2 + d_6 A_2 \left(A_1^2+A_3+A_5\right)  + d_7A_2 (-2 A_1 A_2+A_4-A_6) + d_8\left(-4 (A_3+A_5) A_2^2 \right.\\
      & \, \left. +4 A_1 (A_6-A_4) A_2+(A_4-A_6)^2\right) +d_9A_2^3+ d_{10} A_2 (A_3 A_5-A_4 A_6) \\
K_{\mathcal{Q}_4(2)}^4 = & \,  K_{\mathcal{Q}_4(2)}^3 + d_{11}\left(A_1^4+2 (A_3+A_5) A_1^2+A_3^2+A_5^2+2 A_4 A_6\right)  - d_{12}\left(A_1^2+A_3+A_5\right) (2 A_1 A_2-A_4+A_6)\\
&+d_{13}(2 A_1 A_2-A_4+A_6)^2  +d_{14}A_2^2 \left(A_1^2+A_3+A_5\right)+d_{15}A_2^2 (-2 A_1 A_2+A_4-A_6) \\
& + d_{16} A_2 \left(-4 (A_3+A_5) A_2^2 +4 A_1 (A_6-A_4) A_2+(A_4-A_6)^2\right)+ d_{17}\left(A_1^2+A_3+A_5\right) (A_3 A_5-A_4 A_6) \\
& +d_{18}(2 A_1 A_2-A_4+A_6) (A_4 A_6-A_3 A_5)+d_{19}A_2^4+d_{20}A_2^2 (A_3 A_5-A_4 A_6) ,
\end{align*} where $d_t$ are real coefficients with $1 \leq t \leq 20$. Inductively, we can prove that there exist polynomials $g_k$ generated by monomials as follows \begin{align*}
   K_{\mathcal{Q}_4(2)}^k = & g_k\left(A_2, A_1^2+A_3+A_5,-2 A_1 A_2+A_4-A_6,A_3A_5 - A_4 A_6,\right. \\
   &-4 (A_3+A_5) A_2^2  +4  (A_6-A_4)A_1 A_2+(A_4-A_6)^2, A_1^4+2 (A_3+A_5) A_1^2+A_3^2+A_5^2+2 A_4 A_6, \\
   & 4 A_2 (A_6-A_4) A_1^3+\left((A_4-A_6)^2-4 A_2^2 (A_3+A_5)\right) A_1^2-4 A_2 (A_3+A_5) (A_4-A_6) A_1   \\
   & \left.+(A_3+A_5) (A_4-A_6)^2-4 A_2^2 \left(A_3^2+A_5^2+2 A_4 A_6\right)\right) 
\end{align*} for any $k \geq 4.$ We denote the generators of the polynomials $g_k$ by \begin{align*}
   K_{\mathcal{Q}_4(2)}^{1,1} =& A_2, \\
   K_{\mathcal{Q}_4(2)}^{2,1}= & A_1^2+A_3+A_5, \text{ } K_{\mathcal{Q}_4(2)}^{2,2} =-2A_1 A_2+A_4-A_6, \text{ }  K_{\mathcal{Q}_4(2)}^{2,3} =A_3A_5 - A_4 A_6 \\
   K_{\mathcal{Q}_4(2)}^{3,1} =& -4 (A_3+A_5) A_2^2  +4  (A_6-A_4)A_1 A_2+(A_4-A_6)^2, \\
   K_{\mathcal{Q}_4(2)}^{4,1} = & \, A_1^4+2 (A_3+A_5) A_1^2+A_3^2+A_5^2+2 A_4 A_6, \\
   K_{\mathcal{Q}_4(2)}^{5,1} = & \, 4 A_2 (A_6-A_4) A_1^3+\left((A_4-A_6)^2-4 A_2^2 (A_3+A_5)\right) A_1^2-4 A_2 (A_3+A_5) (A_4-A_6) A_1   \\
   &  +(A_3+A_5) (A_4-A_6)^2-4 A_2^2 \left(A_3^2+A_5^2+2 A_4 A_6\right).
\end{align*} Notice that there are constraints among $ K_{\mathcal{Q}_4(2)}^{5,1}$, $ K_{\mathcal{Q}_4(2)}^{4,1}$ ,$K_{\mathcal{Q}_4(2)}^{3,1}$, $K_{\mathcal{Q}_4(2)}^{2,3}$ and $ K_{\mathcal{Q}_4(2)}^{2,2}$. That is, \begin{align*}
K_{\mathcal{Q}_4(2)}^{4,1}& = \left(K_{\mathcal{Q}_4(2)}^{2,1}\right)^2 - 2  K_{\mathcal{Q}_4(2)}^{2,3}, \text{  }  K_{\mathcal{Q}_4(2)}^{3,1} = \left(K_{\mathcal{Q}_4(2)}^{2,2}\right)^2 - 4 \left(K_{\mathcal{Q}_4(2)}^{1,1}\right)^2K_{\mathcal{Q}_4(2)}^{2,1}\\
&K_{\mathcal{Q}_4(2)}^{5,1}K_{\mathcal{Q}_4(2)}^{2,1} - K_{\mathcal{Q}_4(2)}^{3,1}K_{\mathcal{Q}_4(2)}^{4,1} = 2 \left(K_{\mathcal{Q}_4(2)}^{2,2}\right)^2 \left(K_{\mathcal{Q}_4(2)}^{2,3}\right).
\end{align*} Hence the functionally independent Casimir functions are $K_{\mathcal{Q}_1(2)}^{1,1}, K_{\mathcal{Q}_4(2)}^{2,1}$, $ K_{\mathcal{Q}_4(2)}^{2,2}$ and $ K_{\mathcal{Q}_4(2)}^{2,3}$. The realisation admits the following constraints \begin{align*}
R_1^2 +R_3 + R_5 = R_2^2,\text{ } \quad R_4 - R_6 = 2 R_1 R_2  ,\text{ } \quad   R_3 R_5 = R_4 R_6. 
\end{align*} Then the realized Casimir functions are \begin{align*}
    R\left(K_{\mathcal{Q}_1(2)}^{1,1}\right)  = R_2, \text{ }\quad R\left(K_{\mathcal{Q}_1(2)}^{2,1}\right)  = R_2^2, \text{ } \quad   R\left(K_{\mathcal{Q}_1(2)}^{2,2}\right)  = R\left(K_{\mathcal{Q}_1(2)}^{2,3}\right)  = 0.
\end{align*}

The corresponding quantum version has the integrals \begin{align*}
    \hat{A}_1 & = X_3,\hskip 0.47cm \quad \hat{A}_2 = X_4, \quad \hskip 0.5cm  \hat{A}_3 = X_1 X_5 , \\
    \hat{A}_4 & = X_1 X_6, \quad \hat{A}_5 = X_2 X_6, \quad  \hat{A}_6 = X_2 X_5.
\end{align*} In addition to the zero commutators $[\hat{A}_2,\hat{A}_j] = 0$, they satisfy the quantum quadratic algebra  \begin{align*}
    [\hat{A}_1,\hat{A}_3]&= \hat{A}_4 + \hat{A}_6= -  [\hat{A}_1,\hat{A}_5], \text{ } [ \hat{A}_1,\hat{A}_4] = \hat{A}_1 \hat{A}_4+\hat{A}_2 \hat{A}_3+\frac{\hat{A}_4 \hat{A}_6-\hat{A}_3 \hat{A}_5}{2}  = [\hat{A}_1,\hat{A}_6],  \\
    [\hat{A}_3,\hat{A}_4] & =2 \hat{A}_1 \hat{A}_3-2 \hat{A}_2 \hat{A}_4-2 \hat{A}_4-2 \hat{A}_6 ,\\
    [\hat{A}_3,\hat{A}_5] & = 2 \hat{A}_1 \hat{A}_6-2 \hat{A}_2 \hat{A}_3+\hat{A}_3 \hat{A}_5-\hat{A}_4 \hat{A}_6, \text{ } [\hat{A}_3,\hat{A}_6] = 2 (\hat{A}_3 \hat{A}_1 + \hat{A}_6\hat{A}_2),  \\
    [\hat{A}_4,\hat{A}_5] & = 2(\hat{A}_5 \hat{A}_1 - \hat{A}_4 \hat{A}_2)  ,\text{ }  [\hat{A}_4,\hat{A}_6] = 2 \hat{A}_1 \hat{A}_4+2 \hat{A}_2 \hat{A}_5-\hat{A}_3 \hat{A}_5+\hat{A}_4 \hat{A}_6,\\
    [\hat{A}_5,\hat{A}_6] & =  2(\hat{A}_5\hat{A}_1 + \hat{A}_6 - \hat{A}_6\hat{A}_2 - \hat{A}_4)-2(\hat{A}_4 + \hat{A}_6) .
\end{align*}  Direct calculations show that the Jacobi identity for the commutator relations holds. Hence $\hat{\mathcal{Q}}_4(2)$ is an associative algebra.

\subsubsection{Symmetry algebra from $\mathfrak{a}_5$}
\label{4.1.5}
The commutant of $\mathfrak{a}_5$ induces the following PDE \begin{align*}
    \{p(x_1,\ldots,x_6),x_5\}_o = 2x_4 \dfrac{\partial p}{\partial x_1}-2 x_3 \dfrac{\partial p}{\partial x_2} + x_6 \dfrac{\partial p}{\partial x_3} + x_5 \dfrac{\partial p}{\partial x_4} = 0.
\end{align*} It has $6$ linearly independent and indecomposable homogeneous polynomial solutions given by \begin{align*}
    \varphi_1  & = x_5,\quad \varphi_2  = x_6 , \quad \varphi_3  = x_4^2-x_5x_1, \\
    \varphi_4  & = x_1 x_6 - x_2 x_5 -2 x_3 x_4, \text{ }  \varphi_5  = x_3^2 + x_2 x_6, \text{ }  \varphi_6 =  x_3x_5 - x_4 x_6.
\end{align*}
Similar to the discussion above, the possible choice of an algebraic Hamiltonian is  \begin{align*}
    \mathcal{H} = \alpha x_5 + \gamma_1 \mathcal{C}_1 + \gamma_2 \mathcal{C}_2.
\end{align*} Here $\alpha = \Gamma_{000010} \in \mathbb{R}$. Let $A_j = \varphi_j(x_1,\ldots,x_6)$ for all $ 1 \leq j \leq 6$, and let $\textbf{Q}_2^{(5)} = \{A_1,\ldots,A_6\}.$  They form the quadratic algebra $\textbf{Alg} \left\langle \textbf{Q}_2^{(5)} \right\rangle$ with non-zero  brackets \begin{align*}
  \{A_2,A_3\} &=-2 A_6 =-  \{A_2,A_5\} , \text{ }
    \{A_2,A_6\}  =A_1^2 + A_2^2, \\
\{A_3,A_6\} &  = A_1A_4 - 2 A_2 A_3 = - \{A_5,A_6\} .  
\end{align*} Similar to previous calculations, by checking $   \left\{A_i,\left\{A_j,A_k\right\}\right\} +\left\{A_j,\left\{A_k,A_i\right\}\right\}+\left\{A_k,\left\{A_i,A_j\right\}\right\}= 0$ via a direct calculation for all $1 \leq i,j,k \leq 6$, we conclude that $\mathcal{Q}_5(2) = \left(\textbf{Alg} \left\langle \textbf{Q}_2^{(5)} \right\rangle,\{\cdot,\cdot\}\right)$ is a Poisson algebra. Since $A_4 = \mathcal{C}_1$ and $A_2, A_3,A_6$ are functionally independent,
 the algebraic superintegrable system is given by $S_5 = \{\mathcal{H}_5 = \alpha A_1 + \gamma_1 A_4, A_2,A_3,A_5\}.$ We now provide the Casimir operators $K_{\mathcal{Q}_5(2)}$ for $\mathcal{Q}_5(2)$ by finding all the polynomial solutions in $\{K_{\mathcal{Q}_5(2)},A_j\} = 0$ for all $j$. A long and routine calculation shows that the Casimir operators of this algebra are given by \begin{align*}
K_{\mathcal{Q}_5(2)}^1 = & e_1A_1 + e_2\left(A_3 + A_5\right) + e_3 A_4 \\
K_{\mathcal{Q}_5(2)}^2 = &     K_{\mathcal{Q}_5(2)}^1+e_4A_1^2+e_5(A_3+A_5)^2+e_6A_4^2+e_7A_1 (A_3+A_5)+ e_8A_1 A_4+e_9 A_4 (A_3+A_5) \\
= & K_{\mathcal{Q}_5(2)}^1 + \left( K_{\mathcal{Q}_5(2)}^1 \right)^2 - \left[\frac{(e_7A_1+e_9A_4) }{e_2}\left(K_{\mathcal{Q}_5(2)}^1-(e_1 A_1+ e_3A_4)\right)+e_8A_1 A_4   \right] \\
K_{\mathcal{Q}_5(2)}^3 
=& K_{\mathcal{Q}_5(2)}^2 + \left( K_{\mathcal{Q}_5(2)}^1 \right)^3-\frac{1}{e_2}\left(2 e_{12} A_1^2 + 5 e_{15} A_1 A_4 + e_{19}A_4^2\right) \left(K_{\mathcal{Q}_5(2)}^1 -(e_1A_1+e_3A_4)\right) \\
& -2\left(e_{14}A_1+e_{18}A_4\right)  \left(K_{\mathcal{Q}_5(2)}^1 -(e_1A_1+e_3A_4)\right)^2 -2 A_1A_4\left(e_{13}A_1  + e_{16} A_4 \right) \\
& + e_{10} \left( A_2^2 A_3-A_5 A_1^2-A_2 A_4 A_1+A_6^2\right)  \\
K_{\mathcal{Q}_5(2)}^4   
 = & K_{\mathcal{Q}_5(2)}^3 + \left( K_{\mathcal{Q}_5(2)}^1 \right)^4 + \ell_4 \left(A_2^2 A_3-A_5 A_1^2-A_2 A_4 A_1+A_6^2, A_1,A_3+A_5,A_4\right).
\end{align*} Here $e_l$ are real coefficients for all $1 \leq l \leq 10$. Inductively, we observe that, for any $m \geq 4,$ there exist polynomials $\ell_k $ such that \begin{align*}
     K_{\mathcal{Q}_5(2)}^m =  K_{\mathcal{Q}_5(2)}^{m-1} + \left( K_{\mathcal{Q}_5(2)}^1\right)^m + \ell_m \left(A_2^2 A_3-A_5 A_1^2-A_2 A_4 A_1+A_6^2, A_1,A_3+A_5,A_4\right).
\end{align*} Hence, there are $4$ linearly independent Casimir operators \begin{align*}
     K_{\mathcal{Q}_5(2)}^{1,1} = A_1  , \text{  }  K_{\mathcal{Q}_5(2)}^{1,2} =   A_4, \text{ } K_{\mathcal{Q}_5(2)}^{1,3} = A_5 + A_3, \text{  } K_{\mathcal{Q}_5(2)}^{3,1} = A_2^2 A_3-A_5 A_1^2-A_2 A_4 A_1+A_6^2.
\end{align*} Moreover, the classical version of the realisation $ \eqref{eq:real}$ gives functionally independent Casimir $ R\left(  K_{\mathcal{Q}_5(2)}^{1,1}\right) = R_1 $ and $ R\left(K_{\mathcal{Q}_5(2)}^{1,3}\right) = (R_1^2 + R_2^2)R_3 + R_6^2. $

Using the symmetrization map $\Lambda$ in $\eqref{eq:symme}$, the corresponding quantum case has the following integrals \begin{align*}
    \hat{A}_1 & = X_5, \text{ } \hat{A}_2 = X_6, \text{ } \hat{A}_3 = -X_4 + X_2 X_6 + X_3^2, \text{ } \hat{A}_4 =X_3 - \frac{1}{2}(X_2 X_5 -X_1 X_6) + X_3X_4,\\
    \hat{A}_5 & = X_4^2 +X_4 - X_1 X_5 ,\quad \text{ } \hat{A}_6 =  X_4 X_6 -X_3 X_5.
\end{align*} They form the quantum quadratic algebra $\hat{\mathcal{Q}}_5(2)$ with commutation relations $[ \hat{A}_1,\hat{A}_j] =0$ and \begin{align*}
    [\hat{A}_2,\hat{A}_3] & =  [\hat{A}_2,\hat{A}_5]=-2 \hat{A}_6, \text{ } [\hat{A}_2,\hat{A}_6] = -\hat{A}_1^2 - \hat{A}_2^2,  \\
 [\hat{A}_3,\hat{A}_6]& =    [\hat{A}_5,\hat{A}_6] = -2 \hat{A}_1\hat{A}_4 + \hat{A}_2\hat{A}_5 + 2 \hat{A}_6.
\end{align*}  By direct calculations, it can be shown that the Jacobi identity for the commutator relations holds. Hence $\hat{\mathcal{Q}}_5(2),$ with the commutator $[\cdot,\cdot]$, is an associative algebra.

\subsubsection{Symmetry algebra from $\mathfrak{a}_6$}
The commutant of $\mathfrak{a}_6$ induces the following PDE \begin{align*}
    \left\{p(x_1,\ldots,x_6),x_6\right\}_o = 2x_3 \dfrac{\partial p}{\partial x_1} +2 x_4 \dfrac{\partial p}{\partial x_2} - x_5 \dfrac{\partial p}{\partial x_3}  + x_6 \dfrac{\partial p}{\partial x_4} = 0.
\end{align*} Calculations show that the PDE above has the following linearly independent and indecomposable homogeneous polynomial solutions \begin{align*}
    A_1 &= x_5 ,\text{ } A_2 =  x_6, \text{ } A_3 =x_5 x_1 + x_3^2, \text{ } A_4 =  x_6 x_1 - x_2 x_5 - 2x_3 x_4 , \\
   A_5 & = x_2 x_6 - x_4^2, \text{ } \quad  A_6 = x_6 x_3 + x_4 x_5
\end{align*}   which  form a finite set $\textbf{Q}_2^{(6)}$.

All possible algebraic Hamiltonian is of the form \begin{align*}
    \mathcal{H}_6 = \alpha A_2 + \gamma_1 \mathcal{C}_1 + \gamma_2 \mathcal{C}_2,
\end{align*} where $\alpha = \Gamma_{000001} \in \mathbb{R}$ is an arbitrary coefficient.    Then $\{A_2,A_j\} =0$ for all $1 \leq j \leq 6$ and the non-zero Poisson brackets are given by \begin{align*}
    \{A_1,A_3\} & = - 2A_6 = -  \{A_1,A_5\}     , \text{ } \{A_1,A_6\}   =-A_1^2 - A_2^2 \\
      \{A_3,A_6\} & =   -\{A_5,A_6\}   = A_2 A_4 + 2 A_1A_5.
\end{align*}After checking the Jacobi identity vanished for all the generators in  $\textbf{Q}_2^{(6)}$, we verify that the above relations form the quadratic Poisson algebra $\mathcal{Q}_6(2) = \left(\textbf{Alg} \langle \textbf{Q}_2^{(6)} \rangle ,\{\cdot,\cdot\}\right) $. Moreover, since $A_4 = \mathcal{C}_1 $, $S_6 = \big\{\mathcal{H}_6 = \alpha A_2 + \gamma_1 A_4, A_1,A_3,A_5,A_6 \big\}$ forms a superintegrable system. This algebra admits the following functionally independent Casimir functions \begin{align*}
        K_{\mathcal{Q}_6(2)}^{1,1} = A_2  , \text{  }  K_{\mathcal{Q}_6(2)}^{1,2} =   A_4, \text{ } K_{\mathcal{Q}_6(2)}^{1,3} = A_5 + A_3, \text{  } K_{\mathcal{Q}_6(2)}^{3,1} = A_1^2 A_3-A_5 A_2^2-A_2 A_4 A_1+A_6^2.
\end{align*} Similar to the result presented in Subsection $\ref{4.1.5},$  the classical version of the realisation $ \eqref{eq:real}$ gives functionally independent Casimir $ R\left(  K_{\mathcal{Q}_6(2)}^{1,1}\right) = R_2 $ and $ R\left(K_{\mathcal{Q}_5(2)}^{1,3}\right) = (R_1^2 + R_2^2)R_3 + R_6^2. $

The corresponding quantum case has the integrals \begin{align*}
    \hat{A}_1 &= X_5, \text{ } \hat{A}_2 = X_6, \text{ } \hat{A}_3 = X_3^2 + X_1 X_5 -X_4, \\ 
    \hat{A}_4 &  =X_3 - \frac{X_2 X_5 -X_1 X_6 }{2} + X_3X_4, \\
    \hat{A}_5&  =X_4 + X_4^2 - X_2 X_6 X_3 , \text{ } \quad \hat{A}_6 = X_6 + X_4 X_5 .
\end{align*} Then $ [\hat{A}_1,\hat{A}_j] = 0$ for all $1 \leq j \leq 6$ and the non-zero commutators are \begin{align*}
    [\hat{A}_1,\hat{A}_3]  &  = [\hat{A}_1,\hat{A}_5]  = -2 \hat{A}_6,  \text{ } [\hat{A}_1,\hat{A}_6] = -A_1^2 - A_2^2,   \\
    [\hat{A}_3,\hat{A}_6] & =    [\hat{A}_5,\hat{A}_6]   = 2\hat{A}_2 \hat{A}_4 +  \hat{A}_1\hat{A}_5 + 2\hat{A}_6 
\end{align*}  We can check that the Jacobi identity holds for the commutator relations, and hence $\hat{\mathcal{Q}}_6(2)$ is associative.

\subsubsection{Reduction chain $\mathfrak{a}_c \subset \mathfrak{c}(2)$}

The commutant of $\mathfrak{a}_c$ induces the following PDE \begin{align}
    \left\{p(x_1,\ldots,x_6),a x_3 + b x_4\right\}_o = a  \left\{p(x_1,\ldots,x_6),x_3 \right\}_o+ b\left\{p(x_1,\ldots,x_6), x_4\right\}_o= 0. \label{eq:4.7}
\end{align} Here $a = \frac{\cos c_1}{2} $ and $ b = \frac{\sin c_2}{2}$ are non-zero $0 < c_1 < \frac{\pi}{2}$ and $\frac{\pi}{2} < c_2 < \pi$. Calculations show that the PDE $\eqref{eq:4.7}$ has the following linearly independent and indecomposable homogeneous polynomial solutions as follows \begin{align*}
    A_1 &= x_3 ,\text{ } A_2 =  x_4, \text{ } A_3 =x_5 x_1 + x_2 x_6 , \text{ } A_4 =  x_6 x_1 - x_2 x_5. 
\end{align*}  A direct calculation shows that $\{A_i,A_j\} =0$ for all $1 \leq i,j \leq 4.$ Hence the polynomial algebra generated by $\textbf{Q}_c = \{A_1,A_2,A_3,A_4\}$ is Abelian. It is easy to check that all these generators are functionally independent.  By definition, a possible algebraic Hamiltonian is given by \begin{align*}
    \mathcal{H}_6 = \alpha A_1 + \beta A_2 + \gamma_1 \mathcal{C}_1  
\end{align*}   Here $\mathcal{C}_1 = A_1^2 - A_2^2 + A_3.$ Therefore, all the elements in $ \textbf{Q}_c$ defines an integrable system.




\subsection{reduction chains $\mathfrak{a}_{(ij)} \subset \mathfrak{c}(2)$}

Recall that the two-dimensional subalgebras of the conformal algebra $\mathfrak{c}(2)$ are classified as Abelian and non-Abelian, respectively. The explicit generators for each $\mathfrak{a}_{(ij)}$ are provided in $\eqref{eq:2da}$ and $\eqref{eq:2dc}$.  Let $\mathfrak{a}_{(ij)}^* $ be its dual space. Then it is straightforward from $\eqref{eq:1}$ that the algebraic Hamiltonian can be expressed in the following way \begin{align*}
    \mathcal{H}_1 = \alpha x y + \beta_1 x + \beta_2 y + \gamma_1 \mathcal{C}_1 + \gamma_2 \mathcal{C}_2,
\end{align*} where $\alpha,\beta_1,\beta_2 \in \mathbb{R}$ are arbitrary coefficients. It is clear that $\mathcal{H}_1$ is a quadratic integral.  We first consider the Abelian case.

\subsubsection{Polynomial algebra generated by MASAs}


\begin{lemma}
Let $\mathfrak{c}(2)$ be the $6$-dimensional conformal algebra with basis $\eqref{eq:real}$, and let $\mathfrak{m}$ be a MASA of $\mathfrak{c}(2)$. Then the commutant of $\mathfrak{m}$ is Abelian. That is, $\mathcal{C}_{\mathcal{U}(\mathfrak{c}(2))}(\mathfrak{m})$ is Abelian.  
\end{lemma}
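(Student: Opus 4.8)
The plan is to reduce to the finitely many representatives and then compute. By the classification of subalgebras of $\mathfrak{c}(2)$ in Section~\ref{3}, every $2$-dimensional MASA among the $\mathfrak{a}_{(ij)}$ is conjugate to $\mathfrak{a}_{(12)}$, $\mathfrak{a}_{(34)}$ or $\mathfrak{a}_{(56)}$; conjugate subalgebras have isomorphic commutants (first proposition of Section~\ref{3}), and $\mathfrak{a}_{(12)}\cong\mathfrak{a}_{(56)}$ are MANSs while $\mathfrak{a}_{(34)}$ is an OD MASA, so it suffices to treat $\mathfrak{a}_{(12)}$ and $\mathfrak{a}_{(34)}$. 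I would first compute the Poisson commutant in $\mathcal{S}(\mathfrak{c}^*(2))$ and then transport the conclusion to $\mathcal{U}(\mathfrak{c}(2))$ via the strata-preserving isomorphism $\Lambda$ of Section~\ref{2}.

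For each representative $\mathfrak{a}_{(ij)}$ I would solve the pair of linear first-order PDEs $\{x_i,p\}=0=\{x_j,p\}$, exactly as in the one-generator computations earlier in Section~\ref{4}. By the formula $\mathcal{N}(\mathfrak{c}(2))=\dim\mathfrak{c}(2)-\mathrm{rank}\big(C^{l}_{j,k}x_l\big)=6-2=4$ there are precisely four functionally independent polynomial invariants; four of them are visibly available, namely the coordinate functions $x_i,x_j$ dual to the two generators of $\mathfrak{a}_{(ij)}$ together with the two quadratic Casimirs $\mathcal{C}_1,\mathcal{C}_2$ of $\mathfrak{c}(2)$ from~\eqref{eq:Casi}, and these four are functionally independent. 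The key step is then to verify, degree by degree from the explicit PDE solutions, that $\{x_i,x_j,\mathcal{C}_1,\mathcal{C}_2\}$ is an integrity basis, i.e.\ $\mathcal{C}_{\mathcal{S}(\mathfrak{c}^*(2))}(\mathfrak{a}_{(ij)}^*)=\mathbb{R}[x_i,x_j,\mathcal{C}_1,\mathcal{C}_2]$. Granting this, the Poisson commutant is commutative, since $\{x_i,x_j\}=0$ ($\mathfrak{a}_{(ij)}$ abelian) and $\mathcal{C}_1,\mathcal{C}_2$ are Poisson-central.

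To lift, let $\mathcal{A}\subseteq\mathcal{U}(\mathfrak{c}(2))$ be the unital subalgebra generated by $X_i,X_j$ and the quantum Casimirs $K_1,K_2\in\mathcal{Z}(\mathcal{U}(\mathfrak{c}(2)))$. These four elements pairwise commute, so $\mathcal{A}$ is commutative, and $\mathcal{A}\subseteq\mathcal{C}_{\mathcal{U}(\mathfrak{c}(2))}(\mathfrak{a}_{(ij)})$. By the PBW theorem the ordered monomials $X_i^{a}X_j^{b}K_1^{c}K_2^{d}$ span $\mathcal{A}$ and are linearly independent in $\mathcal{U}(\mathfrak{c}(2))$ (their principal symbols are the monomials $x_i^{a}x_j^{b}\mathcal{C}_1^{c}\mathcal{C}_2^{d}$, which are linearly independent because $x_i,x_j,\mathcal{C}_1,\mathcal{C}_2$ are algebraically independent), so the filtered dimensions of $\mathcal{A}$ coincide with those of $\mathbb{R}[x_i,x_j,\mathcal{C}_1,\mathcal{C}_2]=\mathcal{C}_{\mathcal{S}(\mathfrak{c}^*(2))}(\mathfrak{a}_{(ij)}^*)$, which by the vector-space isomorphism $\Lambda$ of Section~\ref{2} coincide with those of $\mathcal{C}_{\mathcal{U}(\mathfrak{c}(2))}(\mathfrak{a}_{(ij)})$. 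Hence $\mathcal{A}=\mathcal{C}_{\mathcal{U}(\mathfrak{c}(2))}(\mathfrak{a}_{(ij)})$ and the commutant is abelian.

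The main obstacle is the integrity-basis step: a priori the invariant ring could contain elements algebraic over $\mathbb{R}[x_i,x_j,\mathcal{C}_1,\mathcal{C}_2]$ but not polynomial in it, so one genuinely has to inspect the low-degree polynomial solutions of each PDE system (separately for the MANS $\mathfrak{a}_{(12)}$ and the OD MASA $\mathfrak{a}_{(34)}$). A clean way to bypass this is to complexify. Since $\mathfrak{c}(2)\cong\mathfrak{sl}(2,\mathbb{C})_{\mathbb{R}}$, one has $\mathfrak{c}(2)\otimes_{\mathbb{R}}\mathbb{C}\cong\mathfrak{sl}(2,\mathbb{C})\oplus\mathfrak{sl}(2,\mathbb{C})$, and any real $2$-dimensional MASA is of the form $\mathbb{C}v$ with $v$ a regular element, so it becomes $\ell^{(1)}\oplus\ell^{(2)}$ with each $\ell^{(r)}$ a $1$-dimensional MASA of $\mathfrak{sl}(2,\mathbb{C})$ of the same type; then $\mathcal{C}_{\mathcal{U}(\mathfrak{c}(2)\otimes\mathbb{C})}(\mathfrak{h}\otimes\mathbb{C})=\mathcal{C}_{\mathcal{U}(\mathfrak{sl}_2)}(\ell^{(1)})\otimes\mathcal{C}_{\mathcal{U}(\mathfrak{sl}_2)}(\ell^{(2)})$, and each factor is commutative --- it is the zero-weight subalgebra $\mathbb{C}[H,EF]$ when $\ell^{(r)}$ is a semisimple line and $\mathbb{C}[E,\Omega]$ (with $\Omega$ the Casimir) when it is a nilpotent line. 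A tensor product of commutative algebras is commutative, and commutativity descends from $\mathcal{C}_{\mathcal{U}(\mathfrak{c}(2))}(\mathfrak{h})\otimes_{\mathbb{R}}\mathbb{C}$ to $\mathcal{C}_{\mathcal{U}(\mathfrak{c}(2))}(\mathfrak{h})$, which proves the lemma.
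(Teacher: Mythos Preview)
Your first approach---solving the PDE systems for representatives of the MASAs and checking that the Poisson brackets of the resulting generators vanish---is precisely what the paper does: it writes the coefficient matrices $C_1,C_2,C_3$ for $\mathfrak{a}_{(12)},\mathfrak{a}_{(34)},\mathfrak{a}_{(56)}$, lists four polynomial invariants $\varphi_1,\ldots,\varphi_4$ in each case, verifies $\{\varphi_i,\varphi_j\}=0$, and then invokes the symmetrisation map $\Lambda$ in one line. Your packaging of the generators as $x_i,x_j,\mathcal{C}_1,\mathcal{C}_2$ is equivalent (for $\mathfrak{a}_{(34)}$ the paper's $\varphi_3,\varphi_4$ are $\mathcal{C}_1-x_3^2+x_4^2$ and $2x_3x_4-\mathcal{C}_2$), and your filtered-dimension argument for the lift to $\mathcal{U}(\mathfrak{c}(2))$ is stated more carefully than the paper's bare appeal to $\Lambda$. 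Both you and the paper leave the integrity-basis step somewhat implicit.

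Your complexification route, on the other hand, is genuinely different and more conceptual: passing to $\mathfrak{sl}_2(\mathbb{C})\oplus\mathfrak{sl}_2(\mathbb{C})$ replaces the three separate PDE computations by the single standard fact that $\mathcal{C}_{\mathcal{U}(\mathfrak{sl}_2)}(\mathbb{C}v)$ is commutative for every nonzero $v$ (namely $\mathbb{C}[H,EF]$ in the semisimple case and $\mathbb{C}[E,\Omega]$ in the nilpotent case), and it bypasses the integrity-basis verification entirely. The paper's direct computation has the practical advantage of exhibiting the generators explicitly in the real basis \eqref{eq:real}, which is what the surrounding Section~\ref{4} needs for the Hamiltonian constructions; your complexification argument has the advantage of being a structural proof rather than a case check.
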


 \begin{proof}
In Section $\ref{3},$ all the maximal Abelian subalgebras of $\mathfrak{c}(2)$ are, up to conjugate, $\mathfrak{a}_{(12)},\mathfrak{a}_{(56)} \cong \mathfrak{e}(1) \oplustilde \mathfrak{e}(1)$ and $\mathfrak{a}_{(34)} \cong \mathfrak{o}(2) \oplustilde \mathfrak{o}(1,1)$, where $\oplustilde$ defines a direct sum of the Lie algebra. We give the explicit formula for the integrals by finding the commutant $\mathcal{C}_{\mathcal{S}(\mathfrak{c}(2))}\left(\mathfrak{a}_{(12)}\right)$, and then calculate their Poisson brackets.  From the commutator relations, the action of the elements in $\mathfrak{a}_{(12)}$ on the generators of $\mathcal{S}(\mathfrak{c}(2))$ forms the following matrix \begin{align*}
    C_1 = \begin{pmatrix}
    0 &  0 & x_2 & -x_1 & -2x_4 & -2x_3\\
    0 & 0 &   -x_1 &  -2x_2 & 2x_3& -2 x_4
    \end{pmatrix}.
\end{align*} It follows that \begin{align*} 
     \tilde{X}_1(p) &  = x_2 \dfrac{\partial p}{\partial x_3} + x_1 \dfrac{\partial p}{\partial x_4}-2 x_4 \dfrac{\partial p}{\partial x_5} -2 x_3 \dfrac{\partial p}{\partial x_6} = 0;   \\
   \tilde{X}_2(p) &  =  x_2 \dfrac{\partial p}{\partial x_4}+2 x_3 \dfrac{\partial p}{\partial x_5} -2 x_4 \dfrac{\partial p}{\partial x_6}  = 0.    
\end{align*} 

Then it is easy to verify the following $4$ indecomposable homogeneous polynomial solutions \begin{align*}
C_1: \quad \begin{matrix}
  &  \varphi_1   = x_1 ,\quad \varphi_2  = x_2 ,\quad \varphi_3  = x_3^2-x_4^2+x_1 x_5+x_2 x_6, \\
  &   \varphi_4  = -2 x_3 x_4-x_2 x_5+x_1 x_6.
    \end{matrix}
\end{align*} Let $A_j^{(1)} = \varphi_j(x_1,\ldots,x_6)$ for $1 \leq j \leq 4.$ Simple calculation shows that $ \left\{A_j^{(1)},A_k^{(1)}\right\} = 0.$ Similarly, the actions of the elements in $\mathfrak{a}_{(34)} $ and $\mathfrak{a}_{(56)}$ on the generators of $\mathfrak{c}(2)$ form the matrices \begin{align*}
    C_2  = \begin{pmatrix}
    -x_2 & x_1 & 0 & 0 & - x_6 & -x_5 \\
    x_1 & x_2 &  0 & 0 & -x_5 & -x_6
    \end{pmatrix} \text{ and }    C_3  = \begin{pmatrix}
        2x_4 & -2x_3  & x_6 & x_5 & 0 & 0 \\
    2x_3 & 2x_4 & -x_5 & x_6 &  0 & 0
    \end{pmatrix}
\end{align*} Then solutions generated from the matrices $C_2$ and $C_3$ are \begin{align*}
  C_2: \quad   \varphi_1 & = x_3, \text{ } \varphi_2  = x_4, \text{ } \varphi_3  = x_1x_5 + x_2 x_6, \text{ } \varphi_4  = x_1x_6 - x_2 x_5 ; \\
  C_3:  \quad \varphi_1 & = x_5, \text{ } \varphi_2  = x_6, \text{ } \varphi_3  = 2x_3x_4 + x_2 x_5 - x_1 x_6, \\ \varphi_4 &  =x_3^2-x_4^2+x_1 x_5+x_2 x_6.
\end{align*}  Let $A_j^{(s)} = \varphi_j(x_1,\ldots,x_6)$ for $s = 2,3$. It is easy to check that the Poisson brackets of these solutions are zero. Hence, using the symmetrization isomorphism $\eqref{eq:symme},$ the symmetry algebras generated from the subalgebra $\mathfrak{a}_{(34)}$ and $ \mathfrak{a}_{(56)}$ are Abelian. Moreover, note that the Abelian polynomial algebras for $\mathfrak{a}_{(12)}$ and $\mathfrak{a}_{(56)}$ are given by $\mathcal{P}_1 =\mathfrak{a}_{(12)} \oplus \mathcal{Z} \left(\mathcal{U}(\mathfrak{c}(2))\right)$ and $\mathcal{P}_5 =\mathfrak{a}_{(56)} \oplus \mathcal{Z} \left(\mathcal{U}(\mathfrak{c}(2))\right)$ respectively.

In conclusion, all $\mathcal{C}_{\mathcal{U}(\mathfrak{c}(2))}(\mathfrak{a}_{ij})$ are Abelian on the basis of $\eqref{eq:real}.$ \end{proof}

\subsubsection{Algebraic Hamiltonians from 2D non-Abelian subalgebras} 

Suppose that $\mathfrak{a}  = \mathrm{span} \{X,Y: [X,Y]_o = X\}.$ By definition $\ref{H}$, the algebraic Hamiltonians in the dual space are \begin{align*}
    \mathcal{H}_2 = \alpha x y + \beta_1 x + \beta_2 y + \gamma_1 \mathcal{C}_1 + \gamma_2 \mathcal{C}_2, 
\end{align*} which satisfies $\{\mathcal{H}_2,p(x_1,\ldots,x_6)\} = 0$ for all $ p(x_1,\ldots,x_6) \in \mathcal{C}_{\mathcal{S}(\mathfrak{c}(2))}(\mathfrak{a}),$ where $\alpha,\beta_1,\beta_2,\gamma_1$ and $\gamma_2$ are constants. We will also provide a classical realisation for the Hamiltonian in each case via the momentum realisation in \eqref{eq:real}. Notice that, in what follows, all the non-Abelian two-dimensional subalgebra chains are equivalent to $\mathfrak{b}_0$ (Borel subalgebra in $\mathfrak{sl}(2)$). From $\eqref{eq:2dc},$ we consider the reduction chains $\mathfrak{a}_{(14)} \subset \mathfrak{c}(2),$ $\mathfrak{a}_{(24)} \subset \mathfrak{c}(2),$ $\mathfrak{a}_{(45)} \subset \mathfrak{c}(2) $ and $\mathfrak{a}_{(46)} \subset \mathfrak{c}(2).$ 

\vskip.2in
\noindent {\bf Case 1:}\begin{large}
    \textbf{Reduction Chain $\mathfrak{a}_{(14)} \subset \mathfrak{c}(2)$}
\end{large}

\quad

Starting with the subalgebra $\mathfrak{a}_{(14)}.$ The possible algebraic Hamiltonian is \begin{align}
    \mathcal{H} = \alpha x_1 x_4 + \beta_1 x_1 + \beta_4 x_4 + \gamma_1 \mathcal{C}_1 + \gamma_2 \mathcal{C}_2. \label{eq:eq14}
\end{align} Here, in this particular example, we have $\alpha = \Gamma_{100100}, \beta = \Gamma_{100000},\beta_4 = \Gamma_{000100} \in \mathbb{R}$. In order to facilitate a clearer understanding throughout this discussion, all the coefficients in Hamiltonians, in the following sections, have been substituted with the variables $\alpha$ and the respective $\beta_j, \beta_k$. Using the momentum realisation in $\eqref{eq:real}$, the Hamiltonian $\eqref{eq:eq14}$ is realized as \begin{align*}
    \title{\mathcal{H}}_{14} = \alpha x p_x^2 + \alpha y p_x p_y + (\beta_1 + \beta_2 x) p_x + \beta_2 y p_y.
\end{align*} 
For the system of PDEs $ \left\{p(x_1,\ldots,x_6),x_1\right\}_o = \left\{p(x_1,\ldots,x_6),x_4\right\}_o = 0$, it is sufficient to rewrite them as $C_{14} \cdot \left(\dfrac{\partial p}{\partial x_1},\ldots ,\dfrac{\partial p}{\partial x_6}\right)^T =0$, where \begin{align*}
    C_{14} = \begin{pmatrix}
        0& 0& x_2 & -x_1 & -2x_4 & -2x_3 \\
        x_1 & x_2 & 0& 0 & -x_5& -x_6
    \end{pmatrix}.
\end{align*} By using symbolic computing packages, we deduce that the linearly independent  and indecomposable homogeneous solutions of the PDEs above are \begin{align*}
    \varphi_1  & = x_1x_5  - x_4^2 , \text{ }\quad \varphi_2  =x_2 x_6 + x_3^2, \text{ } \quad \varphi_3  = x_1 x_6 -2x_3 x_4 - x_2 x_5 .
\end{align*} Notice that $\{\varphi_i,\varphi_j\}  = 0$ for all $1 \leq i, j \leq 4,$ which implies that the commutant $\mathcal{C}_{\mathcal{U}(\mathfrak{c}(2))}(\mathfrak{a}_{(14)})$ is Abelian.

\vskip.2in
\noindent {\bf Case 2:}\begin{large}
    \textbf{Reduction Chain $\mathfrak{a}_{(24)} \subset \mathfrak{c}(2)$}
\end{large}

\quad

Another example is the subalgebra $\mathfrak{a}_{(24)}$. The possible algebraic Hamiltonian is \begin{align}
    \mathcal{H}_{24} = \alpha x_2 x_4 + \beta_2 x_2 + \beta_4 x_4 + \gamma_1 \mathcal{C}_1 + \gamma_2 \mathcal{C}_2. \label{eq:eq24}
\end{align} In terms of the realisation in $\eqref{eq:real}$, this algebraic Hamiltonian is \begin{align*}
    \title{\mathcal{H}}_{24} = \left(\alpha  y + (\beta_1 + \beta_2 y) \right)p_y + \alpha y p_y^2  + \beta_2 x p_x.
\end{align*} The system of PDEs $\{p(x_1,\ldots,x_6),x_2\}_o = \{p(x_1,\ldots,x_6),x_4\}_o = 0$ admits the following coefficients matrix \begin{align*}
    C_{24} = \begin{pmatrix}
        0&0 &-x_1 &-x_2 & 2x_3& -2x_4\\
        x_1& x_2 & 0 & 0 & -x_5 & -x_6
    \end{pmatrix}.
\end{align*} We can show that linearly independent and indecomposable homogeneous polynomial solutions of the system are \begin{align*}
    \varphi_1  & =  x_2 x_6 - x_4^2, \text{ }\quad \varphi_2  = x_1 x_6 -2 x_3x_4 - x_2 x_5  , \text{ } \quad \varphi_3  = x_1 x_5 + x_3^2.
\end{align*}  Note that $\{\varphi_i,\varphi_j\} = 0$ for all $1 \leq i, j \leq 3 $. It follows that the commutant $\mathcal{C}_{\mathcal{U}(\mathfrak{c}(2))}(\mathfrak{a}_{(24)} )$ is Abelian. Let $A_j = \varphi_j(x_1,\ldots,x_6)$ for $1 \leq j \leq 3.$ Then the Hamiltonian $\mathcal{H}_{24}$ with integral $A_j$ forms an integrable system.

\vskip.2in
\noindent {\bf Case 3:}\begin{large}
    \textbf{Reduction Chain $\mathfrak{a}_{(45)} \subset \mathfrak{c}(2)$}
\end{large}

\quad

We now have a look at the commutant of the subalgebra $\mathfrak{a}_{(45)}$. The possible algebraic Hamiltonian is \begin{align}
    \mathcal{H}_{45} = \alpha x_4 x_5 + \beta_4 x_4 + \beta_5 x_5 + \gamma_1 \mathcal{C}_1 + \gamma_2 \mathcal{C}_2. \label{eq:eq45}
\end{align}  In terms of the realisation, the algebraic Hamiltonian $\eqref{eq:eq45}$ becomes \begin{align*}
    \title{\mathcal{H}}_{45}  & = \alpha  \left((x^2 - y^2)x p_x^2 + 2 xy^2 p_y^2  + (x^2 - y^2 + 2 x^2 y )y p_xp_y \right) \\
    &+ (\beta_1 x + \beta_2(x^2 - y^2)  + \beta_1 x) p_x + (2 \beta_2 x y + \beta_1 y) p_y.
\end{align*} The coefficient matrix for the system of PDEs $\{p(x_1,\ldots,x_6),x_5\}_o = \{p(x_1,\ldots,x_6),x_4\}_o = 0$  is given by \begin{align*}
    C_{45} = \begin{pmatrix}
        x_1& x_2 & 0 & 0 & -x_5 & -x_6\\
       2 x_4 & -2x_3 & x_6  & x_5 & 0 & 0 
    \end{pmatrix}.
\end{align*} We can show that indecomposable homogeneous polynomial solutions of the system $C_{45}$ are \begin{align*}
    \varphi_1  & = x_3^2 + x_2 x_6, \text{ }\quad \varphi_2  =-2 x_3 x_4 - x_2 x_5 + x_1 x_6, \text{ }\quad \varphi_3  = x_1 x_5 -x_4^2.  
\end{align*}  Notice that $\{\varphi_i,\varphi_j\}  = 0$ for all $1 \leq i, j \leq 3,$ which implies that the commutant $\mathcal{C}_{\mathcal{U}(\mathfrak{c}(2))}(\mathfrak{a}_{(45)} )$ is Abelian. 

\vskip.2in
\noindent {\bf Case 4:}\begin{large}
    \textbf{Reduction Chain $\mathfrak{a}_{(46)} \subset \mathfrak{c}(2)$}
\end{large}

\quad

Now, we consider the commutant of the subalgebra $\mathfrak{a}_{(46)}$.  The possible algebraic Hamiltonian is \begin{align}
    \mathcal{H}_{46} = \alpha x_4 x_6 + \beta_4 x_4 + \beta_6 x_6 + \gamma_1 \mathcal{C}_1 + \gamma_2 \mathcal{C}_2. \label{eq:eq46}
\end{align}  Moreover, the algebraic Hamiltonian $\eqref{eq:eq46}$ in terms of realisation $\eqref{eq:real}$ is \begin{align*}
    \title{\mathcal{H}}_{46} = \alpha  \left( 2 x^2 y p_x^2 + x(3y^2 -x^2) p_xp_y  + (y^2 - x^2) y p_y^2\right) + \left(\beta_1 x + 2 \beta_2 x y \right) p_x + (\beta_1 y + \beta_2(y^2 - x^2) ) p_y.
\end{align*} The system of PDEs $\{p(x_1,\ldots,x_6),x_6\}_o = \{p(x_1,\ldots,x_6),x_4\}_o = 0$ has the following coefficient matrix \begin{align*}
    C_{46} = \begin{pmatrix}
       x_1& x_2 & 0 & 0 & -x_5 & -x_6\\
       2 x_3 & 2 x_4 & -x_5 & x_6 & 0 & 0
    \end{pmatrix}.
\end{align*} By using symbolic computing packages, the linearly independent and indecomposable solutions of the system $C_{46}$ are \begin{align*}
    \varphi_1  & = x_2x_6 - x_4^2, \text{ } \quad \varphi_2  = -2 x_3x_4 -x_2 x_5 + x_1 x_6, \text{ } \quad \varphi_3  = x_1 x_5 + x_3^2.  
\end{align*}   Notice that $\{\varphi_i,\varphi_j\} = 0$ for all $1 \leq i, j \leq 3$, which implies that the commutant $\mathcal{C}_{\mathcal{U}(\mathfrak{c}(2))}(\mathfrak{a}_{(46)} )$ is Abelian. Let $A_j = \varphi_j$ for all $1 \leq j \leq 3.$ With the Hamiltonian $\mathcal{H}$ defined in $ \eqref{eq:eq46}$, we obtain an integrable system $\{\mathcal{H}_{46},A_j\} =0.$   

Using the calculations for all of the above cases, we can conclude the following.  

\begin{proposition}
    Let $\mathfrak{c}(2)$ be a $6$-dimensional Lie algebra on the basis $\eqref{eq:real}$. Then the integrals that form the Abelian polynomial algebra generated from the subalgebra chain $\mathfrak{c}(1) \subset \mathfrak{c}(2)$ are \begin{align}
    \hat{A}  = \sum_{1 \leq k,j \leq 6, i_k + i_j = 2}\Gamma_{i_k,i_j} X_k^{i_k} X_j^{i_j} \in \hat{\textbf{q}}_2. \label{eq:qua}
\end{align}   Moreover, the degrees of all the homogeneous polynomial solutions are even.
\end{proposition}
\begin{proof}

Observe that the expression $\eqref{eq:qua}$ and the connection between the commutants can be inferred from the computations in the preceding sections. We will now demonstrate that these commutants exclude any integrals of odd degrees.

We use the induction method for the degrees of $P_h ,$ where $h = \deg P (X_1,\ldots,X_6)$. Let $h= 1.$ Assume that there exists a degree one polynomial such that $[P_1 ,\mathfrak{a}_{(ij)}] = 0.$ Take $P_1  = \sum_{i=1}^6 a_iX_i,$ where $a_i \neq 0$ are some constants for all $i.$ A simple calculation shows that \begin{align}
    [P_1 ,   X_1]=  [P_1 ,    X_4] = 0 \Rightarrow a_i = 0 \text{ for all } i. \label{eq:ii}
\end{align} Similar argument works for $\mathfrak{a}_{(24)}, \mathfrak{a}_{(45)}$ and $\mathfrak{a}_{(46)}$. Hence $P_1 = 0, $ which is a contradiction. Next, assume that $h= 3.$ The possible combinations are $P_3   =P_1 P_2 $ or $P_3 = P_1^3 $, where $P_1 \in \hat{\textbf{q}}_1$ and $P_2 \in \hat{\textbf{q}}_2.$ From the argument above, it is clear that the first case fails. Hence all $P_3 $ in the solution sets are linearly independent. Let $P_3  = X_1^{i_1}\cdots X_6^{i_6} \in \hat{\textbf{q}}_3$. Then \begin{align}
 0 =   [X_1^{i_1}\cdots X_6^{i_6}, a X_j + b X_4] \quad \Rightarrow \quad  [X_1^{i_1}\cdots X_6^{i_6},   X_j] = [X_1^{i_1}\cdots X_6^{i_6},   X_4] = 0 , \label{eq:fail1}
\end{align} where $j= 1,2,5,6$ and $a,b$ are non-zero constants. Using \cite[Lemma 1]{marquette2023infinite}, a simple calculation shows that \begin{align*}
     [X_1^{i_1}\cdots X_6^{i_6},  X_4] = \left(\sum_{j=1}^6i_j \lambda_j\right) X_1^{i_1}\cdots X_6^{i_6} = \left(i_1 + i_2 -i_5 -i_6 \right)P_3 ,
\end{align*} where $\lambda_1 = 1=\lambda_2,\lambda_3 = \lambda_4 = 0 $ and $\lambda_5 = \lambda_6 = -1.$  We then deduce another constraint $i_1 + i_2 = i_5 + i_6.$ Then the constraint in the degree of the monomials from the set $\hat{\textbf{q}}_3 $ becomes \begin{align}
    2 \left(i_1 + i_2\right) 
 +i_3 + i_4 = 3 \text{ or }    2 \left(i_5 + i_6\right) 
 +i_3 + i_4 = 3\label{eq:43}
\end{align}   Moreover, a routine computation shows that \begin{align}
 \nonumber
     [X_1^{i_1}\cdots X_6^{i_6},  X_1] = & \, i_3 X_1^{i_1} X_2^{i_2 + 1} X_3^{i_3-1} \cdots X_6^{i_6} - i_4  X_1^{i_1+1} X_2^{i_2} \cdots X_4^{i_4-1}X_5^{i_5}  X_6^{i_6} \\
      & - 2 i_5 X_1^{i_1} \cdots X_4^{i_4+1}X_5^{i_5-1}  X_6^{i_6} - 2i_6 X_1^{i_1} \cdots X_3^{i_3+1 }X_4^{i_4} X_5^{i_5}  X_6^{i_6-1}   = 0 \label{eq:45} \\
      \nonumber
      [X_1^{i_1}\cdots X_6^{i_6},  X_2] = & \, -i_3 X_1^{i_1+1} X_2^{i_2 } X_3^{i_3-1} \cdots X_6^{i_6} - i_4  X_1^{i_1} X_2^{i_2+1} \cdots X_4^{i_4-1}X_5^{i_5}  X_6^{i_6} \\
      & + 2 i_5 X_1^{i_1} \cdots X_4^{i_4+1}X_5^{i_5-1}  X_6^{i_6} - 2i_6 X_1^{i_1} \cdots   X_4^{i_4+1} X_5^{i_5}  X_6^{i_6-1}= 0 \label{eq:46} \\
      \nonumber
      [X_1^{i_1}\cdots X_6^{i_6},  X_5] = & \, 2i_1 X_1^{i_1-1} \cdots X_4^{i_4+1} X_5^{i_5 } X_6^{i_6} - 2i_2  X_1^{i_1} X_2^{i_2-1} X_3^{i_3+1} \cdots X_4^{i_4}X_5^{i_5}  X_6^{i_6} \\
      & +   i_3 X_1^{i_1} \cdots X_3^{i_3-1}X_5^{i_5}  X_6^{i_6+1} + i_4 X_1^{i_1} \cdots   X_4^{i_4-1} X_5^{i_5+1}  X_6^{i_6}= 0 \label{eq:47}\\
     \nonumber 
        [X_1^{i_1}\cdots X_6^{i_6},  X_6] = & \, 2i_1 X_1^{i_1-1} \cdots X_3^{i_3+1} X_4^{i_4+1} X_5^{i_5 } X_6^{i_6} + 2i_2  X_1^{i_1} X_2^{i_2-1}   \cdots X_4^{i_4+1}X_5^{i_5}  X_6^{i_6} \\
        & -   i_3 X_1^{i_1} \cdots X_3^{i_3-1}X_4^{i_4}X_5^{i_5+1}  X_6^{i_6} + i_4 X_1^{i_1} \cdots   X_4^{i_4-1} X_5^{i_5}  X_6^{i_6+1}= 0 \label{eq:48}
 \end{align} On the one hand, we first assume that $P_3 \in \mathcal{C}_{\mathcal{U}(\mathfrak{c}(2))}\big(\mathfrak{a}_{(24)} \big)$ or $\mathcal{C}_{\mathcal{U}(\mathfrak{c}(2))}\big(\mathfrak{a}_{(24)}\big).$ Then $[P_3,X_1] = [P_3,X_2] =0.$ From $\eqref{eq:45}$ and $\eqref{eq:46},$ we obtain $i_3 = i_4 = i_5 = i_6 = 0.$ Together with the constraint $\eqref{eq:43},$ $i_1 + i_2 = \frac{3}{2},$ which is a contradiction to $i_j \in \mathbb{N}_0,$ i.e., the closure of $\mathbb{N}_0.$ On the other hands, if $P_3 \in \mathcal{C}_{\mathcal{U}(\mathfrak{c}(2))}\big(\mathfrak{a}_{(45)} \big)$ or $\mathcal{C}_{\mathcal{U}(\mathfrak{c}(2))}\big(\mathfrak{a}_{(46)}\big)$, from $\eqref{eq:47}$ and $\eqref{eq:48}$, we deduce $i_1 = i_2 = i_3 = i_4 = 0.$  Applying the constraint $\eqref{eq:43}$ once more results in the same contradiction. Therefore, $P_3$ is not an element of $\hat{\textbf{q}}_3.$
 
 





Using logic similar to the constraint $2\left(i_1 + i_2 \right) + i_3 + i_4 = 2k+1$, for any $k > 1$, together with the identities in $\eqref{eq:45}$ to $\eqref{eq:48}$, it follows that there are no homogeneous polynomial solutions of odd degrees.
\end{proof}

\subsection{Three dimensional subalgebras $\mathfrak{a}_{(ijk)}$ and $\mathfrak{a}_{(i,j,k)}$ with $1 \leq i \neq j \neq k \leq 6$}
Next, we examine all three-dimensional subalgebras $\mathfrak{a}_{(ijk)}$ and $\mathfrak{a}_{(i,j,k)}$ for $1 \leq i \neq j \neq k \leq 6$ within $\mathfrak{c}(2)$, along with their explicit generators listed in $\eqref{eq:3dc}$. From Section $\ref{3}$, we know that they are classified as $\mathfrak{sl}(2)$, $\mathfrak{e}(2)$, $\mathfrak{su}(2)$, $\mathfrak{su}(1,1)$, $\mathfrak{n}_{3,3}$ and $\mathfrak{iso}(2)$. In the subsequent discussion, we will demonstrate that the symmetry algebras produced by these $3$-dimensional subalgebras are Abelian.

\subsubsection{Symmetry algebra from the subaglebra $\mathfrak{sl}(2)$}

From the above section, we know that, under the basis $\eqref{eq:real},$ the Lie subalgebra $\mathfrak{a}_{(145)}$ is conjugate to $\mathfrak{a}_{(246)}.$ Consider the reduction chains $\mathfrak{a}_{(145)} \subset \mathfrak{c}(2)$ and $\mathfrak{a}_{(246)} \subset \mathfrak{c}(2)$. We have the systems of PDEs  \begin{align*}
    C_{145}:& \quad \{p(x_1,\ldots,x_6), x_1\}_o =   \{p(x_1,\ldots,x_6), x_4\}_o  = \{p(x_1,\ldots,x_6), x_5\}_o = 0; \\
    C_{246} :& \quad \{p(x_1,\ldots,x_6), x_2\}_o =  \{p(x_1,\ldots,x_6), x_4\}_o  = \{p(x_1,\ldots,x_6), x_6\}_o = 0.
\end{align*}  The corresponding coefficient matrices  of the PDEs are  \begin{align*}
    C_{145} := \begin{pmatrix}
        0 & 0 & x_2 & -x_1 & -2 x_4 & -2 x_3 \\
        x_1 &  x_2 &  0 & 0 & -x_5 & -x_6 \\
        2 x_4 & -2 x_3 & x_6 & x_5 & 0 & 0 
    \end{pmatrix}, \text{ }\quad C_{246} := \begin{pmatrix}
        0 & 0 & -x_1 & -x_2 & 2 x_3 & -2 x_4 \\
        x_1 &  x_2 &  0 & 0 & -x_5 & -x_6 \\
        2 x_3 &  2 x_4 & -x_6 & x_6 & 0 & 0 
    \end{pmatrix} .
\end{align*} A simple calculation shows that each of them has $3$ indecomposable polynomial solutions \begin{align*}
    C_{145}: \varphi_1  & = x_1 x_5 - x_4^2, \text{ } \varphi_2  = x_2 x_6 + x_3^2, \text{ } \varphi_3  = x_2 x_5- x_1 x_6 + 2 x_3 x_4; \\
    C_{246} : \varphi_1  & = x_2 x_6 - x_4^2, \text{ } \varphi_2  = x_5 x_1  + x_3^2, \text{ } \varphi_3  = x_6 x_1 - x_2 x_5 - 2 x_3 x_4.
\end{align*}  Let $A_j = \varphi_j(x_1,\ldots,x_6)$ for all $1 \leq j \leq 3$ in both cases. We obtain that $\{A_j,A_k\} = 0$ for all $1 \leq j, k \leq 3,$ and therefore they form Abelian symmetry algebras.   The possible algebraic Hamiltonian for those integrable systems are \begin{align*}
      &  \text{ } \mathcal{H}_{145} = \alpha_{14} x_1x_4 + \alpha_{15}x_1x_5 + \alpha_{45}x_4x_5 + \sum_{j=1,j\neq 3}^5\beta_j x_j + \gamma_1 \mathcal{C}_1 + \gamma_2 \mathcal{C}_2 \\
     &  \text{ } \mathcal{H}_{246} = \alpha_{24} x_2x_4 + \alpha_{26}x_2x_6 + \alpha_{46}x_4x_6 + \sum_{j=3,j\neq 3,5}^3\beta_jx_j + \gamma_1 \mathcal{C}_1 + \gamma_2 \mathcal{C}_2.
\end{align*}  Hence, the Hamiltonian with integrals $A_j$ form integrable systems.

\subsubsection{Symmetry algebra from the subaglebras $\mathfrak{e}(2)$}
Under the basis $\eqref{eq:real},$ we have the subalgebra $\mathfrak{a}_{(123)}$ and $\mathfrak{a}_{(563)}$. The systems of PDEs are given by \begin{align*}
   C_{123}:& \quad \{p(x_1,\ldots,x_6), x_1\}_o =   \{p(x_1,\ldots,x_6), x_2\}_o  = \{p(x_1,\ldots,x_6), x_3\}_o = 0 \\
    C_{563} :& \quad \{p(x_1,\ldots,x_6), x_5\}_o =  \{p(x_1,\ldots,x_6), x_6\}_o  = \{p(x_1,\ldots,x_6), x_3\}_o = 0.
\end{align*} They have the following coefficient matrices \begin{align*}
    C_{123}:  \begin{pmatrix}
        0 & 0 & x_2 & -x_1 & -2 x_4 & -2 x_3 \\
       0 &  0 &  -x_1 & -x_2 & 2 x_3 & -2 x_4 \\
       -x_2 &  x_1 & 0 &0  & -x_6 & x_5 
    \end{pmatrix} \text{ and } C_{563} = \begin{pmatrix}
           -x_2 &  x_1 & 0 &0  & -x_6 & x_5  \\
        2 x_4 & -2 x_3 & x_6 & x_5 & 0 & 0  \\
        2 x_3 &  2 x_4 & -x_6 & x_6 & 0 & 0 
    \end{pmatrix} .
\end{align*} After calculations, all the indecomposable homogeneous polynomial solutions are\begin{align*}
    C_{123} : & ~\varphi_1  = x_1^2 + x_2^2, \text{ } \quad\varphi_2  = x_3^2 - x_4^2 + x_1 x_5 + x_2 x_6 ,\text{ } \quad\varphi_3  = -2x_3x_4 - x_2 x_5 + x_1 x_6 \\
    C_{563} :& ~\varphi_1' = x_5^2 + x_6^2, \text{ } \quad\varphi_2' = x_3^2 - x_4^2 + x_1 x_5 + x_2 x_6 ,\text{ } \quad \varphi_3' =  2x_3x_4 + x_2 x_5-x_1 x_6
\end{align*} It can be shown that $\{\varphi_j ,\varphi_k \} =\{\varphi_j' ,\varphi_k' \} = 0$ for all $1 \leq j,k \leq 3.$ The possible algebraic Hamiltonians for these systems are \begin{align*}
     &  \text{ } \mathcal{H}_{(123)} = \alpha_{12}x_1x_2 + \alpha_{13}x_1x_3 + \alpha_{23}x_2x_3 + \sum_{j=1}^3\beta_jx_j + \gamma_1 \mathcal{C}_1 + \gamma_2 \mathcal{C}_2 \\
      &  \text{ } \mathcal{H}_{(563)} = \alpha_{35} x_3x_5 + \alpha_{36}x_3x_6 + \alpha_{56}x_5x_6 + \sum_{j=3,j\neq 4}^3\beta_jx_j + \gamma_1 \mathcal{C}_1 + \gamma_2 \mathcal{C}_2.
\end{align*}  Hence, the Hamiltonians with integrals $\varphi_j$ and $\varphi_j'$, respectively, form integrable systems.

\subsubsection{Reduction chain $\mathfrak{su}(2) \subset \mathfrak{c}(2)$ and $\mathfrak{su}(1,1) \subset \mathfrak{c}(2)$}
Under the basis $\eqref{eq:real},$ we have the subalgebras $\mathfrak{a}_{(3,6,5)}^\pm$. We can consider the following system of PDEs \begin{align}
\nonumber
  C_a: & \quad   \{p(x_1,\ldots,x_6), x_3\}_o  = \{p(x_1,\ldots,x_6),x_2 + x_6\}_o =  \{p(x_1,\ldots,x_6),x_1 +x_5\}_o = 0; \\  
  C_b: & \quad      \{p(x_1,\ldots,x_6), x_4\}_o  = \{p(x_1,\ldots,x_6),x_2 - x_6\}_o =  \{p(x_1,\ldots,x_6),x_1 -x_5\}_o = 0.\label{eq:sys}
\end{align}  Then the linearly independent and indecomposable homogeneous polynomial solutions for the system $\eqref{eq:sys}$ are \begin{align*}
    C_a:  \varphi_1  &= x_1^2 + x_2^2 + 2 x_3^2 + 2 x_4^2 + x_5^2 + x_6^2, \text{ } \quad\varphi_2  = x_3^2  - x_4^2 + x_1 x_5 + x_2 x_6, \\ \varphi_3  &  = -2 x_3 x_4 - x_2 x_5 + x_1 x_6 . \\
    C_b: \varphi_1   &=  x_1^2 + x_2^2 - 2 x_3^2 - 2 x_4^2 + x_5^2 + x_6^2, \text{ } \varphi_2  = x_3^2  - x_4^2 + x_1 x_5 + x_2 x_6 , \\
    \varphi_3  &=  -2 x_3 x_4 - x_2 x_5 + x_1 x_6  
\end{align*} Notice that $\varphi_2$ and $\varphi_3$ are Casimir invariants of $\mathfrak{c}(2) $ in both cases. A routine calculation shows that $ \{\varphi_i ,\varphi_j \} = 0$ for all $ 1 \leq i,j \leq 3.$ Hence the symmetry algebra generated from both $\mathfrak{su}(2)$ and $\mathfrak{su}(1,1)$ are Abelian.

\subsubsection{reduction chains $\mathfrak{n}_{3,3} \subset \mathfrak{c}(2)$ and $\mathfrak{iso}(2) \subset 
\mathfrak{c}(2)$}

Solving the following system of PDEs\begin{align*}
    C_{12c} :& \quad \{p(x_1,\ldots,x_6),x_1\}_o = \{p(x_1,\ldots,x_6),x_2\}_o = \{p(x_1,\ldots,x_6),x_c\}_o = 0 \\
    C_{456} :& \quad \{p(x_1,\ldots,x_6),x_4\}_o = \{p(x_1,\ldots,x_6),x_5\}_o = \{p(x_1,\ldots,x_6),x_6\}_o = 0
\end{align*} with the coefficients matrices \begin{align*}
    C_{12c} :\quad  \begin{pmatrix}
        0 & 0 & x_2 & -x_1 & - 2 x_4 & -2 x_3 \\
        0 & 0& -x_1 & -x_2 & 2 x_3  & - 2x_4  \\
         x_1 & x_2 & 0 & 0 & -x_5 & -x_6 \\
        x_1 & x_2 & 0 & 0& -x_5 & -x_6
    \end{pmatrix}, \text{ }\quad C_{456} :\quad   \begin{pmatrix}
        x_1 & x_2 & 0 & 0& -x_5 & -x_6 \\
        2 x_4 & -2 x_3 & x_6 & x_5& 0& 0 \\
        2 x_3 & 2x_4 & -x_5 & x_6 & 0 & 0
    \end{pmatrix},
\end{align*}  where $x_c =  \frac{\cos c_1}{2} x_3  + \frac{\sin c_2}{2}x_4$, $0 < c_1 < \frac{\pi}{2}$ and $\frac{\pi}{2} < c_2 < \pi,$  we deduce that \begin{align}
    \varphi_1  &= x_3^2 - x_4^2 + x_1 x_5 + x_2 x_6 , \text{ } \quad \varphi_2  = x_1 x_6 -x_2 x_5 - 2 x_3 x_4   \label{eq:sam}
\end{align} are linearly independent and indecomposable polynomial solutions. Utilizing computational tools, it can be demonstrated that other polynomial solutions take the form $\varphi = \sum_{m,n} \Gamma_{m,n} \varphi_1^m \varphi_2^n$, where $\Gamma_{m,n}$ are constants. It is straightforward to verify that $\{\varphi_1, \varphi_2\} = \{\varphi_i, \varphi\} = 0$ for all $i = 1, 2$. Thus, the polynomial algebras are Abelian, and the integrable systems related to these reduction chains possess the following algebraic Hamiltonians  \begin{align*}
    \mathcal{H}_{12c} & = \alpha_{12} x_1x_2 + \alpha_{1c}x_1x_c + \alpha_{2c} x_2x_c + \beta_1 x_1 + \beta_2 x_2 + \beta_4 x_c + \gamma_1 \mathcal{C}_1  +\gamma_2 \mathcal{C}_2, \\
    \mathcal{H}_{456} & = \alpha_{45} x_4x_5 + \alpha_{46}x_4x_6 + \alpha_{56} x_5x_6 + \beta_4 x_4 + \beta_5 x_5 + \beta_6 x_6 + \gamma_1 \mathcal{C}_1  +\gamma_2 \mathcal{C}_2,
\end{align*} 
respectively. Here, $\alpha_{jk},\beta_i$ are constants for all $i,j,k$.

  
\subsection{Reduction chain on the Borel subalgebra}
\label{3.8}

Using the basis $\eqref{eq:real},$ the Borel subalgebra is given by $\mathfrak{b} = \mathrm{span}\{X_3,X_4,X_5,X_6\}.$ The system of PDEs is given by  \begin{align*}
    \{p(x_1,\ldots,x_6),x_3\}_o = \{p(x_1,\ldots,x_6),x_4\}_o=\{p(x_1,\ldots,x_6),x_5\}_o=\{p(x_1,\ldots,x_6),x_6\}_o =0,
\end{align*} with the coefficients matrix \begin{align*}
    C = \begin{pmatrix}
       -x_2 & x_1 & 0 & 0  & -x_6 & x_5 \\
          x_1 & x_2 & 0 & 0 & -x_5 & -x_6 \\
             2 x_4 & -2x_3 & x_6 & x_5 & 0 & 0 \\
             2 x_3 &  2x_4 & -x_5 & x_6 & 0 & 0 \\   
    \end{pmatrix}.
\end{align*} By using symbolic computing packages, the linearly independent and indecomposable solutions to the system of PDEs above are \begin{align*}
    \varphi_1   =  - x_6 x_1 + x_2 x_5 + 2 x_3 x_4 , \text{ }\quad \varphi_2  = x_1 x_5 + x_2 x_6 + x_3^2 - x_4^2
\end{align*} Since $\{\varphi_1 , \varphi_2 \} = 0,$ it follows that the Poisson centralizer $\mathcal{C}_{\mathcal{S}(\mathfrak{c}(2))}(\mathfrak{b})$ is Abelian. Possible algebraic Hamiltonian can be written in the form  \begin{align*}
    \mathcal{H} =\sum_{3 \leq j< k < l \leq 6} \delta_{j,k,l} x_jx_k x_l +  \sum_{3 \leq j< k \leq 6} \alpha_{j,k} x_jx_k + \sum_{j=3}^6 \beta_j x_j + \gamma_1 \mathcal{C}_1 + \gamma_2 \mathcal{C}_2.
\end{align*} Here $\delta_{j,k,l},\alpha_{j,k}$ and $\beta_j$ are constants.
 Hence, using the symmetrization isomorphism $\eqref{eq:symme}$, the commutant $\mathcal{C}_{\mathcal{U}(\mathfrak{c}(2))}(\mathfrak{b})$ is also Abelian. 

 \subsection{Superintegrable Hamiltonians from  Casimir invariants}
 \label{4.5}

The key idea in \cite{MR3988021} is to construct a Hamiltonian from the Casimir invariant of a subalgebra of $\mathfrak{c}(2)$ in the basis $\eqref{eq:real}.$ Now, from the construction in Section $\ref{2}$, we know that there exists a finitely-generated Poisson algebra $\mathcal{Q}(d) \subset \mathbb{C}[\mathfrak{c}^*(2)] \cong \mathcal{S}(\mathfrak{c}(2))$ such that $\mathcal{Q}(d)$ corresponds an algebraic  superintegrable system on $\mathfrak{c}(2).$ In this Section \ref{4.5}, we have a close look at the centralizer $\mathcal{C}_{\mathcal{S}(\mathfrak{g})}(K_{\mathfrak{a}})$, where $K_{\mathfrak{a}}$ is the Casimir invariant of the subalgebra $\mathfrak{a}^*$.  We will find all indecomposable homogeneous polynomials $p(x_1,\ldots,x_6) \in \mathcal{S}(\mathfrak{c}(2))$ such that $\{p(x_1,\ldots,x_6),K_{\mathfrak{a}}\}_o=0.$ Then $K_{\mathfrak{a}}$ defines a new Hamiltonian. 

In what follows, we consider Casimirs of all subaglebras and construct the corresponding superintegrable Hamiltonians. Since the Casimir invariants of $2$-dimensional subalgebras of $\mathfrak{c}(2)$ are constant or zero, we will focus on $3$-dimensional subalgebras. That is, $\mathfrak{sl}(2), \mathfrak{so}(3)$ or $\mathfrak{e}(2).$

\subsubsection{Subalgebra $\mathfrak{e} (2) $}
\label{4.5.1}

The Euclidean algebra $\mathfrak{e}(2)$, up to conjugacy, can be associated with $\mathfrak{a}_{(123)}$ and $\mathfrak{a}_{(563)}$ as in $\eqref{eq:3dc}$. In this Subsection \label{4.5.1}, we examine each case individually. The Casimir operator for $\mathfrak{a}_{(123)}^*$ is expressed as $K_{123} = x_1^2 + x_2^2$. Using symbolic computing packages, there are $5$ indecomposable and linearly independent polynomials such that $\left\{p(x_1,\ldots,x_6),K_{123}\right\}_o = 0$, which are given by \begin{align*}
    \varphi_1  &= x_1,\quad \varphi_2  = x_2 , \quad \varphi_3  = x_3,  \\
    \varphi_4  & = x_1 x_5 + x_2 x_6 -x_4^2 ,\quad \varphi_5  = x_1 x_6  - x_2 x_5 - 2 x_3 x_4.
\end{align*} Let $\varphi_j(x_1,\ldots,x_6) = A_j$ for all $1 \leq j \leq 5.$ They form a closed quadratic algebra $\mathcal{Q}_{\mathfrak{e} (2)}(2)$ under the non-zero Poisson brackets as follows \begin{align}
\nonumber
    \{A_1,A_3\} = & \,  -A_2, \text{ } \{A_1,A_4\} =    2 A_2A_3, \\
    \{A_2,A_3\} = & \, A_1, \text{ } \{A_2,A_4\}= -2A_1A_3. \label{eq:A_1}
\end{align} By a simple calculation, we can show that $\left\{A_i,\left\{A_j,A_k\right\}\right\} +\left\{A_j,\left\{A_k,A_i\right\}\right\}+\left\{A_k,\left\{A_i,A_j\right\}\right\}= 0$ holds for all $1 \leq i,j,k \leq 5$. Hence $\mathcal{Q}_{\mathfrak{e} (2)}(2)$ forms a Poisson quadratic algebra. A potential free superintegrable system is then defined by $S_3=\{K_{123},A_3,A_4,A_5\}$. Taking \begin{align}
    K_{\mathcal{Q}_{\mathfrak{e} (2)}(2)}^h =   \sum_{i_1 + \cdots + i_5 = h} \Gamma_{i_1,\ldots,i_5} A_1^{i_1} \cdots A_5^{i_5} \in  \mathcal{U}_h\left(\mathcal{Q}_{\mathfrak{e} (2)}(2)\right) .
\end{align} Through straightforward computation, we determine that \begin{align*}
    K_{\mathcal{Q}_{\mathfrak{e} (2)}(2)}^1 = & a_1 A_5 , \\
    K_{\mathcal{Q}_{\mathfrak{e} (2)}(2)}^2 = & a_1 A_5 + a_2 A_5^2 + a_3\left(A_3^2 + A_4\right) + a_4\left( A_1^2 + A_2^2\right), \\
   K_{\mathcal{Q}_{\mathfrak{e} (2)}(2)}^3 = &  K_{\mathcal{Q}_{\mathfrak{e} (2)}(2)}^2 +    a_5 \left(A_3^2+A_4\right) A_5 + a_6\left(A_1^2+A_2^2\right) A_5+ a_7 A_5^3 \\
   K_{\mathcal{Q}_{\mathfrak{e} (2)}(2)}^4 =&  K_{\mathcal{Q}_{\mathfrak{e} (2)}(2)}^3 +  a_8  \left(A_3^2+A_4\right)^2+  a_9 \left(A_1^2+A_2^2\right) \left(A_3^2+A_4\right) + a_{10}\left(A_3^2+A_4\right) A_5^2  \\
   & + a_{11}\left(A_1^2+A_2^2\right)^2+a_{12}\left(A_1^2+A_2^2\right) A_5^2+ a_{13}A_5^4\\
   & \vdots 
\end{align*} 
Here, as in Subsection $\ref{4.1.1},$ we have used $a_j $ to denote the real coefficients $\Gamma_{i_1,\ldots,i_5}$ for convenience. Each coefficient corresponds to a linearly independent Casimir. Using induction, we determine that for every $k$, there exists a polynomial $t_k^1$ such that \begin{align*} K_{\mathcal{Q}_{\mathfrak{e}(2)}(2)}^k = t_k^1 \left(A_5, A_3^2 + A_4, A_1^2 + A_2^2\right) + \sum_i^k \left( K_{\mathcal{Q}_{\mathfrak{e}(2)}(2)}^1\right)^i. \end{align*} Hence the functionally independent Casimir functions are \begin{align*}
    K_{\mathcal{Q}_{\mathfrak{e} (2)}(2)}^{1,1} =  A_5, \text{ }  K_{\mathcal{Q}_{\mathfrak{e} (2)}(2)}^{2,1} = A_3^2 +A_4, \text{ } K_{\mathcal{Q}_{\mathfrak{e} (2)}(2)}^{2,2}  = A_1^2 + A_2^2 = K_{123}. 
\end{align*}

After the quantization, the quadratic Poisson algebra becomes the quadratic commutator algebra $\hat{\mathcal{Q}}_{\mathfrak{e} (2)}(2)$ with non-zero commutators as follows \begin{align}
    [\hat{A}_1,\hat{A}_3] = -\hat{A}_2, \text{ } [\hat{A}_1,\hat{A}_4] =  2 \{\hat{A}_2,\hat{A}_3\}_a, \text{ } [\hat{A}_2,\hat{A}_3] = \hat{A}_1,\text{ } [\hat{A}_2,\hat{A}_4] = - 2 \{\hat{A}_1,\hat{A}_3\}_a. \label{eq:A_1'}
\end{align}  By verifying the Jacobi identity, we can show that $\hat{\mathcal{Q}}_{\mathfrak{e} (2)}(2)$ is an associative algebra. As we can observe from the realisation above, $\hat{A}_5$ is in the center of the commutant $\mathcal{C}_{\mathcal{U}(\mathfrak{c} (2))}(\mathfrak{a}_{(123)})$.

The Casimir operator of $\mathfrak{a}_{(563)}^*$ is given by $K_{563} = x_5^2 + x_6^2$. Using symbolic computing packages, there are $5$ linearly independent and indecomposable solutions to PDE $\{p(x_1,\ldots,x_6),K_{563}\}_o = 0$, which are given by \begin{align*}
    \varphi_1  &= x_3,\quad \varphi_2  = x_5 , \quad \varphi_3  = x_6,  \\
    \varphi_4  & = x_1 x_5 + x_2 x_6 -x_4^2 ,\quad \varphi_5  = x_1 x_6  - x_2 x_5 - 2 x_3 x_4.
\end{align*} Let $\varphi_j(x_1,\ldots,x_6) = A_j$ for all $1 \leq j \leq 5.$ They form the closed quadratic algebra $\tilde{\mathcal{Q}}_{\mathfrak{e} (2)}(2)$ with non-zero Poisson brackets as follows \begin{align}
\nonumber
    \{A_1,A_2\} = & \,  A_3, \text{ }  \{A_1,A_3\} = -A_2, \\
 \{A_2,A_4\} = & \, 2 A_1A_3, \text{ } \{A_3,A_4\} = -2 A_1 A_2. \label{eq:A_2}
\end{align}  By a simple calculation, we can show that $\left\{A_i,\left\{A_j,A_k\right\}\right\} +\left\{A_j,\left\{A_k,A_i\right\}\right\}+\left\{A_k,\left\{A_i,A_j\right\}\right\}= 0$ holds for all $1 \leq i,j,k \leq 5$. Hence $\tilde{\mathcal{Q}}_{\mathfrak{e} (2)}(2)$ form a Poisson quadratic algebra.

We now provide all the functionally independent Casimir functions for $\tilde{\mathcal{Q}}_{\mathfrak{e} (2)}(2).$ Similar to the case in $\mathcal{Q}_{\mathfrak{e} (2)}(2),$ there exist polynomials $t_k^1$ such that \begin{align*}
    K_{\tilde{\mathcal{Q}}_{\mathfrak{e} (2)}(2)}^k =    t_k^1 \left(A_5, A_1^2 + A_4, A_1^2 + A_3^2\right) + \sum_i^k \left(  K_{\tilde{\mathcal{Q}}_{\mathfrak{e} (2)}(2)}^1\right)^i,
\end{align*} where $ K_{\tilde{\mathcal{Q}}_{\mathfrak{e} (2)}(2)}^1 = A_5.$ Hence the functionally independent Casimir functions are \begin{align*}
    K_{\tilde{\mathcal{Q}}_{\mathfrak{e} (2)}(2)}^{1,1} =  A_5, \text{ } \quad   K_{\tilde{\mathcal{Q}}_{\mathfrak{e} (2)}(2)}^{2,1} = A_1^2 +A_4, \text{ } \quad  K_{\tilde{\mathcal{Q}}_{\mathfrak{e} (2)}(2)}^{2,2}  = A_3^2 + A_2^2   . 
\end{align*}

Comparing the result in \cite{MR3988021}, without using the comformal involution on the Hamiltonian, the associated Poisson algebras defined in $\eqref{eq:A_1}$ and $\eqref{eq:A_2}$ are quadratic with more the $3$ generators.

\subsubsection{Subalgebra $\mathfrak{sl}(2)$}

The semisimple Lie algebra $\mathfrak{sl}(2)$ up to conjugacy corresponds to $\mathfrak{a}_{(145)}$ and $\mathfrak{a}_{(246)}$ in $\eqref{eq:3dc}$. Initially, we examine the Casimir operator for $\mathfrak{a}_{(145)}^*$, represented by $K_{145} = x_4^2 - x_1 x_5$. Utilizing symbolic computing packages, there are $5$ indecomposable polynomials such that $\{p(x_1,\ldots,x_6),K_{145}\}_o = 0$, which are given by \begin{align*}
    \varphi_1  &= x_1,\quad \varphi_2  = x_4 , \quad \varphi_3  = x_5,  \\
    \varphi_4  & = x_1 x_6  - x_2 x_5 - 2 x_3 x_4  ,\quad \varphi_5  = x_3^2 + x_2 x_6.
\end{align*} Let $\varphi_j(x_1,\ldots,x_6) = A_j$ for all $1 \leq j \leq 5.$ They form the closed linear Poisson algebra \begin{align}
    \{A_1,A_3\} =  A_1, \text{ }  \{A_1,A_3\} =2A_2, \text{ }   
 \{A_2,A_3\} =  A_3 . \label{eq:A_3}
\end{align} Notice that the center the linear Poisson algebra  is spanned by $A_1$ and $A_4,$ a potential free superintegrable system is then defined by $S_2 =\left\{K_{145},A_1,A_2,A_3\right\}$.

Similarly to the preceding paragraph, the Casimir operator of $\mathfrak{a}_{(246)}^*$ is given by $K_{246} =x_4^2 - x_2 x_6$. Using symbolic computing packages, there are $5$ indecomposable solutions to $\left\{p(x_1,\ldots,x_6),K_{246}\right\}_o = 0$, which are given by \begin{align*}
    \varphi_1  &= x_2,\quad \varphi_2  = x_4 , \quad \varphi_3  = x_6 ,  \\
    \varphi_4  & = x_3^2 + x_1 x_5,\quad \varphi_5  = x_1 x_6  - x_2 x_5 - 2 x_3 x_4.  
\end{align*} Let $\varphi_j(x_1,\ldots,x_6) = A_j$ for all $1 \leq j \leq 5.$ They form the closed linear Poisson algebra \begin{align}
    \{A_1,A_2\} =  A_1, \text{ }  \{A_1,A_3\} =2A_2, \text{ }   
 \{A_2,A_3\} =  A_3 . \label{eq:A_4}
\end{align} 
The quantisation of $\eqref{eq:A_4}$ are \begin{align}
    [\hat{A}_1,\hat{A}_2] =  \hat{A}_1, \text{ }   [\hat{A}_1,\hat{A}_3] =2\hat{A}_2, \text{ }   
  [\hat{A}_2,\hat{A}_3] = \hat{A}_3 . 
\end{align} Therefore, the symmetry algebras corresponding to $\mathfrak{a}_{(145)}$ and $\mathfrak{a}_{(246)}$ can be recognized as $\mathfrak{sl}(2) \oplustilde \mathbb{R}^2$, fulfilling the following short exact sequence \begin{align*}
    0 \rightarrow \mathbb{R}^2 \rightarrow   \mathfrak{sl}(2) \oplustilde \mathbb{R}^2 \rightarrow \mathfrak{sl}(2) \rightarrow 0 
\end{align*} with $\oplustilde$ representing a Lie algebra direct sum. In other words, the polynomial algebra is a central extension of $\mathfrak{sl}(2)$.

\subsubsection{Subalgebra $\mathfrak{su}(2)$}
The Casimir operator of $\mathfrak{su}^* (2 ) $ is given by $K_{\mathfrak{su} (2 )} =x_3^2 +\frac{(x_1 + x_5)^2}{4} + \frac{(x_2 + x_6)^2}{4} $. Using symbolic computing packages, it can be shown that there are $5$ linearly independent and indecomposable polynomial solutions to $\left\{p(x_1,\ldots,x_6),K_{\mathfrak{su} (2 )}\right\}_o = 0$ generating a finite set $\textbf{Q}_3$, which are given by \begin{align*}
    \varphi_1  &= x_1 + x_5,\quad \varphi_2  = x_2 + x_6 , \quad \varphi_3  = x_3 ,  \\
    \varphi_4  & =x_1^2+2 x_4^2+x_5^2-2 x_2 x_6,\quad \varphi_5  = x_1 x_2+2 x_3 x_2+x_5 (2 x_2+x_6).  
\end{align*} Let $\varphi_j(x_1,\ldots,x_6) = A_j$ for all $1 \leq j \leq 5.$ Then $\textbf{Q}_3 = \left\{A_j: 1 \leq j \leq 5\right\},$ which generates the polynomial algebra $\textbf{Alg} \langle \textbf{Q}_3 \rangle$ with the non-trivial bilinear brackets as follows \begin{align}
\nonumber
    \{A_1,A_2\}& =  4A_3, \text{ }  \{A_1,A_3\} =-A_2, \text{ } \{A_1,A_4\} = -4 A_2 A_3, \text{ } \{A_1,A_5\} = 4 A_1 A_3 \\  
 \{A_2,A_3\} & =  A_1, \text{ }   \{A_2,A_4\}  =  -4 A_1A_3, \text{ }  \{A_2,A_5\}  =-4A_2  A_3, \label{eq:38}\\
 \nonumber
 \{A_3,A_4\} & = 2A_1 A_2, \text{ }  \{A_3,A_5\}  =  A_2^2 - A_1^2, \text{ } \{A_4,A_5\}  = 4A_1^2 A_3 + 4 A_2^2 A_3.   
\end{align} For all $1 \leq i,j,k \leq 5,$ we can verify that $\left\{A_i,\left\{A_j,A_k\right\}\right\} +\left\{A_j,\left\{A_k,A_i\right\}\right\}+\left\{A_k,\left\{A_i,A_j\right\}\right\}= 0$ holds by a direct calculation. This guarantee the induced bracket $\{\cdot,\cdot\}$ is Poisson. Hence the relations in $\eqref{eq:38}$ form a Poisson cubic algebra $\mathcal{Q}_{\mathfrak{su}(2)}(3) = \left(\textbf{Alg} \left\langle \textbf{Q}_3 \right\rangle,\{\cdot,\cdot) \right)$ under the Poisson bracket $\{\cdot,\cdot \}$. A direct computation shows that all $A_j$ are functionally independent, we find a superintegrable system $S_3 = \left\{K_{\mathfrak{su}(2)},A_1,A_2,A_3, A_4,A_5\right\}$.  Taking a polynomial \begin{align*}
    K_{\mathcal{Q}_{\mathfrak{su}(2)}(3)}^h = \sum_{i_1 + \cdots + i_5 = h} \Gamma_{i_1,\ldots,i_5} A_1^{i_1} \cdots A_5^{i_5} \in \mathcal{U}_h\left(\mathcal{Q}_{\mathfrak{su}(2)}(3)\right).
\end{align*} We will find all $ K_{\mathcal{Q}_{\mathfrak{su}(2)}(3)}^h$ such that $\left\{ K_{\mathcal{Q}_{\mathfrak{su}(2)}(3)}^h,A_j\right\} = 0$ for all $1 \leq j \leq 5$, and then filter out all the functionally dependent one from those $K_{\mathcal{Q}_{\mathfrak{su}(2)}(3)}^h.$ Since $\mathcal{Z}\left(\mathcal{Q}_{\mathfrak{su}(2)}(3)\right) = \emptyset,$ there is no linear Casimir functions. Inductively, we can conclude that there exist polynomials $h_k$ such that \begin{align*}
     K_{\mathcal{Q}_{\mathfrak{su}(2)}(3)}^k = h_k \left(A_2^2+2 A_3^2+A_4,A_5-A_1 A_2,A_1^2+A_2^2+4 A_3^2\right) 
\end{align*} for any $k \geq 2.$ It follows that there are $3$ functionally independent quadratic Casimir functions \begin{align*}
    K_{\mathcal{Q}_{\mathfrak{su}(2)}(3)}^{2,1} = A_2^2+2 A_3^2+A_4, \text{ }   K_{\mathcal{Q}_{\mathfrak{su}(2)}(3)}^{2,2} =A_5-A_1 A_2,\text{ }   K_{\mathcal{Q}_{\mathfrak{su}(2)}(3)}^{2,3} =A_1^2+A_2^2+4 A_3^2.
\end{align*}

Furthermore, the symmetrization mapping gives a quantisation in $A_j$ for all $j,$ we obtain the following quantum integrals  \begin{align*}
    \hat{A}_1 &= X_1+ X_5, \text{ } \quad \hat{A}_2 = X_2 + X_6, \text{ } \quad  \hat{A}_3 = X_3,\\  
    \hat{A}_4& = X_1^2 + 2 X_4^2 + X_5^2 - 2 \{X_2,X_6\}_a, \\
    \hat{A}_5 &= \{X_1,X_2\}_a + 2\{X_2,X_5\}_a + 2\{X_3,X_4\}_a + \{X_5,X_6\}_a,
\end{align*} which satisfy the non-zero commutators \begin{align}
\nonumber
    [\hat{A}_1,\hat{A}_2]& =  4 \hat{A}_3, \text{ }  [\hat{A}_1, \hat{A}_3] =- \hat{A}_2, \text{ } [\hat{A}_1, \hat{A}_4] = -4  \{\hat{A}_2,  \hat{A}_3\}_a, \text{ } [\hat{A}_1, \hat{A}_5] = 4  \{\hat{A}_1,  \hat{A}_3\}_a \\  
 [\hat{A}_2, \hat{A}_3] & =   \hat{A}_1, \text{ }   [\hat{A}_2, \hat{A}_4]  =  -4  \{\hat{A}_1,\hat{A}_3\}_a, \text{ }  [\hat{A}_2, \hat{A}_5]  =-4 \{ \hat{A}_2,   \hat{A}_3\}_a \text{ } \\
 \nonumber
 [\hat{A}_3, \hat{A}_4] & = 2 \{\hat{A}_1 ,\hat{A}_2\}_a, \text{ }  [\hat{A}_3, \hat{A}_5]  =   \hat{A}_2^2 -  \hat{A}_1^2, \text{ } [\hat{A}_4, \hat{A}_5]  = 4 \hat{A}_1^2  \hat{A}_3 + 4  \{\hat{A}_2^2, \hat{A}_3\}_a - \frac{8}{3} \hat{A}_3,   
\end{align} where $\{\cdot,\cdot\}_a$ defines an anticommutator.  Moreover, by verifying the Jacobi identity of the commutator relations, we see that these elements indeed form an associative algebra.

In Section $\ref{3.8}$, we algebraically construct the polynomial algebras of subalgebra $3$D in $\mathfrak{c}(2)$ related to their Casimir operators. Unlike the polynomial algebra generated from the reduction chain $\mathfrak{g}' \subset \mathfrak{c}(2)$, they form linear, quadratic, or cubic polynomial algebras.

\section{Embedded algebras in quadratic algebras}

\label{5}

In Section $\ref{4},$ we are able to construct finitely-generated algebras in the universal enveloping algebra $\mathcal{U}(\mathfrak{c}(2)).$ 
Six quadratic polynomial Poisson algebras $\mathcal{Q}_j(2)$  are found from the reduction chains $\mathfrak{a}_j \subset \mathfrak{c}(2)$, $j=1,2,\ldots, 6$. In suitable realisations, the generators of the symmetry algebras can be expressed as differential operators.
 
We demonstrate here that the majority of the quadratic Poisson algebras are structured as three-generator quadratic algebras described in \cite{Marquette:2023wxn} with particular parameter selections.

\begin{proposition}
    Let $\mathfrak{c}(2)$ be the $6$-dimensional conformal algebra and $\mathfrak{c}(2)$ be its dual. Then in the realisation defined in $\eqref{eq:real},$ $R(\mathcal{Q}_k(2))$ is a Racah-type algebra for all $k =1,2,5,6$. Moreover, the quadratic algebra $\mathcal{Q}_{\mathfrak{e} (2)}(2)$ defined in Subsection $\ref{4.5.1}$ has a realisation $R(\mathcal{Q}_{\mathfrak{e} (2)}(2)) = \hat{\mathfrak{e}}(2)$, where $\hat{\mathfrak{e}}(2)$ is a central extension of the Euclidean algebra $\mathfrak{e}(2).$
\end{proposition}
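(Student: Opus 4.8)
The plan is to carry out all the computations inside the classical realisation $R$ attached to $\eqref{eq:real}$, i.e.\ the Poisson homomorphism $\mathcal{S}(\mathfrak{c}^*(2))\to C^{\infty}(T^{*}\mathbb{R}^{2})$ with $R(x_1)=p_x$, $R(x_2)=p_y$, $R(x_3)=yp_x-xp_y$, $R(x_4)=xp_x+yp_y$, $R(x_5)=(x^2-y^2)p_x+2xyp_y$, $R(x_6)=2xyp_x+(y^2-x^2)p_y$. The key initial remarks are that $R$ annihilates both Casimirs $\eqref{eq:Casi}$, $R(\mathcal{C}_1)=R(\mathcal{C}_2)=0$, and that $\mathrm{Im}\,R$ has transcendence degree $4$ (one recovers $x,y,p_x,p_y$ generically). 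Hence for each $k\in\{1,2,5,6\}$, under $R$ the six generators $A_1,\dots,A_6$ of $\mathcal{Q}_k(2)$ become polynomials in only three functionally independent quantities, and the relations forcing this collapse are precisely the Casimir identities already isolated in Subsection 4.1: for $k=1$, $A_4-A_6=\mathcal{C}_1$ and $A_5=-\mathcal{C}_2$ give $R(A_4)=R(A_6)$ and $R(A_5)=0$, while the degree-$3$ Casimir $(A_1^2+A_2^2)A_6+A_2A_5A_1+A_3^2$ gives the syzygy $R(A_3)^2=(R(A_1)^2+R(A_2)^2)\,R(A_6)$; the analogous identities hold for $k=2,5,6$.

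For $k=1$ I would then fix three functionally independent generators among $R(A_1),\dots,R(A_6)$, re-express the three surviving Poisson brackets of $\mathcal{Q}_1(2)$ in terms of them, and apply an affine change of generators to clear the central and inhomogeneous terms. The three resulting relations are quadratic, and the point is to match them with the defining relations of the general three-generator quadratic algebra $\mathcal{Q}(3)$ of \cite{Marquette:2023wxn}, reading off that the structure parameters are those that single out the Racah algebra; this proves $R(\mathcal{Q}_1(2))$ is Racah-type. For $k=2,5,6$ no separate computation is needed: $\mathfrak{a}_1,\mathfrak{a}_2,\mathfrak{a}_5,\mathfrak{a}_6$ are one-dimensional subalgebras all isomorphic to $\mathfrak{e}(1)$ and hence, by the $\mathfrak{so}(3,1)$ classification used in Section 3, pairwise conjugate in $\mathfrak{c}(2)$ (the conjugacies $\mathfrak{a}_1\sim\mathfrak{a}_2$ and $\mathfrak{a}_5\sim\mathfrak{a}_6$ are stated there explicitly). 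By the proposition asserting that a Lie-algebra automorphism carrying one subalgebra to another induces an isomorphism of the corresponding centralisers, the commutants $\mathcal{C}_{\mathcal{U}(\mathfrak{c}(2))}(\mathfrak{a}_k)$ are all isomorphic; since $R$ is built from $\eqref{eq:real}$, this isomorphism descends to the realised algebras, so every $R(\mathcal{Q}_k(2))$, $k=1,2,5,6$, is Racah-type.

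For the Euclidean statement I would start from $\eqref{eq:A_1}$: $A_5$ is already central, and $A_1,A_2,A_3$ satisfy $\{A_3,A_1\}=A_2$, $\{A_3,A_2\}=-A_1$, $\{A_1,A_2\}=0$, i.e.\ the $\mathfrak{e}(2)$ relations; the only obstruction to a Lie-algebra structure on all the generators is the quadratic right-hand side of $\{A_1,A_4\}=2A_2A_3$ and $\{A_2,A_4\}=-2A_1A_3$. Using the functionally independent Casimir $A_3^2+A_4$, which equals $\mathcal{C}_1$ and therefore satisfies $R(A_3^2+A_4)=0$, one eliminates $A_4$ via $R(A_4)=-R(A_3)^2$; since $\{A_1,-A_3^2\}=2A_2A_3$ and $\{A_2,-A_3^2\}=-2A_1A_3$, the quadratic brackets are automatically reproduced, so in the realisation the algebra is generated linearly by $R(A_1),R(A_2),R(A_3)$, spanning a copy of $\mathfrak{e}(2)$, together with the surviving central generator(s). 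Identifying these yields the central extension $\hat{\mathfrak{e}}(2)$ of $\mathfrak{e}(2)$ and the claimed isomorphism $R(\mathcal{Q}_{\mathfrak{e}(2)}(2))\cong\hat{\mathfrak{e}}(2)$; the algebra attached to $\mathfrak{a}_{(563)}$ via $\eqref{eq:A_2}$ is handled by the same elimination, or deduced from the conjugacy $\mathfrak{a}_{(123)}\sim\mathfrak{a}_{(563)}$.

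The step I expect to be the real obstacle is the normal-form matching for $k=1$: bringing the three residual Poisson brackets into exactly the shape of $\mathcal{Q}(3)$ from \cite{Marquette:2023wxn} requires the right choice of the three independent generators (there is freedom to use linear or quadratic combinations of the $A_i$) together with the correct affine shifts to absorb the central pieces, and only after that normalisation is the Racah identification a routine comparison of structure constants. A minor additional check is to confirm that $\mathcal{Q}_3(2)$ and $\mathcal{Q}_4(2)$ fall outside the Racah pattern, consistent with the statement restricting to $k=1,2,5,6$.
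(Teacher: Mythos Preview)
Your treatment of $\mathcal{Q}_{\mathfrak{e}(2)}(2)$ is correct and essentially identical to the paper's: one uses $R(\mathcal{C}_1)=0$ to get $R(A_4)=-R(A_3)^2$, so the quadratic brackets become redundant and the realised algebra is generated linearly by $R(A_1),R(A_2),R(A_3)$ with the $\mathfrak{e}(2)$ relations plus a central piece.

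For $k=1,2,5,6$ the paper does \emph{not} do any normal-form matching against the abstract $\mathcal{Q}(3)$ of \cite{Marquette:2023wxn}. It simply evaluates each $R(A_j)$ explicitly as a polynomial in $x,y,p_x,p_y$ in all four cases, reads off the collapses (e.g.\ for $k=1$: $R(A_5)=0$, $R(A_4)=R(A_6)=y^2(p_x^2+p_y^2)$, $R(A_3)=y(p_x^2+p_y^2)$), writes down the three surviving Poisson brackets among the three independent survivors, and exhibits the quadratic Casimir of the resulting three-generator algebra. ``Racah-type'' here just means ``three-generator quadratic Poisson algebra of the displayed shape''; no affine normalisation step is performed, so the step you flag as the real obstacle is not actually present in the argument.

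Your conjugacy shortcut for $k=2,5,6$ has a genuine gap. Proposition~3.1 gives you $\mathcal{C}_{\mathcal{U}(\mathfrak{c}(2))}(\mathfrak{a}_1)\cong\mathcal{C}_{\mathcal{U}(\mathfrak{c}(2))}(\mathfrak{a}_k)$, hence $\mathcal{Q}_1(2)\cong\mathcal{Q}_k(2)$ as abstract Poisson algebras. But the statement is about the \emph{realised} algebras $R(\mathcal{Q}_k(2))$, which are quotients by $\ker R\cap\mathcal{Q}_k(2)$, and the automorphism of $\mathcal{S}(\mathfrak{c}^*(2))$ induced by the conjugacy has no a priori reason to preserve $\ker R$. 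Your sentence ``since $R$ is built from $\eqref{eq:real}$, this isomorphism descends to the realised algebras'' is exactly the missing step: it would follow if you knew $\ker R$ equals the Casimir ideal $(\mathcal{C}_1,\mathcal{C}_2)$ (inner automorphisms fix Casimirs), but you have not established that, and it is not stated anywhere in the paper. The paper avoids the issue by redoing the short explicit computation separately for each of $k=1,2,5,6$; given how brief those computations are, that is the cleaner route here.
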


\begin{proof}
    We first have a look at the quadratic algebra $\mathcal{Q}_1(2).$ Under the realisation $\eqref{eq:real}$, the integrals that form $\mathcal{Q}_1(2) = \left(\textbf{Alg} \left\langle \textbf{Q}_2^{(1)} \right\rangle,\{\cdot,\cdot\} \right)$ become \begin{align*}
    R(A_1) = p_x, \text{ } R(A_2) = p_y, \text{ } yR(A_3) =   R(A_4) =  R(A_6) = y^2 \left( p_x^2 + p_y^2 \right), \text{ } R(A_5) = 0.
\end{align*} Take $R(A_j) = R_j$ for all $1 \leq j \leq 6.$ Then we have the quadratic algebra with the following non-zero Poisson brackets \begin{align}
\nonumber
    \{R_2,R_3\} =  & \, R_1^2 + R_2^2, \\
    \{R_2,R_4\}  = & \,-2 R_3, \label{eq:5.1} \\
    \nonumber
    \{R_3,R_4\} = & \, -2 R_2 R_4.
\end{align} Notice that $\mathcal{Z}(\mathcal{Q}_1(2)) $ is spanned by $R_1,$ which implies that $R_1$ can be viewed as a new Hamiltonian. Moreover, the Casimir operator of this Racah-type algebra $\eqref{eq:5.1}$ is given by $K_1 = 4R_3^2 + 4 R_2^2 R_4 - 4 R_1^2 R_4.$ For the quadratic algebra $\mathcal{Q}_2(2) =\left(\textbf{Alg} \left\langle \textbf{Q}_2^{(2)} \right\rangle,\{\cdot,\cdot\} \right)$, applying the relation on the generators in $\textbf{Q}_2^{(2)}$ provide \begin{align*}
     R(A_1) = p_x, \text{ } R(A_2) = p_y, \text{ } R(A_3) = x(p_x^2 - p_y^2) , \text{ } R(A_4) = -R(A_6) = x^2 (p_x^2 + p_y^2), \text{ } R(A_5) = 0.
\end{align*}Take $R(A_j) = R_j$ for all $1 \leq j \leq 6.$ The quadratic algebra with non-zero Poisson brackets are given by \begin{align}
\nonumber
    \{R_1,R_3\}  =& \, R_1^2 + R_2^2, \\
    \{R_1,R_4\} = & \,-2 R_3, \label{eq:5.2} \\
    \nonumber
    \{R_3,R_4\} = & \, -2 R_1 R_6.
\end{align} Here $R_2$ generates the center of this algebra. Moreover, the Racah algebra with the relation $\eqref{eq:5.2}$ admits the following Casimir $K_2 = 4 R_3^2 + 4R_1^2 R_4 - 4 R_2^2 R_4.$ Similarly, for the quadratic algebras $\mathcal{Q}_5(2) =\left(\textbf{Alg} \left\langle \textbf{Q}_2^{(5)} \right\rangle,\{\cdot,\cdot\} \right)$ and $\mathcal{Q}_6(2) = \left(\textbf{Alg} \left\langle \textbf{Q}_2^{(6)} \right\rangle,\{\cdot,\cdot\} \right)$, the integrals in $\textbf{Q}_2^{(5)}$ and $\textbf{Q}_2^{(5)}$ under the realisation are \begin{align*}
    R_1 = &\,  (x^2 -y^2) p_x + 2xy p_y, \\
   R_2    =& \, 2 xy p_x +(y^2 - x^2) p_y,  \\ R_3  = & \, -R_5  = y^2(p_x^2 - p_y^2), \\
    R_4   =& \, 0, \text{ }\quad R_6  = (x^2 - y^2)(p_x^2 
 + p_y^2)
 \end{align*} and
 \begin{align*}
     R_1 = & \,(x^2 -y^2) p_x + 2xy p_y,\\
     R_2  =& \, 2 xy p_x +(y^2 - x^2) p_y, \\ R_3 = & \, -R_5 =x^2(p_x^2 + p_y^2), \\
    R_4  =&    \, 0, \text{ }\quad R_6 = (x^2 + y^2)(p_x^2 + p_y^2).
\end{align*}   Thus, we obtain two quadratic algebras described by \begin{align*}
    \{R_2,R_3\} = & \, -2 R_6, \\
    \{R_2,R_6\} =  & \, R_1^2 + R_2^2,  \\
    \{R_3,R_6\} = & \, -2 R_2R_3
\end{align*} with the Casimir $ K_5 =  4 R_6^2 + 4 R_2^2 R_3 - 4 R_1^2 R_3$ and \begin{align*}
    \{R_1,R_3\} = & \,  -2 R_6, \\ 
    \{R_1,R_6\} = & \,  -R_1^2 - R_2^2, \\
    \{R_3,R_6\} = & \, - 2 R_1R_3, 
\end{align*} which admits the Casimir $K_6 = 4 R_6^2 - 4 R_1^2 R_3   +4 R_2^2 R_3$ respectively.

Finally, using $\eqref{eq:real}$, the quadratic algebra $\mathcal{Q}_{\mathfrak{e} (2)}(2)$ defined in $\eqref{eq:A_1}$ has the following realisation     \begin{align*}
    R(A_1) = p_x, \text{ } R(A_2)= p_y , \text{ } - R^2(A_3)=   R(A_4) = -(y p_x - x p_y )^2, \text{ }  R(A_5) = 0.
\end{align*} Let $R(A_k) = R_k$ for all $1 \leq k \leq 3.$ It is clear that $\{R_1,\{R_2,R_3\}\} +\{R_2,\{R_1,R_3\}\}+ \{R_3,\{R_1,R_2\}\}= 0.$ We then have a new algebra generated by $R_1,R_2$ and $R_3$ given by \begin{align*}
 \bar{\mathfrak{e}}(2) = \left\{ R_1,R_2,R_3 :  \{R_1,R_3\} = - R_2, \text{ } \{R_2,R_3\} = R_1\right\}.
\end{align*} It is clear that $\bar{\mathfrak{e}}(2)$ is a central extension of the Euclidean Lie algebra $\mathfrak{e}(2).$ Hence the realisation deforms a quadratic Poisson algebra into a linear Poisson algebra.
 \end{proof}

 \section{Conclusions}
\label{6}

We have systematically studied the problem of commutants, i.e. centralisers, relative to different subalgebras and Casimirs of subalgebras of a Lie algebra. We have demonstrated that such commutants can be used to derive Hamiltonians of algebraic superintegrable systems. This systematic approach is new and provides an alternative framework for the classification of superintegrable systems that do not rely on prior knowledge of differential operator realisations of Hamiltonians and related integrals. 

We have provided procedures for constructing large classes of polynomial algebra structures. As examples, we have presented many types of quadratic algebras whose generators are expressible in terms of polynomials in the enveloping algebra of the 2D conformal algebra.  Interestingly, the commutants of an equivalent class of subalgebras lead to "isomorphic" Poisson polynomial algebras in the sense of a change of basis. From a mathematical point of view, classification of subalgebras of a Lie algebra, especially their maximal Abelian subalgebras, plays an important role in mathematical physics. The results in this paper show that centralisers and their classification with respect to subalgebras are of interest for applications in quantum mechanics and other areas of mathematical physics.

We have obtained Casimir invariants for the quadratic algebras and also presented examples of the reduction from explicit realisations of the polynomials, quadratic algebras, and Casimirs. The connection with the enveloping algebra provides insight into the representation theory of commutants and related special functions.

Note that subalgebra chains also used in other contexts such as dual pairing and Howe duality \footnote{More details on Howe duality can be found in \cite{MR985172,MR986027}, and for duality in the context of representations, see \cite{MR2522486}.} \cite{rowe2012dual,MR2791129}  to study representations related to dynamical symmetry algebras in certain nuclear physics models.  In our approach,  subalgebra chains are applied to find the explicit generators of the centralizer subaglebras, algebraic Hamiltonian, integrals of motion and symmetry algebra. This is achieved through algebraic properties of the symmetric and universal enveloping algebras without using representations.

 Commutants relative to Cartan subalgebras have recently been recognised as useful in generating algebraic Hamiltonians on spheres and in Darboux spaces. So, it is expected that our framework will play an exciting role in the study of superintegrable systems in spaces of non-constant curvatures \cite{MR2337668,MR2226333,MR2023556,MR1878980}.

\section*{CRediT authorship contribution statement}

\textbf{Junze Zhang:} Writing–review and editing, Writing–original draft, Software, Formal analysis. \textbf{Ian Marquette:}  Supervision, Writing–review and editing, Software, Methodology, Investigation, Formal analysis, Conceptualization. \textbf{Yao-Zhong Zhang:} Supervision, Writing–review and editing, Methodology, Investigation, Formal analysis.

\section*{Declaration of competing interest}

The authors declare that they have no known competing financial interests or personal relationships that could have appeared to influence the work reported in this paper.

\section*{Acknowledgment}

IM was supported by the Australian Research Council Future Fellowship FT180100099. YZZ was supported by the Australian Research Council Discovery Project DP190101529. We would like to thank the anonymous referee for the insightful suggestions, which improved the clarity of the manuscript significantly.

\appendix
\section{Appendix}
In this Appendix, we examine various algebraic structures within Lie algebras and utilize those terms to identify all distinct subalgebras of $\mathfrak{so}(3,1).$ Let first us introduce some basic concepts. The following definitions and results can be found in \cite{MR1170508,MR1134846,MR1134849,MR1076112,MR1624621,MR1685593}.

\begin{definition}
Let $\mathfrak{g}$ be a finite-dimensional Lie algebra. A $\textit{maximal}$ $ \textit{Abelian Lie subalgebra}$ $\mathfrak{m}$ (MASA) of $\mathfrak{g}$ is an Abelian subalgebra such that \begin{align*}
    \mathcal{Z}(\mathfrak{m})(\mathfrak{g}) = \{X \in \mathfrak{g}: [X,\mathfrak{m}] = 0\} = \mathfrak{m}.
\end{align*}
\end{definition}

We now give a classification of the conjugacy classes of the maximal Abelian subalgebras of $\mathfrak{c}(2) $ using the results of \cite{MR1170508} and \cite{MR372124}.   
 We briefly recall the results on the classification of MASA of $\mathfrak{so}(p,1)$ given in \cite{gasiorowicz2007quantum}.  As seen later, $\mathfrak{c}(2) \cong \mathfrak{so}(3,1)$.

 Let $K$ be a real non-singular diagonalisable matrix,  and $X \in \mathfrak{m}.$ 
 As can be seen in \cite{MR1170508}, the classification of $SO(p,q)$ conjugacy classes of $\mathfrak{m} \subset \mathfrak{so}(p,q)$ is carried out by specifying the general form of the matrices $X$ and $K$.  Hence, it will be easier to classify the matrix sets $\{\mathfrak{m},K\}.$  We say that two MASAs $\{\mathfrak{m},K\}$ and $\{\mathfrak{m}',K'\}$ of a Lie algebra $\mathfrak{so}(p,q)$ are equivalent if there exists a transformation $G \in \mathrm{GL}_\mathbb{R}(n)$ such that $G\mathfrak{m}G^{-1} =\mathfrak{m}'$ and $G K G^T = K',$  where $\mathfrak{so}(p,q)$ is defined by \begin{align}
    \mathfrak{so}(p,q) = \{Y \in \mathfrak{gl}_\mathbb{R}(n): Y I_{p,q} + I_{p,q}Y^t = 0\}. \label{eq:so}
\end{align} Once a list of representatives of $\mathrm{GL}_\mathbb{R}(n)$ conjugacy classes of the set $\{\mathfrak{m},K\}$ is obtained, a list of $SO(p,q)$ conjugacy classes of MASAs $\mathfrak{m} \subset \mathfrak{so}(p,q)$ can be obtained by specifying $K$. 

\begin{definition}
\label{7.2}
A $\textit{maximal}$ $\textit{Abelian nilpotent subalgebra}$ (MANS)  $\mathfrak{m}$ of  $\mathfrak{g}$ is a MASA that consists entirely of nilpotent elements, which satisfies \begin{align}
    [\mathfrak{m},\mathfrak{m}] = 0, \text{ } \underbrace{[[\mathfrak{g},\mathfrak{m}],\mathfrak{m}],\ldots,\mathfrak{m}]} _\text{  A limited number of commutators} = 0.
\end{align} Moreover, the following abbreviations and terminologies will be used:

1. OD MASA: $\textit{orthogonally decomposable MASA}$. The entire matrix $X \in \mathfrak{m}$ can be represented simultaneously by diagonal block matrices with the same decomposition pattern, that is, $\mathfrak{m} \cong \mathfrak{m}_1 \oplus \mathfrak{m}_2$ or $\mathfrak{m} \cong \mathfrak{m}_3 \oplus \mathfrak{m}_4$. 

2. OID MASA: $\textit{orthogonally indecomposable MASA}$, that is, any MASA which is not OD MASA.
\end{definition}

\begin{proposition}
\cite{MR1624621,MR1685593}
    Let $\mathfrak{so}(p,1)$ be a simple Lie algebra for any $p >0,$ and let $(k,1)$ be the signature of the non-singular diagonalisable matrix $K$, with $k$ positive and one negative eigenvalue of $K$. Then we have the following classification for MASAs of $\mathfrak{so}(p,1)$:
    
\noindent 1. OD MASA. Such a subalgebra has two decomposition patterns, namely: 

$(a)$ $OD(2,1) \oplus (k,1)$ for all $k = 0,1,2,\ldots,p-2,$ where $(k,1)$ is a MANS. 

$(b)$ $ (1,1) \oplus (1,0) \oplus OD(2,0).$

\noindent 2. OID MASA. Such subalgebras are all MANS. 
\end{proposition}

 \begin{remark}
  For any $\textbf{a} \in \mathbb{R}^{p-1},$ a representative list of $O(p,1)$-conjugacy classes of $\mathfrak{so}(p,1)$ is then given by the matrix sets \begin{align*}
    X  = \begin{pmatrix}
        0 & \textbf{a} & 0 \\
        0 & 0 & -\textbf{a}^T\\
        0 & 0 & 0
    \end{pmatrix} \in \mathfrak{m}, \text{ }\quad K = \begin{pmatrix}
        & & 1 \\
        & I_{p-1} & \\
        1 & & 
    \end{pmatrix}.
\end{align*}   \noindent In particular, if $p = 3,$ we can deduce all MASAs of $\mathfrak{so}(3,1)$ up to conjugacy. 
\end{remark}

\vskip 0.4cm
 
 The generators of the Lie algebra $\mathfrak{so}(3,1)$ can be expressed as  \cite{MR372124}
\begin{align*}
    B_1 &=  \begin{pmatrix}
        i & 0 & 0 & 0 \\
        0& -i& 0& 0\\
        0 & 0& -i&  0\\
        0& 0&  0 & i
    \end{pmatrix} , \quad B_2 =  \begin{pmatrix}
        1 & 0 & 0 & 0 \\
        0& -1& 0& 0\\
        0 & 0& 1& 0\\
        0& 0& 0 & -1
    \end{pmatrix}  , \quad B_3 =  \begin{pmatrix}
        0 & 1 & 0 & 0 \\
        0& 0& 0& 0\\
        0 & 0& 0& -1\\
        0& 0& 0 & 0
    \end{pmatrix}  \\
    B_4 & =  \begin{pmatrix}
        0 & i & 0 & 0 \\
        0& 0& 0& 0\\
        0 & 0& 0& i\\
        0& 0& 0 & 0
    \end{pmatrix}  , \quad B_5 =  \begin{pmatrix}
        0 & 0 & 0 & 0 \\
        1& 0& 0& 0\\
       0 & 0& 0& 0\\
        0& 0& -1 & 0
    \end{pmatrix}  ,\quad B_6 = \begin{pmatrix}
        0 & 0 & 0 & 0 \\
        i& 0& 0& 0\\
        0 & 0& 0& 0\\
        0& 0& i & 0
    \end{pmatrix} .
\end{align*}  Notice that the basis chosen here is different from the one conventionally used in physics.

From the above proposition and the classification in \cite{MR372124}, we can provide the classification of all subalgebras of $\mathfrak{so}(3,1)$. With such a classification and the isomorphism between $\mathfrak{so}(3,1)$ and $\mathfrak{c}(2)$,  all subalgebras of $\mathfrak{c}(2)$ can be identified.

\begin{corollary}
\label{clas}
There are only two types of MASAs for $\mathfrak{so}(3,1)$: \begin{align*}
   OD(2,1) \oplus (1,1)  : \quad  X  & = \begin{pmatrix}
        0 & a_1 & 0 & 0 \\
        -a_1 & 0 & 0 & 0\\
        0 & 0 & a_6 & 0 \\
        0 & 0 & 0 &- a_6 \\
    \end{pmatrix}, \text{ }\quad K = \begin{pmatrix}
      1  & 0& 0 & 0 \\
       0 &1 &0 & 0 \\
        0 & 0 & 0 & 1 \\
        0 & 0 & 1 & 0 
    \end{pmatrix} \\
   MANS  : \quad X  & = \begin{pmatrix}
        0 & a_1 & a_2 & 0 \\
        0 & 0 & 0 & -a_1\\
        0 & 0 & 0 & -a_2 \\
        0 & 0 & 0 & 0 \\
    \end{pmatrix}, \text{ } \quad K = \begin{pmatrix}
        & & 1 \\
        & I_2 & \\
        1 & & 
    \end{pmatrix}.
\end{align*} Other subalgebras of $\mathfrak{so}(3,1)$, unique up to a conjugation, are given by \begin{align*}
     F_1 : \{B_1,B_2,B_3,B_4\}, &\qquad F_8 : \{B_1,B_2\}, \\
     F_2 :   \{B_1,B_3 -  B_5,B_4 + B_6\}, &\qquad  F_9 : \{B_3,B_4\}, \\
     F_3: \{B_1,B_3 + B_5,B_4 - B_6\}, &\qquad  F_{10} : \{\cos \theta B_1 + \sin \theta B_2\}, \\
     F_4 : \{\cos \theta B_1 + \sin \theta B_2,B_3,B_4\}, &\qquad  F_{11} : \{ B_1\}, \\
     F_5 : \{B_1,B_3,B_4\}, &\qquad  F_{12} : \{B_2\}, \\
       F_6 : \{B_2,B_3,B_4\}, &\qquad  F_{13} : \{B_3\}, \\
       F_7 : \{B_2,B_3\} ,  
\end{align*}  
with the commutator relations  \begin{center}
      \begin{tabular}{|c|c c c c c c| }
      \hline
           & $B_1$ & $B_2$ & $B_3$ & $B_4$ & $B_5$ & $B_6$   \\
           \hline
    $B_1$       & 0 & 0 & $2B_4$ & $-2B_3$& $-2B_6$ & $2B_5$  \\
     $B_2$       & 0 &  0 & $2B_3$ &$2B_4$ &$-2B_5$ & $-2B_6$  \\
      $B_3$        & $ -2B_4$ &$-2B_3$ & 0 &0 &$ B_2$ & $B_1$ \\
       $B_4$        & $2B_3$ & $-2B_4$& 0& 0&$B_1$ & $-B_2$ \\
      $B_5$         & $2B_6$ & $2B_5$ & $-B_2$ & $-B_1$ &0 & 0 \\
        $B_6$     & $-2B_5$ & $2B_6$ & $-B_1$ & $B_2$ & 0&  0\\ 
        \hline
      \end{tabular}

      \quad

    \textbf{Table 2}\label{0}  
  \end{center} Here $0 < \theta < \pi, \theta \neq \frac{\pi}{2}.$
 
\end{corollary}

\begin{remark}
    \label{clasr}
The one-dimensional subalgebra $F_{10}$ gives an infinitesimal generator for each value of $\theta$. Other one-dimensional subalgebras $F_{11}$, $ F_{12} $ and $F_{13} $ can be realised into one-dimensional Lie algebras $ \mathfrak{o}(2)$,  $ \mathfrak{so}(1,1)$ and  $\mathfrak{e}(1)$ respectively. For the two-dimensional subalgebras of $\mathfrak{so}(3,1)$, we observe that $F_7$ is isomorphic to the Borel subalgebra of $\mathfrak{sl}(2)$, that is, $F_7 \cong \mathfrak{b}_0$. The other two Abelian subalgebras, $F_8$ and $F_9$, correspond to OD MASAs and MANS, respectively.  Finally, three-dimensional subalgebras are given by $F_2 \cong \mathfrak{so}(3)$, $F_3 \cong \mathfrak{su}(1,1)$, and $F_5$ and $F_6$ are isomorphic to $\mathfrak{sl}(2).$
\end{remark}

\bibliographystyle{unsrt}
\bibliography{bibliography.bib}

\end{document}